\def\beq{\begin{equation}}
\def\eeq{\end{equation}}
\def\be{\begin{equation}}
\def\ee{\end{equation}}
\newcommand{\zz}{\mathbb{Z}_2}
\newcommand{\z}{\mathbb{Z}}
\theoremstyle{plain}
\theoremstyle{plain}
\newtheorem{thmtext}{\protect\theoremtextname}
\providecommand{\theoremname}{Theorem}
\providecommand{\theoremtextname}{Theorem}
\newtheorem{prop}{\protect\propositionname}
\theoremstyle{plain}
\providecommand{\propositionname}{Proposition}
\begin{document}

\title{Flux-fusion anomaly test and bosonic topological crystalline insulators}

\author{Michael Hermele}
\affiliation{Department of Physics, University of Colorado, Boulder, Colorado 80309, USA}
\affiliation{Center for Theory of Quantum Matter, University of Colorado, Boulder, Colorado 80309, USA}
\author{Xie Chen}
\affiliation{Department of Physics and Institute for Quantum Information and Matter, California Institute of Technology, Pasadena, CA 91125, USA}
\date{\today}

\begin{abstract}
We introduce a method, dubbed the flux-fusion anomaly test, to detect certain anomalous symmetry fractionalization patterns in two-dimensional symmetry enriched topological (SET) phases.  We focus on bosonic systems with $\zz$ topological order, and symmetry group of the form $G = {\rm U}(1) \rtimes G'$, where $G'$ is an arbitrary group that may include spatial symmetries and/or time reversal.  The anomalous fractionalization patterns we identify cannot occur in strictly $d=2$ systems, but can occur at surfaces of $d=3$ symmetry protected topological (SPT) phases.  This observation leads to examples of $d=3$ bosonic topological crystalline insulators (TCIs) that, to our knowledge, have not previously been identified.  In some cases, these $d=3$ bosonic TCIs can have an anomalous superfluid at the surface, which is characterized by non-trivial projective transformations of the superfluid vortices under symmetry.  The basic idea of our anomaly test is to introduce fluxes of the ${\rm U}(1)$ symmetry, and to show that some fractionalization patterns cannot be extended to a consistent action of $G'$ symmetry on the fluxes.  For some anomalies, this can be described in terms of dimensional reduction to $d=1$ SPT phases.  We apply our method to several different symmetry groups with non-trivial anomalies, including $G = {\rm U}(1) \times \zz^T$ and $G = {\rm U}(1) \times \zz^P$, where $\zz^T$ and $\zz^P$ are time-reversal and $d=2$ reflection symmetry, respectively.
\end{abstract}


\maketitle

\section{Introduction}
\label{sec:intro}

Following the theoretical prediction \cite{kane05a,kane05b,bernevig06,moore07,fu07,roy09} and experimental discovery \cite{konig07,hsieh08} of time-reversal invariant topological band insulators, 
it has become clear that symmetry plays a rich and varied role in topological phases of matter.
New families of symmetric topological phases have been identified theoretically, and significant strides have been made in the classification and characterization of such phases.  Much of the recent progress, with some important exceptions, has focused on systems with internal (or, on-site) symmetry, such as time reversal, ${\rm U}(1)$ charge symmetry and ${\rm SO}(3)$ spin symmetry.  For example, free-fermion topological insulators and superconductors with internal symmetry have been fully classified \cite{kitaev09,ryu10}.  Subsequent work identified the symmetry protected topological (SPT) phases, some of which are strongly interacting generalizations of topological insulators that do not admit a free-electron description \cite{pollmann10,fidkowski11,turner11,chen11a,schuch11,chen13}.

Less attention has been paid to the role of crystalline space group symmetry in topological phases, especially in the setting of strongly interacting systems.  Of course, such symmetry is common and varied in real solids, in contrast to a relatively small number of realistic internal symmetries.  Therefore, with an eye toward eventual experimental realizations of new topological phases, it is important to develop theories of such phases with crystalline symmetry \cite{wen02,lfu11,essin13,slager13,ymlu14,chsieh14,you14,yqi15a,zaletel15,cho15,isobe15,yqi15b,ando15topological}.   
To accomplish this task, new theoretical approaches are needed, as some of the existing tools to classify and characterize topological phases are limited to internal symmetry.

In this paper, we consider two-dimensional ($d=2$) topologically ordered systems, where crystalline and other symmetries play a non-trivial role via their action on anyon quasiparticle excitations \cite{wen02,kitaev06,essin13,mesaros13,barkeshli14,lukaszunpub}.  Such systems are said to be in symmetry-enriched topological (SET) phases. We introduce a method, the \emph{flux-fusion anomaly test}, which allows us to show that some putative SET phases cannot exist in strictly two dimensions.  However, such states can exist as surfaces of $d=3$ SPT phases.  Our method allows us to identify new examples of $d=3$ SPT phases dubbed bosonic topological crystalline insulators (TCIs), which are outside the scope of existing theoretical approaches, via their surface SET phases.  Bosonic TCIs in $d=3$, named after electronic TCIs \cite{ando15topological}, are SPT phases where the protecting symmetry includes both ${\rm U}(1)$ and the space group symmetry of a clean $d=2$ surface.  These states are interesting not only in the context of spin or boson systems, but as a possible stepping stone toward understanding electronic TCIs with strong interactions, and we hope our results can spur more progress in this direction.

The jumping off point for our approach is a consideration of \emph{symmetry fractionalization patterns} in $d=2$.  Provided we assume symmetry does not permute anyon species, the action of symmetry fractionalizes into an action on individual anyons, hence the term symmetry fractionalization.  The classic example is the fractional charge of Laughlin quasiparticles in fractional quantum Hall liquids \cite{laughlin83}.  We refer to a complete description of symmetry fractionalization in a topologically ordered system as a symmetry fractionalization pattern.  Distinct patterns of symmetry fractionalization -- including for crystalline symmetry -- have been classified \cite{wen02,essin13,mesaros13}, and the symmetry fractionalization pattern is a universal property of a SET phase \cite{essin13}.

\begin{table*}
\label{tab:summary}
\begin{tabular}{c|c|c|c|c}
\parbox[t]{3.2cm}{Symmetry $G = {\rm U}(1) \rtimes G'$} & \parbox[t]{3.5cm}{Vison fractionalization classes [$H^2(G', \zz)$]} &
\parbox[t]{4.0cm}{Anomaly-negative vison fractionalization classes (${\cal N}$)}  &
 \parbox[t]{4.5cm}{$d=3$ SPT phases distinguished by anomaly test (${\cal S}$)} & \parbox[t]{1.5cm}{Anomaly \\ type}
\\
\hline
${\rm U}(1) \times \zz^T$ & $\zz$ & $\z_1$ & $\zz$ & 1 \\
\hline
\parbox[t]{3.2cm}{ ${\rm U}(1) \times \zz^P$ \\ (reflection) } & $\zz$ & $\z_1$ & $\zz$ & 2 \\
\hline
\parbox[t]{2.8cm}{${\rm U}(1) \times pm$ \\ (translation \& parallel reflection)} & $(\zz)^4$ & $(\zz)^2$ & $(\zz)^2$  & 2 \\
\hline
\parbox[t]{2.8cm}{$({\rm U}(1) \rtimes \zz^T) \times p1$ \\ (translation only)} & $(\zz)^4$ & $(\zz)^2$ & $(\zz)^2$ & 3 \\
\hline
\parbox[t]{2.8cm}{$({\rm U}(1) \rtimes \zz^T) \times pm$} & $(\zz)^8$ & $(\zz)^3$ & $(\zz)^5$ & 2,3 \\
\hline
\parbox[t]{2.8cm}{${\rm U}(1) \times p4mm$ \\ (square lattice)} & $(\zz)^6$ & $(\zz)^3$ & $(\zz)^3$ & 2 \\
\hline
\parbox[t]{2.8cm}{$({\rm U}(1) \rtimes \zz^T) \times p4mm$} & $(\zz)^{10}$ & $(\zz)^4$ & $(\zz)^6$ & 2,3
\end{tabular}
\caption{Summary of results. Each row is a distinct symmetry group, given in the first column.  The last column indicates the type or types of anomalies that appear, as described in the text.  The meaning of the other columns is discussed in the text.  $\z_1$ denotes the trivial group. In all these cases we consider $Z_2$ gauge theory whose gauge charge $e$ carries half ${\rm U}(1)$ charge while the gauge flux $m$ carries zero charge.}
\end{table*}

A symmetry fractionalization pattern may be anomalous, which means that it cannot occur in a strictly $d=2$ system, but is instead realized at the surface of a $d=3$ symmetry-protected topological (SPT) phase \cite{vishwanath13,metlitski13,cwang13,chen14}.  In this case, we say we have a surface SET phase.  SPT phases \cite{pollmann10,fidkowski11,turner11,chen11a,schuch11,chen13} have an energy gap, lack spontaneous symmetry breaking, and, upon weakly breaking whatever symmetries are present, are in the trivial phase; that is, the ground state wave function can be adiabatically continued to a product state when symmetry is explicitly broken. It follows that SPT phases lack bulk excitations with non-trivial braiding statistics.  Instead, edge or surface properties are generally non-trivial; for $d=3$ SPT phases, one possibility is to have a surface SET phase with anomalous symmetry fractionalization.

While a number of results have been obtained on anomalous symmetry fractionalization of internal symmetry \cite{vishwanath13,metlitski13,cwang13,chen14}, generalization to incorporate crystalline symmetry is not straightforward.  Our approach, the flux-fusion anomaly test, is a method to test for anomalous symmetry fractionalization for symmetries of the form $G = {\rm U}(1) \rtimes G'$, where $G'$ is an arbitrary group that may include crystalline symmetry.    We focus on bosonic systems, such as spin models or systems of bosons.  We note that some results on anomalous reflection symmetry fractionalization have recently appeared in Ref.~\onlinecite{yqi15b}.  We also note that the ``monopole tunneling'' approach developed in~\cite{metlitski13} and used in~\cite{cwang13} is closely related but not equivalent to the flux-fusion anomaly test, as discussed further in Sec.~\ref{sec:discussion}.

The basic idea of the flux-fusion anomaly test is to start with a symmetry fractionalization pattern for a $d=2$ SET phase, to introduce fluxes of the ${\rm U}(1)$ symmetry, and then to determine whether the fractionalization pattern can be extended to an action of $G'$ symmetry on the ${\rm U}(1)$ fluxes.  Sometimes this is impossible, signaling anomalous symmetry fractionalization.  These considerations only depend on the fusion rules of fluxes and anyon excitations, hence the name for the anomaly test.  We emphasize that we do not need to consider flux threading or flux insertion as a dynamical process.

We implement this idea by gauging a subgroup $\z_n \subset {\rm U}(1)$, and studying the resulting theory.  Gauging symmetry has been employed to study SPT phases, where different phases can be distinguished using the statistics of excitations in the gauged theory \cite{gu12}.  Here, the gauged theory is itself a SET phase with $G'$ symmetry.  We are able to show that some symmetry fractionalization patterns are anomalous by studying the action of $G'$ symmetry on the anyons of the gauged SET phase.

We primarily consider symmetries of the form $G = {\rm U}(1) \times G_{{\rm space}}$ and $G = ({\rm U}(1) \rtimes \zz^T) \times G_{{\rm space}}$, where $\zz^T$ is time reversal and $G_{{\rm space}}$ is a $d=2$ space group. These symmetries arise in a variety of physical settings.  For example, both symmetries are natural in systems of bosons, including situations where electrons form sufficiently tightly bound Cooper pairs.  The former symmetry can arise in a Heisenberg or XY spin system if one ignores time reversal symmetry.  The latter symmetry occurs in a Heisenberg model in a Zeeman field; the field naively breaks time reversal, but preserves a combination of time reversal and a $\pi$ spin rotation perpendicular to the field axis.  We focus on situations where $G$ constrains the symmetry fluxes to be bosons, which simplifies the analysis; we show this occurs whenever time reversal or reflection symmetry is present.

We do not discuss symmetries of the form $G = {\rm U}(1) \times \zz^T \times G_{{\rm space}}$.  This important class of symmetries occurs in time-reversal symmetric XY or Heisenberg spin models.  Application of our anomaly test for these symmetries is subtle (see Sec.~\ref{sec:discussion}), and requires a more intricate analysis that will be presented in a separate paper \cite{tcharge}.

Partially for simplicity, and partially for its physical relevance, we concentrate on two-dimensional $\zz$ topological order, which means that the fusion and braiding of the anyon quasiparticles is the same as the deconfined phase of $\zz$ gauge theory with gapped matter, or, equivalently, Kitaev's toric code model \cite{kitaev03}.  SET phases with $\zz$ topological order are synonymous with gapped $\zz$ quantum spin liquids (QSLs) \cite{chakraborty89,read91,wen91,balents99,senthil00,moessner01a, moessner01b,balents02,kitaev03}, which are of current interest in part due to evidence that such a phase occurs in the $S =1/2$ Heisenberg antiferromagnet on the kagome lattice \cite{yan11,depenbrock12,jiang12b}.  While the symmetries we consider here are more relevant for other systems, $\zz$ QSLs can also occur in those systems. Showing that a given symmetry fractionalization pattern is anomalous constrains the possibilities for $d=2$ $\zz$ QSLs.

As mentioned above, each anomalous symmetry fractionalization pattern we find provides a surface theory for a $d=3$ bosonic TCI. Unlike the case of SPT phases protected by internal symmetry, there is not an existing theory of $d=3$ bosonic TCIs, so it is particularly useful to obtain examples of such phases.  We are able to obtain many such examples, and to discuss some of their physical properties, via their anomalous surface theories.  It is not our goal to provide complete classifications of bosonic TCIs.

For some bosonic TCIs, we can go beyond surface SET phases, and construct a dual vortex field theory for an anomalous surface superfluid.  These superfluids, like some of the surface theories for bosonic topological insulators studied in Ref.~\onlinecite{vishwanath13}, are distinguished by non-trivial symmetry fractionalization of their vortex excitations \cite{senthil04a,senthil04b,balents05}.  The dual vortex field theories thus obtained are convenient to work with, and can be used to explore surface phase diagrams and phase transitions, which may be an interesting direction for future work.

While it is not the focus of this paper, our approach can be used to study internal symmetries when $G = {\rm U}(1) \rtimes G'$, and is complementary to existing approaches in that case.  In particular, for $G = {\rm U}(1) \times \zz^T$, where $\zz^T$ is time reversal, our approach shows that certain fractionalization patterns are anomalous, a result also obtained in previous works \cite{vishwanath13,cwang13}.  The flux-fusion approach confirms that result, without making assumptions about the form of the edge theory of $d=2$ SET phases \cite{cwang13}, or relying on a complete analysis of all possible phases of a surface field theory \cite{vishwanath13}.

Table~\ref{tab:summary} summarizes the main results.  Underlying the detailed results of the table are three distinct types of anomalies:
\begin{enumerate}
\item  For $G = {\rm U}(1) \times \zz^T$, $({\cal T}^m)^2 = -1$ is anomalous, where ${\cal T}^m$ gives the action of time reversal on visons.

\item Whenever $G$ contains a ${\rm U}(1) \times \zz^P$ subgroup, where $\zz^P$ is reflection symmetry, $(P^m)^2 = -1$ is anomalous, where $P^m$ gives the action of the reflection on visons.

\item  Whenever $G$ contains a ${\rm U}(1) \rtimes \zz^T$ subgroup, and also contains some  discrete unitary operation $g$ that commutes with the ${\rm U}(1) \rtimes \zz^T$ subgroup, then
\begin{equation}
{\cal T}^m g^m = - g^m {\cal T}^m 
\end{equation}
is anomalous, where ${\cal T}^m$ and $g^m$ give the action of ${\cal T}$ and $g$, respectively, on visons.  For example, $g$ can be a lattice translation or  reflection.
\end{enumerate}
The first two types of anomalies can be understood in terms of dimensional reduction to $d=1$ SPT phases, but it appears the third type of anomaly cannot be understood in this manner (Sec.~\ref{sec:1dspt}).

We now provide some additional details in order to present Table~\ref{tab:summary}, followed by an outline of the remainder of the paper.  As noted, we focus on $\zz$ topological order, which supports four types of quasiparticle excitations, labeled by $1, e, m, \epsilon$.  Of these, $1$ particles are topologically trivial and can be created by local operators, while the remaining particle types are anyons that cannot be locally created.  We describe the fusion and braiding properties in Sec.~\ref{subsec:gauging}.  Here, we simply note that $\zz$ topological order is realized in the deconfined phase of $\zz$ gauge theory with gapped, bosonic matter, in which case $e$ is the bosonic $\zz$ gauge charge, $m$ is the bosonic $\zz$ gauge flux, and $\epsilon$ is the fermionic charge-flux bound state.  We will also refer to $m$ particles as visons.

We assume through the paper that symmetry does not permute the anyon species.  In this case, the action of symmetry on the anyons is determined by giving the  \emph{fractionalization class} of $e$ and $m$ \footnote{Because $\epsilon$ is a $e$-$m$ bound state, its  fractionalization class is determined by that of $e$ and $m$.}.   
For each of $e, m$, the fractionalization class is an element of $H^2(G, \zz)$.  Here, this is specified uniquely by two pieces of information: 1) whether the particle carries integer or half-odd integer ${\rm U}(1)$ charge, and 2) an element $[\omega_e], [\omega_m] \in H^2(G', \zz)$ that describes the action of $G'$.
Each of $e, m$ transforms as a projective representation of $G'$, and $[\omega_e], [\omega_m]$ encode information about these projective representations that is a universal property of a SET phase (or surface SET phase).  We always choose $e$ to carry half-odd-integer charge, and $m$ to carry integer charge.  One motivation for this choice is that it describes most $\zz$ QSLs that have been proposed to occur in fairly realistic models of spins or bosons.  It can be shown, via a coupled layer construction \cite{cwang13}, that all such symmetry fractionalization patterns (in fact, \emph{any} consistent symmetry fractionalization pattern) can occur as a surface of some $d=3$ SPT phase, which may be the trivial SPT phase (see Appendix~\ref{app:coupled-layer}).  Deciding whether the bulk SPT phase is non-trivial is equivalent to determining whether the corresponding symmetry fractionalization pattern is anomalous.

Under these assumptions, the flux-fusion anomaly test shows that some choices of $[\omega_m]$ imply the symmetry fractionalization pattern is anomalous.
 This result is independent of $[\omega_e]$, which does not play a role in the anomaly test.  Column 2 of Table~\ref{tab:summary} is simply $H^2(G', \zz)$, the set of all possible vison fractionalization classes for $G'$ symmetry.  The anomaly test gives a subset of vison fractionalization classes that ``test negative'' for an anomaly and thus \emph{may} occur strictly in $d=2$.  We refer to such classes as \emph{anomaly-negative}; they form a subgroup ${\cal N}$ of $H^2(G', \zz)$ given in column 3 of Table~\ref{tab:summary}.  It is important to note that anomaly-negative fractionalization classes may still be anomalous; the flux-fusion anomaly test cannot establish that a symmetry fractionalization pattern is non-anomalous.
 
 Finally, for a fixed $[\omega_e]$, the anomaly test gives a set of distinct $d=3$ SPT phases (one of which is always the trivial SPT phase), which are labeled by elements of the quotient ${\cal S} = H^2(G', \zz) / {\cal N}$, given in column 4 of the table. It is important to note that the anomaly test does not distinguish all SPT phases with a given symmetry, so column 4 does not give the full classification of such phases.

We now give an outline of the remainder of the paper.  Section~\ref{sec:simple} gives a simple, somewhat heuristic illustration of the anomaly test in the case of $G = {\rm U}(1) \times  \zz^T$ (time reversal) symmetry. The anomaly test is then described in more detail and greater generality in Sec.~\ref{sec:ft-general}.  First, Sec.~\ref{subsec:gauging} describes the fusion and braiding properties both before and after gauging $\z_n \subset {\rm U}(1)$.  In Sec.~\ref{subsec:anomaly-test}, we describe the action of $G'$ symmetry on the $\z_n$ flux $\Omega$, and use this to present the anomaly test.  Especially for spatial symmetry, it is important for our analysis that $\Omega$ is a boson, which is shown to be the case in Appendix~\ref{app:Omega-is-boson} whenever time reversal or reflection symmetry is present.

In Sec.~\ref{sec:examples}, we apply the anomaly test to the examples of $G = {\rm U}(1) \times \zz^T$, $G = {\rm U}(1) \times \zz^P$,  $G = {\rm U}(1) \times pm$, and $G = ({\rm U}(1) \rtimes \zz^T) \times p1$, where $pm$ is a $d = 2$ space group containing translation and reflection operations, and $p1$ is the $d = 2$ space group consisting only of translations.  We find anomalous symmetry fractionalization patterns in each case.  The first three of these symmetries have anomalies of type 1 and 2 as described above, which can be understood from the viewpoint of dimensional reduction to $d=1$ SPT phases, which is a different way to apply the anomaly test (Sec.~\ref{sec:1dspt}).  In contrast, the last symmetry has type 3 anomalies that apparently cannot be understood in terms of dimensional reduction, as discussed in Sec.~\ref{sec:1dspt}.

Section~\ref{sec:btci} describes how the results from the flux-fusion anomaly test can be used to identify and distinguish some non-trivial $d=3$ SPT phases, including $d=3$ bosonic TCIs.  As discussed in Sec.~\ref{sec:anomalous-superfluids}, some of the bosonic TCIs that we find can have an anomalous surface superfluid that preserves the $G'$ symmetry.  These anomalous superfluids are characterized by vortex excitations that transform projectively under the $G'$ symmetry in a way that is not allowed strictly in $d=2$.  We describe how to construct dual vortex field theories that provide a convenient means to study the physical properties of these surface superfluids and neighboring surface phases.


The paper concludes in Sec.~\ref{sec:discussion} with a discussion of open issues raised by the present results.  Some of the more technical aspects of our results are presented in several appendices, and, in Appendix~\ref{sec:more-examples}, the anomaly test is applied to a few more examples of symmetry groups.

\section{Simple illustration of the anomaly test}
\label{sec:simple}

We begin by giving a somewhat heuristic illustration of the flux-fusion anomaly test, for the case of $G = {\rm U}(1) \times \zz^T$ symmetry.  This symmetry is chosen for simplicity, and for the fact that it has been previously studied using a different approach \cite{cwang13}.  Here, we focus on conveying the intuition and some of the key ideas of our approach.  A more rigorous and more general discussion follows in Sec.~\ref{sec:ft-general}.

Here and throughout the paper, we assume $d=2$ $\zz$ topological order, and that symmetry does not permute the anyon species.  To specify the symmetry fractionalization pattern, we need to give the fractionalization class for both $e$ and $m$ particles.  For the present symmetry, we need to specify whether each particle carries integer  or half-odd-integer ${\rm U}(1)$ charge, and whether it transforms as a Kramers singlet [$({\cal T}^{a})^2 = 1$] or a Kramers doublet [$({\cal T}^{a})^2 = -1$], where $a = e,m$, and ${\cal T}^a$ gives the action of time reversal on anyon $a$.  We denote particles with half-odd integer charge by $C$, and Kramers doublets by $T$, while $0$ is used to indicate particles carrying trivial quantum numbers (integer charge and Kramers singlet).  A fractionalization pattern is thus specified, for example, by the notation $eCmT$ \cite{cwang13}; in this case, $e$ particles carry half-odd-integer charge and are Kramers singlets, while $m$ particles carry integer charge and are Kramers doublets.

We restrict our attention to the case where $e$ carries half-odd-integer charge and $m$ carries integer charge, which includes four fractionalization patterns: $eCm0$, $eCTm0$, $eCmT$, and $eCTmT$.  It is known that the former two patterns are non-anomalous (can be realized in $d=2$); this can be established, for example, via explicit construction of parton gauge theories.  The latter two patterns were argued in Ref.~\onlinecite{cwang13} to be anomalous, via an approach that we contrast with ours at the end of this section.

Our anomaly test is based on introducing fluxes $\Omega_{\phi}$ of the ${\rm U}(1)$ symmetry, where $\phi \in \left[0, 2\pi\right)$.  For the purposes of the present discussion, these fluxes are static point defects in space, obtained by modifying the Hamiltonian.  The symmetry flux $\Omega_{\phi}$ is defined by the following property:  if $Q$ is a local (\emph{i.e.}, non-anyon) excitation carrying unit ${\rm U}(1)$ charge, bringing $Q$ counterclockwise around $\Omega_{\phi}$ results in the statistical phase $\phi$.  We make the restriction $0 \leq \phi < 2\pi$ because $\Omega_{\phi}$ and $\Omega_{\phi + 2\pi}$ have the same mutual statistics with $Q$ and thus carry the same symmetry flux.

Given a fractionalization pattern, the flux-fusion anomaly test proceeds via two steps, which we summarize before proceeding.  First, we study the fusion of symmetry fluxes, and show that, roughly speaking, $\phi = 2\pi$ flux is not trivial, but instead is a $m$ particle excitation.  Second, we consider the action of $\zz^T$ symmetry on  symmetry fluxes $\phi$, and ask whether it is possible to choose this symmetry action to be consistent with the assumed symmetry fractionalization of $m$, given the fusion properties of the fluxes.  We will see there is an inconsistency if $m$ is a Kramers doublet, so that $eCmT$ and $eCTmT$ are anomalous fractionalization patterns.

First, to study the fusion properties of symmetry fluxes, we consider the mutual statistics of a flux $\Omega_{\phi}$ with anyons $e, \epsilon, m$.  We choose particular anyons $e$ and $\epsilon$ carrying ${\rm U}(1)$ charge $1/2$, and $m$ which is neutral under ${\rm U}(1)$.  We could consider anyons with other allowed values of the charge (for example, there will also be $e$ particles with charge $-1/2$), but this does not affect the results.  Let $\Theta_{a, \Omega_{\phi}}$ be the statistical phase angle when anyon $a$ is brought counterclockwise around the flux $\phi$.  Then, given the assumed charge values for the anyons, we have
\begin{eqnarray}
\Theta_{e,\Omega_{\phi}} &=& \Theta_{\epsilon, \Omega_{\phi}} =\frac{\phi}{2} \label{eqn:flux-mutual-1} \\
\Theta_{m, \Omega_{\phi}} &=& 0 \text{.} \label{eqn:flux-mutual-2}
\end{eqnarray}

To obtain some intuition for the fusion properties of the symmetry fluxes, suppose for the moment that we relax the restriction $\phi < 2\pi$.  Then, if $\phi = 2\pi$, we have formally $\Theta_{e, \Omega_{2\pi}} = \Theta_{\epsilon,\Omega_{2\pi}} = \pi$ and $\Theta_{m, \Omega_{2\pi}} = 0$.  Since $\Omega_{2\pi}$ carries trivial symmetry flux (it has trivial mutual statistics with $Q$), it must be identified with one of the anyon quasiparticles.  Putting $\phi \to 2\pi$ in Eqs.~(\ref{eqn:flux-mutual-1}) and~(\ref{eqn:flux-mutual-2}), we have the identification $\Omega_{2\pi}  = m$.  Along the same lines, we can identify $\Omega_{4\pi} = 1$.

We prefer to keep the restriction $0 \leq \phi < 2\pi$, in which case essentially the same result can be obtained as follows:  Suppose that we have two $\pi$ fluxes $\Omega_{\pi}$.  The total flux is $2\pi$, which is equivalent to no symmetry flux at all.  Therefore, we have the fusion rule
\begin{equation}
\Omega_{\pi} \Omega_{\pi} = a \text{,}
\end{equation}
where $a$  is a quasiparticle excitation that carries no symmetry flux, but may be a non-trivial anyon.
The particle $a$ can be identified by its mutual statistics with $e$, $m$ and $\epsilon$, which follows from the additivity properties of statistics.  For example,
\begin{equation}
\Theta_{e,a} = \Theta_{e, \Omega_{\pi} \Omega_{\pi} } = 2 \Theta_{e, \Omega_{\pi}} = \pi \text{.}
\end{equation}
Similarly, $\Theta_{\epsilon, a} = \pi$ and $\Theta_{m, a} = 0$, which implies $a = m$ and
\begin{equation}
\Omega_{\pi} \Omega_{\pi} = m \text{.} \label{eqn:pi-plus-pi-is-m}
\end{equation}
It should be noted that this result has a discrete character and does not make use of the fact that ${\rm U}(1)$ is a continuous group.  Indeed, the same result holds if we replace ${\rm U}(1)$ by the discrete group $\zz$.

Next, we consider the action of time reversal symmetry ${\cal T}$ on the symmetry fluxes $\Omega_{\pi}$.  First, we observe that ${\cal T}$ does not change the value of the flux $\phi$, because ${\cal T}$ commutes with ${\rm U}(1)$ rotations. Therefore, $\Omega_{\pi}$ transforms either as a Kramers singlet or a Kramers doublet under time reversal.  If we assume that $m$ is a Kramers doublet, we now have a contradiction with Eq.~(\ref{eqn:pi-plus-pi-is-m}):  whether $\Omega_{\pi}$ is a Kramers singlet or doublet, the composite $\Omega_{\pi} \Omega_{\pi}$ must be a Kramers singlet.

We have thus found that that 
 $eCmT$ and $eCTmT$ are anomalous fractionalization patterns.  This is true because, in strict $d=2$, it should always be possible to introduce ${\rm U}(1)$ symmetry fluxes and to view these as point objects, so the contradiction we obtained means that a fractionalization pattern cannot occur strictly in $d=2$.  On the other hand, on the surface of a $d=3$ SPT phase, symmetry fluxes are line objects that penetrate into the bulk, and it may not be sensible to view them as point objects where they pierce the surface.  Therefore, $eCmT$ and $eCTmT$ may occur on the surface of a $d=3$ SPT phase.  Indeed, this is the case, and was  demonstrated in Ref.~\onlinecite{cwang13} via an elegant coupled layer construction.

The above analysis is complementary to the approach of Ref.~\onlinecite{cwang13}.  There, among other results, Chern-Simons theory was used to construct chiral boson edge theories for SET phases with $\zz$ topological order and $G = {\rm U}(1) \times \zz$ symmetry.  For some symmetry fractionalization patterns, including $eCmT$ and $eCTmT$, it was shown that no corresponding edge theory can be constructed, and it was concluded that these symmetry fractionalization patterns are anomalous.  Strictly speaking, to draw this conclusion, one has to assume that the class of edge theories considered is in some sense sufficiently general, and, while this assumption seems reasonable, we do not know of an argument that this is the case.  The flux-fusion approach requires no such assumption, and in the present case, its results agree with those of Ref.~\onlinecite{cwang13}, for those fractionalization patterns where both approaches can be applied.

\section{Flux-fusion anomaly test: general discussion}
\label{sec:ft-general}

\subsection{Gauging $\z_n \subset {\rm U}(1)$ symmetry}
\label{subsec:gauging}

The simple discussion of the anomaly test in Sec.~\ref{sec:simple} is based on inserting ${\rm U}(1)$ symmetry fluxes, which are static point defects in space.  Because our objective is to consider crystalline symmetry, this approach is not ideal, because inserting a non-dynamical flux at some point in space will usually partially or fully break the crystalline symmetry.  In addition, there is not an existing theory describing the action of $G'$ symmetry on fluxes of the continuous ${\rm U}(1)$ symmetry.

Therefore, we prefer to proceed by gauging a $\z_n$ subgroup of the ${\rm U}(1)$ symmetry, for all integers $n \geq 2$.  That is, we imagine minimally coupling our system to a dynamical $\z_n$ gauge field, where the $\z_n$ gauge group is identified with $\z_n \subset {\rm U}(1)$ global symmetry.  In Appendix~\ref{app:gauging-procedure}, we give an explicit procedure showing that, for the symmetry groups considered in this paper, it is possible to gauge this $\z_n$ subgroup while preserving $G' \subset G$ symmetry.  The resulting theory is a \emph{gauged} SET phase, where the symmetry flux behaves as a gapped, dynamical quasiparticle excitation.  This allows us to study symmetry fluxes without breaking crystalline symmetry.  In addition, we can build on existing results to describe the action of $G'$  on the excitations of the gauged SET phase.

We consider a $d=2$ SET phase with $\zz$ topological order and $G = {\rm U}(1) \rtimes G'$ symmetry.  We now describe the fusion and braiding properties of the anyons of the SET phase.  Fusion of anyons is described by the Abelian group ${\cal A} = \zz \times \zz$, generated by $e$ and $m$, which obey the relations
\begin{eqnarray}
e^2 &=& m^2 = 1 \\
\epsilon &\equiv& e m = m e \text{.}
\end{eqnarray}
We assume that $e$ carries half-odd-integer charge under ${\rm U}(1)$.  Under $\z_n \subset {\rm U}(1)$ symmetry, this means that
\begin{equation}
(U^e_n)^n = -1 \label{eqn:half-zn} \text{,}
\end{equation}
where $U^e_n$ is a unitary operator representing the action of a generator of $\z_n$ on a single $e$ particle.  Half-odd integer charge is only non-trivial for even $n$; if $n$ is odd, then Eq.~(\ref{eqn:half-zn}) can be trivialized by the allowed redefinition $U^e_n \to - U^e_n$.  Therefore, we restrict attention to even values of $n$.  We also assume that $m$ carries integer ${\rm U}(1)$ charge, so that under $\z_n$ we have $(U^m_n)^n = 1$.  The action of $G'$ on $e$ and $m$ is characterized below in Sec.~\ref{subsec:anomaly-test}.

To specify the statistics, we introduce some notation that will be particularly helpful in describing the gauged SET phase.
For anyons $a,b \in {\cal A}$, let $\theta_a$ give the self-statistics angle of $a$, and let $\Theta_{a,b}$ be the mutual statistics angle, where $a$ is taken counterclockwise around $b$.  These quantities satisfy the following general properties for any $a,b,c \in {\cal A}$:
\begin{eqnarray}
\theta_{1} &=& \Theta_{1,a} = 0 \\
\Theta_{a,a} &=& 2 \theta_a \\
\Theta_{a, b} &=& \Theta_{b, a} \\
\Theta_{ab,c} &=& \Theta_{a,c} + \Theta_{b,c} \\
\theta_{ab} &=& \theta_a + \theta_b + \Theta_{a,b} \text{.}
\end{eqnarray}
These and other equations for $\theta_a$ and $\Theta_{a,b}$ are always understood to be true modulo $2\pi$.
The statistics of $\zz$ topological order is then fully specified by
\begin{eqnarray}
\theta_e &=& \theta_m = 0 \\
\Theta_{e,m} &=& \pi \text{.}
\end{eqnarray}
These equations say that $e$ and $m$ are bosons with $\Theta_{e,m} = \pi$ mutual statistics.

We now consider the gauged SET phase, obtained by gauging $\z_n \subset {\rm U}(1)$.  The anyons of the gauged SET phase are Abelian; this follows from Eq.~399 and the surrounding discussion of Ref.~\onlinecite{barkeshli14}. The fusion rules are described by the Abelian group ${\cal A}_G$, which is generated by $e, m, Q$, and $\Omega$.  Here, $Q$ is the unit $\z_n$ symmetry charge, which is a local excitation of the un-gauged theory, but is now an anyon in the gauged SET phase.  $\Omega$ is the unit $\z_n$ symmetry flux.  Upon gauging $\z_n$, the $e$ and $m$ sectors in the un-gauged theory each break into $n$ different sectors with distinct $\z_n$ symmetry charge.  In the gauged SET phase, $e$ and $m$ each correspond to a particular choice among such subsectors.  The choice of subsector is arbitrary, and can be changed by redefining $e$ or $m$ by binding symmetry charges; for example $e \to Q e$ is an allowed redefinition.  There is also arbitrariness in the choice of symmetry flux, which can be redefined by  $\Omega \to Q \Omega$, or by $\Omega \to a \Omega$, where 
 $a$ is an anyon of the un-gauged theory.
 
The fusion rules are
\begin{eqnarray}
Q^n &=&  1 \label{eqn:qn-fusion} \\
e^2 &=& Q \label{eqn:e2-fusion} \\
m^2 &=& 1 \label{eqn:m2-fusion} \\
\Omega^n &=& a Q^k  \label{eqn:Omegan-fusion} \text{.}
\end{eqnarray}
Equation~(\ref{eqn:qn-fusion}) is obvious.  Equations~(\ref{eqn:e2-fusion}) and~(\ref{eqn:m2-fusion}) correspond to making a particular choice of $e$ and $m$ among the possible subsectors.  The most important fusion rule in our analysis is Eq.~(\ref{eqn:Omegan-fusion}).  There, $a$ is an anyon of the un-gauged theory to be determined, and $k$ is some as yet unknown integer satisfying $0 \leq k < n$.  This equation expresses the fact that $\Omega^n$ carries no $\z_n$ symmetry flux, but otherwise, at this stage in the analysis, could be an arbitrary particle in the gauged SET phase.

In order to fix the fusion rule Eq.~(\ref{eqn:Omegan-fusion}), we consider the statistics of the gauged SET phase.  We have
\begin{eqnarray}
\theta_e &=& \theta_m = 0 \label{eqn:em-bosons} \\
\Theta_{e,m} &=& \pi \label{eqn:em-mutual} \\
\theta_Q &=& \Theta_{e,Q} = \Theta_{m,Q} = 0 \label{eqn:Q-trivial-braiding} \\
\Theta_{Q,\Omega} &=& \frac{2\pi}{n} \label{eqn:QOmega-mutual} \\
\Theta_{e,\Omega} &=& \frac{\pi}{n} + p_e \pi \label{eqn:eOmega-mutual} \\
\Theta_{m, \Omega} &=& p_m \pi  \label{eqn:mOmega-mutual} \text{.}
\end{eqnarray}
Here, Eqs.~(\ref{eqn:em-bosons}) and~(\ref{eqn:em-mutual}) are the braiding statistics for the un-gauged SET phase.  Equation~(\ref{eqn:Q-trivial-braiding}) holds because the symmetry charge $Q$ must have trivial braiding with itself and with anyons of the un-gauged theory.  Equation~(\ref{eqn:QOmega-mutual}) is the defining property of the symmetry flux $\Omega$.  Finally, Eqs.~(\ref{eqn:eOmega-mutual}) and~(\ref{eqn:mOmega-mutual}) follow from Eqs.~(\ref{eqn:e2-fusion}) and~(\ref{eqn:m2-fusion}), respectively, with unknown parameters $p_e, p_m = 0,1$.

We redefine $e$ and $m$ to set $p_e = p_m = 0$.  For example, if $p_e = 1$, we redefine $e \to Q^{n/2} e$.  This leaves the fusion rules unchanged, and results in $\Theta_{e, \Omega} = \pi / n$, without modifying the other statistics angles.

Now, we use the statistics to constrain the flux fusion rule, Eq.~(\ref{eqn:Omegan-fusion}).  Using $\Theta_{m, \Omega} = 0$, we have $\Theta_{m, \Omega^n} = 0$.  Consistency with Eq.~(\ref{eqn:Omegan-fusion}) then requires either $a=1$ or $a=m$.  Similarly, $\Theta_{e,\Omega} = \pi / n$ implies $\Theta_{e, \Omega^n} = \pi$, which requires either $a = m$ or $a = \epsilon$.  Therefore, $a = m$, and
\begin{equation}
\Omega^n = m Q^k \text{.} \label{eqn:Omegan-fusion-2}
\end{equation}

So far, we have not mentioned $\theta_{\Omega}$, the self-statistics of the symmetry flux.  Unlike the other statistics angles, this parameter does not follow immediately from our assumptions, but it can be related to the integer $k$ appearing in Eq.~(\ref{eqn:Omegan-fusion-2}).  First, Eq.~(\ref{eqn:Omegan-fusion-2}) implies that $\Omega^n$ is a boson, so $\theta_{\Omega^n} = n^2 \theta_{\Omega} = 0$, and therefore
\begin{equation}
\theta_{\Omega} = \frac{2\pi q}{n^2}
\end{equation}
for some integer $q$ satisfying $0 \leq q < n^2$.  In fact, we can further restrict the range of $q$.  To see this, we make the redefinition $\Omega \to e \Omega$ and $m \to Q^{n/2} m$, which preserves the fusion rules, and leaves all the statistics angles unchanged except $\theta_{\Omega}$.  The effect of this redefinition is to shift $q \to q + n/2$, which allows us to restrict $0 \leq q < n/2$.

We can now relate $q$ and $k$ by noting that $\Theta_{\Omega, \Omega^n} = 2 n \theta_{\Omega} = 4 \pi q / n$, and also $\Theta_{\Omega, \Omega^n} = \Theta_{\Omega, m Q^k} = 2\pi k / n$, so that $4 \pi q / n = 2 \pi k / n$.  This has no solution if $k$ is odd, so $k$ must be even.  Given the restrictions on the range of $k$ and $q$, the unique solution for $q$ is then $q = k/2$, and we have shown
\begin{equation}
\theta_{\Omega} = \frac{\pi k}{n^2} \text{,} \label{eqn:Omega-statistics}
\end{equation}
where $k$ is even and satisfies $0 \leq k < n$.  In particular, for $n = 2$ we have $\Omega^2 = m$, as stated in Sec.~\ref{sec:simple}.

Physically, we expect $k$ to parametrize the quantized Hall response.  Inserting $2\pi$ flux at some point in space produces a local charge accumulation of $\sigma_{xy}$, in appropriate units.  If we view fusion of $n$ fluxes $\Omega$ as equivalent to a dynamical process where $n$ fluxes are inserted, then, because $m$ is neutral under $\z_n$, Eq.~(\ref{eqn:Omegan-fusion-2}) implies
\begin{equation}
k = \sigma_{xy} \operatorname{mod} n \text{.}
\end{equation}
This physical interpretation of $k$ leads us to expect $k = 0$ whenever $G'$ symmetry forbids a quantized Hall response.  Indeed, in Appendix~\ref{app:Omega-is-boson}, we show that $k = 0$  whenever $G'$ contains time reversal or spatial reflection symmetry.

Whenever $k=0$, by Eq.~(\ref{eqn:Omega-statistics}), $\Omega$ is a boson.  This will enable a simple description of the action of $G'$ symmetry on $\Omega$ and $m$, so from now on we will always assume conditions are such that we can take $\Omega$ to be a boson.  Under this assumption, we collect here the properties of the gauged SET phase obtained from the discussion above.  The fusion rules are
\begin{eqnarray}
Q^n &=&  1 \label{eqn:qn-fusion-summary} \\
e^2 &=& Q \label{eqn:e2-fusion-summary} \\
m^2 &=& 1 \label{eqn:m2-fusion-summary} \\
\Omega^n &=& m  \label{eqn:Omegan-fusion-summary} \text{,}
\end{eqnarray}
and the statistics are specified by
\begin{eqnarray}
\theta_e &=& \theta_m = 0 \label{eqn:em-bosons-summary} \\
\Theta_{e,m} &=& \pi \label{eqn:em-mutual-summary} \\
\theta_Q &=& \Theta_{e,Q} = \Theta_{m,Q} = 0 \label{eqn:Q-trivial-braiding-summary} \\
\Theta_{Q,\Omega} &=& \frac{2\pi}{n} \label{eqn:QOmega-mutual-summary} \\
\Theta_{e,\Omega} &=& \frac{\pi}{n}  \label{eqn:eOmega-mutual-summary} \\
\Theta_{m, \Omega} &=& 0  \label{eqn:mOmega-mutual-summary} \\
\theta_{\Omega} &=& 0 \label{eqn:Omega-statistics-summary} \text{.}
\end{eqnarray}
These are precisely the fusion rules and statistics of $\z_{2n}$ gauge theory, or, equivalently, the $\z_{2n}$ version of the toric code model.  For Abelian anyons, fusion rules and statistics are enough to uniquely specify the unitary modular tensor category that describes a theory of anyons \cite{zwangpc,drinfeld10}.  Therefore, the theory of anyons in the gauged SET phase is identical to that in the $\z_{2n}$ toric code.

\subsection{Symmetry action on $m$, $\Omega$ and the anomaly test}
\label{subsec:anomaly-test}

In order to apply the anomaly test, we first have to characterize the action of $G'$ symmetry on the anyons of the un-gauged SET phase \cite{essin13}.  In general, the fractionalization class of $e$ or $m$ is an element of the group $H^2(G, \zz)$.  In the present case, as is shown in Appendix~\ref{app:fracinfo}, it is enough to specify separately the action of ${\rm U}(1)$ and $G'$ on each of $e$ and $m$.  That is, there is no additional information associated with interplay between ${\rm U}(1)$ and $G'$.

Each of $e$, $m$ transforms under a $\zz$ projective representation of $G'$ denoted $\Gamma^e$, $\Gamma^m$, respectively.  We focus on $m$ particles; the corresponding equations hold for $e$ particles with trivial modifications.  For $g_1, g_2 \in G'$, we have
\begin{equation}
\Gamma^m(g_1) \Gamma^m(g_2) = \omega_m(g_1, g_2) \Gamma^m (g_1 g_2) \text{,} \label{eqn:projective-rep}
\end{equation}
where $\omega_m(g_1, g_2) \in \zz$ is called a $\zz$ factor set.  The corresponding object for $e$ particles is denoted $\omega_e$.  Associative multiplication of the $\Gamma^m$'s implies
\begin{equation}
\omega_m(g_1, g_2) \omega_m(g_1 g_2, g_3) = \omega_m( g_1, g_2 g_3) \omega_m(g_2, g_3) \text{.} \label{eqn:z2-associativity}
\end{equation}
In general, any $\zz$-valued function $\omega_m(g_1, g_2)$ satisfying Eq.~(\ref{eqn:z2-associativity}) is called a $\zz$ factor set.

Physical properties are unchanged under a redefinition $\Gamma^m(g) \to \lambda^{-1}(g) \Gamma^m(g)$ for $\lambda(g) \in \zz$, which induces a \emph{projective transformation} on the factor set,
\begin{equation}
\omega_m(g_1, g_2) \to \lambda^{-1}(g_1) \lambda^{-1}(g_2) \lambda(g_1 g_2) \omega_m(g_1, g_2) \text{.}  \label{eqn:z2-projective-transformation}
\end{equation}
Here, $\lambda^{-1}(g) = \lambda(g)$, but the inverse signs are kept to expose the formal similarities with the discussion of symmetry action on $\Omega$, below.
Equivalence classes of factor sets under such projective transformations are denoted $[\omega_m]_{\zz}$, and are the distinct fractionalization classes of $m$.  The $\zz$ subscript reminds us that both $\omega_m$ and the projective transformations $\lambda$ take values in $\zz$.  In the language of group cohomology theory, fractionalization classes  $[\omega_m]_{\zz}$ are elements of the Abelian group $H^2(G', \zz)$, the second group cohomology of $G'$ with $\zz$ coefficients. The group multiplication in $H^2(G', \zz)$ is obtained from multiplication of functions;  that is, if $\omega_{ab}(g_1,g_2) = \omega_a(g_1,g_2) \omega_b(g_1,g_2)$, then $[\omega_a]_{\zz} [\omega_b]_{\zz} = [\omega_{ab}]_{\zz}$.

Considering all symmetries together, the symmetry fractionalization pattern of the SET phase can be denoted $eC[\omega_e]m0[\omega_m]$, where $C$ ($0$) indicates that $e$ ($m$) carries half-odd-integer (integer) ${\rm U}(1)$ charge.  When using this notation, to avoid cumbersome expressions, we drop the $\zz$ subscript for the fractionalization classes.

The flux-fusion anomaly test will be able to determine that $eC[\omega_e]m0[\omega_m]$ is anomalous for certain choices of $[\omega_m]_{\zz}$, independent of $[\omega_e]_{\zz}$.  When the anomaly test does not find an anomaly, we say that a symmetry fractionalization pattern is \emph{anomaly-negative}.  This terminology recognizes that the flux-fusion anomaly test is not expected to detect all possible anomalies, and some anomaly-negative fractionalization patterns can still be anomalous.

To proceed, we now consider the gauged SET phase, and characterize the action of $G'$ symmetry on $\Omega$.  First, we need to consider the possibility that some operations may permute the anyons of the gauged SET phase, and, in particular, may map $\Omega$ to some other anyon.  For some operation $g \in G'$, let $g \star \Omega$ denote the anyon resulting from applying $g$ to $\Omega$.  If $g$ commutes with ${\rm U}(1)$, is unitary, and is either an internal symmetry or a proper space group operation, then $g \star \Omega = \Omega$.  This follows from the fact that such an operation leaves $Q$, $e$ and $m$ invariant, and also leaves the statistics invariant; that is, $\Theta_{g \star a, g \star b} = \Theta_{a,b}$.  However, it is not the case that all $g \in G'$ leave $\Omega$ invariant; in particular, we will be interested in time reversal and reflection symmetry.  These operations may send $\Omega \mapsto \Omega$ or $\Omega \mapsto \Omega^{2n-1}$, depending on whether the operation in question commutes with ${\rm U}(1)$, as is discussed in detail in Appendix~\ref{app:Omega-is-boson}.

Because some operations in $G'$ may not preserve the anyon type of $\Omega$, in describing the action of symmetry, we have to go somewhat beyond the framework developed in Ref.~\onlinecite{essin13}.  We introduce field operators $\psi_k$ ($k = 1,\dots,2n-1$).  Each $\psi_k$ is a many-component object, with components not explicitly written, where each component creates a $\Omega^k$ particle in some state.  In particular, $\psi_n$ creates a $m$ particle.  These field operators are non-local objects.  However, because all the $\Omega^k$ particles are bosons and have bosonic mutual statistics, the non-local character of $\psi_k$ is not expected to play a role in the following discussion.  It is also convenient to collect all the field operators into the object $\Psi = (\psi_1 \cdots \psi_{2n-1})$.

All physical states and local operators are invariant under $\z_{2n}$ gauge transformations implemented by the unitary operator ${\cal G}[\lambda]$, for $\lambda \in \z_{2n}$, which acts on the field operators by
\begin{equation}
{\cal G}[\lambda] \psi_k {\cal G}[\lambda]^{-1} = \lambda^k \psi_k \text{.}
\end{equation}
For a symmetry operation $g \in G'$, we denote the corresponding unitary or anti-unitary operator by $S(g)$, which acts on field operators by 
\begin{equation}
g: \Psi \mapsto S(g) \Psi S(g)^{-1} \text{.}
\end{equation}
The operators $S(g)$ form a representation of $G'$ up to $\z_{2n}$ gauge transformations, that is
\begin{equation}
S(g_1) S(g_2) = {\cal G}[ \phi_n(g_1, g_2) ] S(g_1 g_2) \text{,}  \label{eqn:twisted-projective-rep}
\end{equation}
for $\phi_n(g_1, g_2) \in \z_{2n}$.  This is the most general multiplication law consistent with the requirement that $S(g)$ act linearly on local operators, for example (schematically), $(\psi_1)^{2n}$.  Mathematically, we have defined a kind of generalized projective representation, which is similar to but not identical to the projective representation describing the action of symmetry on $m$ [Eq.~(\ref{eqn:projective-rep})].  

The crucial difference between $S(g)$ and more familiar projective representations is that, in general, $S(g)$ does not commute with the gauge transformation ${\cal G}[\lambda]$.  We note that some symmetries $g \in G'$ map $g : \Omega \mapsto \Omega^{2n-1}$.  We keep track of this information by defining
\begin{equation}
s(g) = \left\{ \begin{array}{ll} +1 , & g : \Omega \mapsto \Omega \\
-1 , & g : \Omega \mapsto \Omega^{2n-1} 
\end{array}\right. \text{.}
\end{equation}
In addition, some operations in $G'$ may be anti-unitary, so we define
\begin{equation}
u(g) = \left\{ \begin{array}{ll}
1 \text{,} & g \text{ unitary} \\
-1 \text{,} & g \text{ anti-unitary} 
\end{array}\right. \text{.}
\end{equation}
We note that both $s$ and $u$ are group homomorphisms mapping $G' \to \zz$.  We then introduce the function
\begin{equation}
t(g) = s(g) u(g) \text{.}
\end{equation}
By considering the action of $S(g)$ and gauge transformations on field operators, it is straightforward to show
\begin{equation}
S(g) {\cal G}[\lambda] = {\cal G}[\lambda^{t(g)} ] S(g) \text{,}  \label{eqn:SGmult}
\end{equation}
which shows that $t(g)$ characterizes the non-commutativity of $S(g)$ and gauge transformations.  We thus refer to $S(g)$ as a $t$-twisted $\z_{2n}$ projective representation of $G'$.

Equation~(\ref{eqn:SGmult}) allows us to use associativity of the product $S(g_1) S(g_2) S(g_3)$ to derive the associativity condition on $\phi_n$,
\begin{equation}
\phi_n(g_1, g_2) \phi_n(g_1 g_2, g_3) = \phi_n(g_1, g_2 g_3)  [\phi_n(g_2, g_3)]^{t(g_1)} \text{.}  \label{eqn:twisted-associativity}
\end{equation}
We refer to $\phi_n$, and, indeed, any $\z_{2n}$-valued function satisfying Eq.~(\ref{eqn:twisted-associativity}), as a $t$-twisted $\z_{2n}$ factor set.  Paralleling the discussion of ordinary projective representations above, we are free to redefine $S(g)$ by a gauge transformation,
\begin{equation}
S(g) \to {\cal G}[\lambda^{-1}(g)] S(g) \text{.}
\end{equation}
This induces a projective transformation on the factor set,
\begin{equation}
\phi_n(g_1, g_2) \to \lambda^{-1}(g_1) [\lambda(g_2)]^{-t(g_1)} [\lambda(g_1 g_2)] \phi_n(g_1, g_2) \text{.}  \label{eqn:twisted-proj-trans}
\end{equation}
Equivalence classes $[\phi_n]_{\z_{2n}}$ of factor sets under such transformations characterize the action of $G'$ symmetry on $\Omega$.  These equivalence classes are elements of the cohomology group $H_t^2(G', \z_{2n})$, where the $t$ subscript denotes the non-trivial action of $G'$ on the $\z_{2n}$ coefficients, encoded in the function $t(g)$.  We refer to this as $t$-twisted cohomology.  We note that, for $G'$ finite, on-site and unitary, we have recovered a special case of the twisted cohomology theory used to describe the action of symmetry on anyons in the category-theoretic description of SET phases \cite{barkeshli14,tarantino15}.

 In fact, $[\phi_n]_{\z_{2n}}$ simultaneously characterizes the action of $G'$ on all particles $\Omega^k$ obtained by fusing $\Omega$'s together.  This includes $m = \Omega^n$.  The action of symmetry on $m$ is given by considering the action of $S(g)$ on $\psi_n$; in particular,
 \begin{eqnarray}
  && S(g_1)  S(g_2) \psi_n S(g_2)^{-1} S(g_1)^{-1} \nonumber \\ &=& [\phi_n(g_1, g_2)]^n S(g_1 g_2) \psi_n S(g_1 g_2)^{-1} \nonumber \\
  &\equiv& \omega_m(g_1, g_2) S(g_1 g_2) \psi_n S(g_1 g_2)^{-1} \text{.}
 \end{eqnarray}
 Therefore, we have shown
 \begin{equation}
 \omega_m(g_1, g_2) = [\phi_n(g_1,g_2)]^n \text{.}  \label{eqn:main-equation}
 \end{equation}
 
 Equation~(\ref{eqn:main-equation}), which holds for all even $n \geq 2$, is the crucial equation underlying the anomaly test.  The essential idea is to take advantage of the fact that $\Omega$ is a ``$n$th root'' of $m$ in the gauged SET phase, and to ask whether a given symmetry action on $m$ can be consistently extended to a symmetry action on its $n$th root $\Omega$.  If not, then an anomaly has been detected. 
 
  In more detail, the logic is as follows: Given $[\omega_m]_{\zz}$, we choose some particular factor set $\omega_m(g_1, g_2)$ in the desired equivalence class (the particular choice within the class does not matter).  Then, for each even $n \geq 2$, we ask whether it is possible to solve Eq.~(\ref{eqn:main-equation}) for $\phi_n(g_1, g_2)$, where $\phi_n$ is required to  satisfy Eq.~(\ref{eqn:twisted-associativity}). If for any even $n \geq 2$, a solution fails to exist, the symmetry fractionalization pattern is anomalous.  If a solution exists for all even $n \geq 2$, the symmetry fractionalization pattern is anomaly-negative.
 
Equation~(\ref{eqn:main-equation}) immediately implies that anomaly-negative $m$ particle fractionalization classes form a subgroup that we denote ${\cal N} \subset H^2(G', \zz)$.

At first glance, it might appear cumbersome to apply the flux-fusion anomaly test.  Fortunately, it is not necessary to consider Eq.~(\ref{eqn:main-equation}) directly for every even $n \geq 2$.  Instead, there is a simple and easily computable characterization of which $[\omega_m]_{\zz}$ are anomaly-negative.  To describe this characterization, we first note that $\omega_m$ can be viewed as a $t$-twisted ${\rm U}(1)$ factor set.  This means that, given $\omega_m(g_1,g_2)$, we allow for projective transformations
\begin{equation}
\omega_m(g_1, g_2) \to \lambda^{-1}(g_1) [\lambda(g_2)]^{-t(g_1)} [\lambda(g_1 g_2)] \omega_m(g_1,g_2) \text{,}
\end{equation}
where $\lambda(g) \in {\rm U}(1)$.  The corresponding equivalence class under these transformations is denoted $[\omega_m]_{{\rm U}(1)}$, and is an element of the cohomology group $H^2_t(G', {\rm U}(1))$.
Formally, there is a map $\rho_2 : H^2(G', \zz) \to H^2_t(G', {\rm U}(1))$ defined by $\rho_2( [\omega_m]_{\zz} ) = [\omega_m]_{{\rm U}(1)}$.  (In Appendix~\ref{app:characterization-theorem}, it is shown that $\rho_2$ is well-defined, is a group homomorphism, and is unique in a certain natural sense.)

Intuitively, it seems natural for cohomology with ${\rm U}(1)$ coefficients to arise out of the flux-fusion anomaly test.  Ultimately, it ought to be possible to dispense with gauging $\z_n \subset {\rm U}(1)$ for all even $n$, in favor of working directly with continuous ${\rm U}(1)$ fluxes.  Either approach should give the same results, so we speculate that the $H^2_t(G', {\rm U}(1))$ cohomology may describe the action of $G'$ symmetry on ${\rm U}(1)$ fluxes.  Moreover, as discussed in more detail in Sec.~\ref{sec:anomalous-superfluids}, $[\omega_m]_{{\rm U}(1)}$ does have a nice 
physical interpretation: it characterizes the symmetry fractionalization of vortex excitations in a superfluid.  This  allows us to obtain results on anomalous $d=2$ superfluids.  We note that $t$-twisted ${\rm U}(1)$ cohomology also appears in the cohomology approach to SPT phases with time reversal symmetry, where anti-unitary operations act non-trivially on the ${\rm U}(1)$ coefficients, and the cohomology groups are denoted by $H^n(G, {\rm U}_T(1))$ \cite{chen13}.

Anomaly-negative $m$ particle fractionalization classes $[\omega_m]_{\zz}$ are fully characterized by the following theorem, which is proved in Appendix~\ref{app:characterization-theorem}.
\begin{thmtext} \label{thm:characterization-maintext}
If $H^2_t(G', {\rm U}(1)) = {\rm U}(1)^k \times A$, where $A$ is a finite product of finite cyclic factors, then the symmetry fractionalization pattern $eC[\omega_e]m0[\omega_m]$ is anomaly-negative if and only if $[\omega_m]_{{\rm U}(1)} = \rho_2( [\omega_m]_{\zz} )$ lies in the connected component of $H^2_t(G', {\rm U}(1) )$ that contains the identity element.
\end{thmtext}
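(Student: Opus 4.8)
The plan is to convert the anomaly test---a priori an infinite family of cochain-level existence questions, one for each even $n\ge 2$---into a single cohomological statement, and then to evaluate that statement using long exact sequences of $t$-twisted cohomology together with the structure of the groups involved. Note that $[\omega_e]_{\zz}$ will play no role, since it never enters Eq.~(\ref{eqn:main-equation}). Throughout, $\z$, $\mathbb{R}$ and ${\rm U}(1)$ are understood to carry the $t$-twisted $G'$-action (I write a superscript $(t)$), while $\zz$ carries the trivial action, since inversion is trivial on $\zz$ and on the subgroup of $n$th powers in $\z_{2n}$.

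First I would carry out the cohomological reformulation of ``anomaly-negative''. By Eq.~(\ref{eqn:main-equation}), $eC[\omega_e]m0[\omega_m]$ is anomaly-negative exactly when, for every even $n\ge 2$, there is a $t$-twisted $\z_{2n}$ factor set $\phi_n$ with $\phi_n^n=\omega_m$ as functions. I claim this is equivalent to $[\omega_m]_{\zz}$ lying in the image of the map $\pi^{(n)}_*\colon H^2_t(G',\z_{2n})\to H^2(G',\zz)$ induced by the $G'$-equivariant surjection $\pi^{(n)}\colon\z_{2n}\to\zz$, $x\mapsto x^n$. One direction is immediate. For the converse, if $[\phi'_n]$ maps to $[\omega_m]_{\zz}$ then $(\phi'_n)^n=\omega_m\cdot\delta\lambda$ for some $\lambda\colon G'\to\zz$; using that every element of $\zz=\{\pm1\}$ is an $n$th power in $\z_{2n}$ (true for even $n$), I pick $\mu\colon G'\to\z_{2n}$ with $\mu(g)^n=\lambda(g)$ and replace $\phi'_n$ by its product with the $t$-twisted coboundary of $\mu$. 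Because $t$ acts trivially on $n$th powers, that coboundary raised to the $n$th power is the ordinary coboundary $\delta(\mu^n)=\delta\lambda$, so the new cochain is still a $t$-twisted factor set and its $n$th power is $\omega_m$. Hence anomaly-negative $\iff$ $\delta_n([\omega_m]_{\zz})=0$ in $H^3_t(G',\z_n)$ for all even $n\ge 2$, where $\delta_n$ is the connecting map of $0\to\z_n\to\z_{2n}\xrightarrow{\pi^{(n)}}\zz\to 0$.

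Next I would identify $\delta_n$ with Bocksteins and pass to ${\rm U}(1)$ coefficients. A map of short exact sequences from $0\to\z^{(t)}\xrightarrow{\times 2}\z^{(t)}\xrightarrow{\bmod 2}\zz\to 0$ to the sequence above (middle vertical: reduction mod $2n$) gives $\delta_n=(\bmod\, n)_*\circ\tilde\beta$, where $\tilde\beta\colon H^2(G',\zz)\to H^3_t(G',\z)$ is the integral Bockstein and $(\bmod\, n)_*\colon H^3_t(G',\z)\to H^3_t(G',\z_n)$ has kernel exactly $n\,H^3_t(G',\z)$ (long exact sequence of $0\to\z^{(t)}\xrightarrow{\times n}\z^{(t)}\to\z_n\to 0$). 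So anomaly-negative is equivalent to $\tilde\beta([\omega_m]_{\zz})$ being divisible by every even $n$ in $H^3_t(G',\z)$; since for the groups $G'$ under consideration (finite groups and $d=2$ crystallographic groups) $H^3_t(G',\z)$ is finitely generated and hence has no nonzero divisible subgroup, this forces $\tilde\beta([\omega_m]_{\zz})=0$. A second map of short exact sequences, from $0\to\z^{(t)}\xrightarrow{\times 2}\z^{(t)}\to\zz\to 0$ to $0\to\z^{(t)}\to\mathbb{R}^{(t)}\to{\rm U}(1)^{(t)}\to 0$ (middle vertical $x\mapsto x/2$, right vertical the inclusion $\zz\hookrightarrow{\rm U}(1)$ that defines $\rho_2$), gives $\tilde\beta=\beta_{{\rm U}(1)}\circ\rho_2$ with $\beta_{{\rm U}(1)}\colon H^2_t(G',{\rm U}(1))\to H^3_t(G',\z)$, so anomaly-negative $\iff$ $[\omega_m]_{{\rm U}(1)}=\rho_2([\omega_m]_{\zz})\in\ker\beta_{{\rm U}(1)}$. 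Finally, the long exact sequence of $0\to\z^{(t)}\to\mathbb{R}^{(t)}\to{\rm U}(1)^{(t)}\to 0$ identifies $\ker\beta_{{\rm U}(1)}$ with the image of $H^2_t(G',\mathbb{R})$, which is a $\mathbb{Q}$-vector space and hence divisible, so $\ker\beta_{{\rm U}(1)}$ is contained in the maximal divisible subgroup of $H^2_t(G',{\rm U}(1))={\rm U}(1)^k\times A$, namely ${\rm U}(1)^k\times 0$; conversely ${\rm U}(1)^k\times 0$ is divisible, so $\beta_{{\rm U}(1)}$ sends it to a divisible subgroup of the reduced group $H^3_t(G',\z)$, i.e.\ to $0$, giving ${\rm U}(1)^k\times 0\subseteq\ker\beta_{{\rm U}(1)}$. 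Thus $\ker\beta_{{\rm U}(1)}={\rm U}(1)^k\times 0$ is precisely the connected component of the identity, which completes the argument.

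The hardest part will be step one: making the passage from ``$\phi_n$ exists for every even $n$'' to a clean cohomological image condition fully rigorous, in particular tracking carefully which coefficient modules carry the $t$-twist (the quotient $\zz$ does not, but $\z_n\subset\z_{2n}$ does) and checking that the $n$th-root adjustment really does preserve the twisted cocycle condition. A secondary subtlety worth spelling out is the claim that $H^3_t(G',\z)$ is reduced when $G'$ is infinite (contains translations); there one invokes that $d=2$ crystallographic groups have finite cohomological dimension and finitely generated integral cohomology. The remaining homological-algebra input---the two maps of short exact sequences and naturality of connecting homomorphisms---is routine once the twists are set up consistently.
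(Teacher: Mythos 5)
Your argument is correct in substance, but it follows a genuinely different route from the paper's. The paper's proof (Appendix~\ref{app:characterization-theorem}) never leaves $H^2_t(G',{\rm U}(1))$: for the ``if'' direction it constructs the twisted $\z_{2n}$ cocycle $\phi_n$ explicitly at the cochain level, taking an $n$th root $\Omega_n$ of $[\omega_m]_{{\rm U}(1)}$ inside the identity component ${\rm U}(1)^k$ and rescaling the coboundary phases $\lambda(g)=e^{i\theta(g)}\mapsto e^{i\theta(g)/n}$; for the ``only if'' direction it uses only that $[\omega_m]_{{\rm U}(1)}^2=1$ together with the existence of a $p_i$th root for each even cyclic factor $\z_{p_i}$ of $A$ to force the finite part of $[\omega_m]_{{\rm U}(1)}$ to be trivial. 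You instead convert ``anomaly-negative'' into the vanishing of the connecting maps $\delta_n$ of $0\to\z_n\to\z_{2n}\to\zz\to 0$, identify $\delta_n=(\bmod\,n)_*\circ\tilde\beta$ and $\tilde\beta=\beta_{{\rm U}(1)}\circ\rho_2$ by naturality, and finish by computing $\ker\beta_{{\rm U}(1)}$ from the exponential sequence. Your step~1 (passing from the cochain-level equation $\phi_n^n=\omega_m$ to the image/kernel condition in cohomology, using that every element of $\zz$ has an $n$th power preimage in $\z_{2n}$ and that the twist is trivial on $n$th powers) is the genuinely new content and is handled correctly; it also makes manifest the paper's assertion that the choice of representative $\omega_m$ is immaterial. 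What your route buys is a cleaner conceptual statement (anomaly-negativity $=$ vanishing of a twisted integral Bockstein) that connects to standard machinery; what the paper's route buys is complete elementarity and, importantly, full generality under its stated hypothesis, plus explicit cochain-level roots $\phi_n$.

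The one caveat to flag: both directions of your final step invoke the reducedness (you use finite generation) of $H^3_t(G',\z)$ --- once to conclude that an element divisible by all even $n$ vanishes, and once to conclude that $\beta_{{\rm U}(1)}$ kills the divisible subgroup ${\rm U}(1)^k\times 0$. This is an extra hypothesis not contained in the theorem as stated, whose only assumption is on $H^2_t(G',{\rm U}(1))$; the paper's proof needs no condition on $H^3$. You acknowledge this and it does hold for every $G'$ appearing in the paper (finite groups and extensions of wallpaper groups by $\zz^T$ are of type $FP_\infty$, so their twisted integral cohomology is finitely generated in each degree), so your argument covers all the applications, but as written it proves a slightly weaker statement than Theorem~\ref{thm:characterization-maintext}. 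If you want the full statement by your method, you would need either to remove the appeal to $H^3$ (e.g.\ by reproducing the paper's explicit root construction for the ``if'' direction and arguing the ``only if'' direction inside $H^2_t(G',{\rm U}(1))$ as the paper does) or to add the $H^3$ hypothesis explicitly.
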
 
The assumption on the form of $H^2_t(G', {\rm U}(1) )$ is true for all the examples we have considered, and we believe it is likely to be true in general.

This theorem allows us to apply the flux-fusion anomaly test via the following procedure:
\begin{enumerate}
\item Compute the group $H^2(G', \zz)$ of $m$ particle fractionalization classes under $G'$ symmetry.  Find a convenient explicit parametrization of distinct fractionalization classes $[\omega_m]_{\zz}$.

\item Compute the $t$-twisted cohomology group $H^2_t(G', {\rm U}(1) )$, and find an explicit parametrization.

\item Find the map $\rho_2$ discussed above, for which $[\omega_m]_{{\rm U}(1)} = \rho_2( [\omega_m]_{\zz})$.

\item For each $m$ particle fractionalization class $[\omega_m]_{\zz}$, determine whether $[\omega_m]_{{\rm U}(1)}$ can be continuously deformed to the identity element of $H^2_t(G', {\rm U}(1))$.  If this is impossible, the fractionalization pattern $eC[\omega_e]m0[\omega_m]$ is anomalous, for any $[\omega_e]_{\zz}$.

\item The results for a given symmetry $G = {\rm U}(1) \rtimes G'$ can be summarized by describing the $m$ particle fractionalization classes for which $eC[\omega_e]m0[\omega_m]$ is anomalous.
\end{enumerate}
This procedure is illustrated in detail, and made more concrete, in the examples presented in Section~\ref{sec:examples} and Appendix~\ref{sec:more-examples}.

\section{Examples}
\label{sec:examples}

In this Section, we apply the flux-fusion anomaly test in a few cases, in order of increasing complexity.  In each case, we fix a symmetry $G$, and follow the procedure outlined in Sec.~\ref{subsec:anomaly-test}.  A crucial aspect is the calculation of second cohomology groups, for $G'$ presented in terms of generators and relations.  We illustrate our approach to these calculations in each example, leaving a more careful mathematical justification to Appendix~\ref{app:cohomology}.

These examples enable a more concrete discussion of bosonic TCIs in Sec.~\ref{sec:btci}, and anomalous $d=2$ superfluids in Sec.~\ref{sec:anomalous-superfluids}.  A number of other examples are considered in Appendix~\ref{sec:more-examples}.

\subsection{$G = {\rm U}(1) \times \zz^T$}
\label{sec:ex1}

We begin with the case of $G = {\rm U}(1) \times \zz^T$ symmetry, that was already discussed in Sec.~\ref{sec:simple} and in previous works \cite{vishwanath13,cwang13}.  This symmetry is simple enough to analyze using Eq.~(\ref{eqn:main-equation}) directly; that approach, in fact, is essentially identical to the treatment in Sec.~\ref{sec:simple}.  However, to pave the way for more complex examples, we follow the procedure outlined in Sec.~\ref{subsec:anomaly-test}.

It is convenient to present the group $G' = \zz^T$ in terms of generators and relations.  Here, this is trivial; the single generator ${\cal T}$ obeys the relation ${\cal T}^2 = 1$.  Next, we consider a general $\zz$ projective representation giving the action of $G'$ on a $m$ particle.  The generator is now written ${\cal T}^m$, and the relation becomes
\begin{equation}
({\cal T}^m)^2 = \sigma^m_T \text{,}  \label{eqn:ex1-z2-proj-rep}
\end{equation}
for $\sigma^m_T \in \zz$.  
We are allowed to redefine ${\cal T}^m \to - {\cal T}^m$, but this does not affect $\sigma^m_T$.  Therefore, because $\sigma^m_T = \pm 1$ is invariant under projective transformations, we can tentatively conclude that it labels two distinct fractionalization classes $[\omega_m]_{\zz}$.  To be sure this conclusion is correct, we need to check that each choice of $\sigma^m_T$ in fact corresponds to a factor set $\omega_m(g_1,g_2)$, for $g_1, g_2 \in \zz^T$. This can be accomplished by exhibiting a projective representation for each choice of $\sigma^m_T$.  In the present case, these representations are just familiar Kramers singlets ($\sigma^m_T = 1$) and doublets ($\sigma^m_T = -1$).  Therefore, $H^2(G', \zz) = \zz$, with $\sigma^m_T = \pm 1$ explicitly parametrizing the cohomology group, and labeling the distinct fractionalization classes $[\omega_m]_{\zz}$.

Next, we have to compute $H^2_t(G', {\rm U}(1))$.  To do this, we consider a general $t$-twisted ${\rm U}(1)$ projective representation of $G' = \zz^T$, again in terms of generators and relations.  We denote the generator by ${\cal T}^t$.  We also have to specify the function $t(g)$; it is sufficient to give the values of $t$ for the generators, and in this case, $t({\cal T}) = -1$.  The relation becomes
\begin{equation}
({\cal T}^t)^2 = \alpha_T \text{.}  \label{eqn:ex1-T-relation}
\end{equation}
Here, $\alpha_T$ is short-hand for the gauge transformation ${\cal G}[\alpha_T]$, with $\alpha_T \in {\rm U}(1)$.  So, for example, we can write  ${\cal T}^t \alpha_T = \alpha^{-1}_T {\cal T}^t$.  It is important to note that we can adjust the phase of the generator, by redefining ${\cal T}^t \to \lambda {\cal T}^t$, but this leaves $\alpha_T$ unchanged.

Because $\alpha_T$ is invariant under projective transformations, it is tempting to conclude that $\alpha_T \in {\rm U}(1)$ labels distinct equivalence classes $[\omega]_{{\rm U}(1)} \in H^2_t(G', {\rm U}(1))$.  However, this conclusion is not correct, because the possible values of $\alpha_T$ are constrained.  That is, there does not exist a $t$-twisted ${\rm U}(1)$ factor set for arbitrary $\alpha_T \in {\rm U}(1)$.  To see this, we conjugate both sides of Eq.~(\ref{eqn:ex1-T-relation}) by ${\cal T}^t$, and readily obtain $({\cal T}^t)^2 = \alpha^{-1}_T$.  This is consistent only if $\alpha_T \in \zz$.

As before, we need to verify that both choices $\alpha_T = \pm 1$ actually give rise to $t$-twisted ${\rm U}(1)$ factor sets.  The same Kramers singlet and doublet representations can be viewed as $t$-twisted ${\rm U}(1)$ projective representations, so, once again, we can exhibit a representation realizing each choice of $\alpha_T$.  Therefore, $H^2_t(G', {\rm U}(1)) = \zz$, which is explicitly parametrized by $\alpha_T \in \zz$.

To find the map $\rho_2$ giving $[\omega_m]_{{\rm U}(1)}$ in terms of $[\omega_m]_{\zz}$, suppose we have a $\zz$ projective representation as described in Eq.~(\ref{eqn:ex1-z2-proj-rep}) with some value of $\sigma^m_T$.  This $\zz$ projective representation can immediately be viewed as a $t$-twisted ${\rm U}(1)$ projective representation, with $\alpha_T = \rho_2(\sigma^m_T) = \sigma^m_T$.  In this case, then, $\rho_2 : \zz \to \zz$ is the identity map; more non-trivial examples will arise for other symmetries.

To conclude, we see that $\sigma^m_T = 1$ is anomaly-negative, because $\rho_2(1) = 1$.  On the other hand, $\sigma^m_T = -1$ is anomalous, because $\rho_2(-1) = -1$, which is not continuously connected to the identity element in $H^2_t(G', {\rm U}(1)) = \zz$.  The group ${\cal N}$ of anomaly-negative vison fractionalization classes is thus trivial, ${\cal N} = \z_1$.

\subsection{$G = {\rm U}(1) \times \zz^P$}
\label{sec:ff-z2p}

Next, we consider the case of a single lattice reflection symmetry ($\zz^P$), which commutes with the ${\rm U}(1)$.  Of course, any realistic system with reflection symmetry will also have a larger space group, including translation symmetry.  A physically reasonable viewpoint  is to imagine that we are interested in a system that has additional space group symmetry beyond $\zz^P$, but we are ``forgetting'' about the rest of the symmetry, and only making use of a ${\rm U}(1) \times \zz^P$ subgroup in our analysis.

Our discussion parallels the treatment given above for time reversal symmetry.  The group $G' = \zz^P$ is generated by $P$, which obeys the relation $P^2 = 1$.  Acting on $m$ particles, the generator is written $P^m$ and obeys
\begin{equation}
(P^m)^2 = \sigma^m_P \text{,}
\end{equation}
with $\sigma^m_P \in \zz$.  As before, $\sigma^m_P$ is invariant under $P^m \to - P^m$.   Both choices of $\sigma^m_P$ can be realized; for example, we can choose one-dimensional representations $P^m = 1$ (for $\sigma^m_P = 1$) and $P^m = i$ (for $\sigma^m_P = -1$).  Therefore, $H^2(G', \zz) = \zz$, parametrized by $\sigma^m_P$. A physical consequence of $\sigma^m_P$ is that, if a pair of $m$ particles are created and moved to reflection symmetric points, the resulting wave function has a reflection eigenvalue of $\sigma^m_P$, relative to the reflection eigenvalue of the ground state.

Next, we consider a general $t$-twisted ${\rm U}(1)$ projective representation generated by $P^t$, obeying the relation
\begin{equation}
(P^t)^2 = \alpha_P \text{,}
\end{equation}
with $\alpha_P \in {\rm U}(1)$.  In Appendix~\ref{app:Omega-is-boson}, it is shown that $P$ maps the symmetry flux $\Omega$ to $\Omega^{2n-1}$; since $P$ is unitary, this implies $t(P) = -1$.

At this point, the analysis is mathematically identical to that given for $G = {\rm U}(1) \times \zz^T$ symmetry in Sec.~\ref{sec:ex1}.  That is, we have $\alpha_P \in \zz$, and $H^2_t(G', {\rm U}(1)) = \zz$.  In addition, $\alpha_P = \rho_2(\sigma^m_P) = \sigma^m_P$.  Therefore, $\sigma^m_P = 1$ is anomaly-negative, while $\sigma^m_P = -1$ is anomalous, and the group of anomaly-negative vison fractionalization classes is ${\cal N} = \z_1$.  Introducing notation similar to that used in Sec.~\ref{sec:simple} for time reversal symmetry, we have found that the symmetry fractionalization patterns $eCmP$ and $eCPmP$ are anomalous, where $P$ denotes an anyon for which $P^2 = -1$.

\subsection{$G = {\rm U}(1) \times pm$}
\label{subsec:pm}

\begin{figure}
\includegraphics[width=0.7\columnwidth]{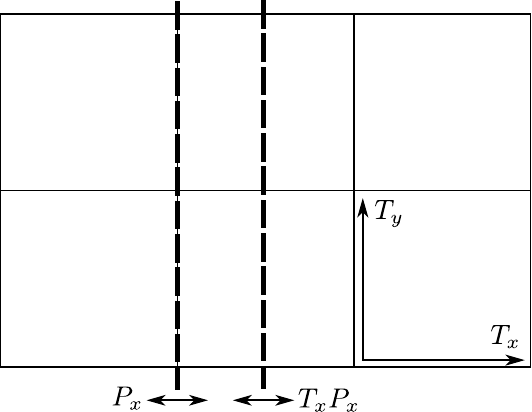}
\caption{Illustration of the operations generating the $d=2$ space group $pm$.  A square lattice, which is invariant under $pm$ symmetry, is shown to aid visualization.  $T_x$ and $T_y$ are translations by one lattice constant along the $x$- and $y$-axes, respectively. $P_x$ is a reflection, with axis indicated by the left-hand vertical dashed line.  The group $pm$ has two types of reflection axes, with $P_x$ being of one type, and $T_x P_x$ being of the other type.  The reflection axis for $T_x P_x$ is shown as the right-hand vertical dashed line.} 
\label{fig:pm-operations}
\end{figure}

We now move on to an example in which $G'$ is a $d=2$ space group.  We choose $G' = pm$, which is generated by translations $T_x, T_y, T_x^{-1}, T_y^{-1}$, and a reflection $P_x$; these operations are illustrated in Fig.~\ref{fig:pm-operations}, and obey the relations 
\begin{eqnarray}
T_x T_y T^{-1}_x T^{-1}_y &=& 1 \\
T_y P_x T^{-1}_y P_x &=& 1 \\
P_x^2 &=& 1 \\
T_x P_x T_x P_x &=& 1 \text{,}
\end{eqnarray}
which define the group $pm$.

Acting on $m$ particles, the general form of the relations is
\begin{eqnarray}
T^m_x T^m_y T^{m-1}_x T^{m-1}_y &=& \sigma^m_{txty} \\
T^m_y P^m_x T^{m-1}_y P^{m}_x &=& \sigma^m_{typx} \\
(P^m_x)^2 &=& \sigma^m_{px} \\
T^m_x P^m_x T^m_x P^m_x &=& \sigma^m_{txpx} \text{,}
\end{eqnarray}
with the $\sigma^m$'s taking values in $\zz$.  All the $\sigma^m$'s are invariant under projective transformations of the generators (\emph{e.g.} $T_x \to - T_x$), which suggests $H^2(G', \zz) = \zz^4$.  Because these relations are a subset of those used to present the square lattice space group in Ref.~\onlinecite{essin13}, it follows from Appendix A of that work that all $2^4$ possible choices of the $\sigma^m$'s indeed correspond to a factor set $\omega_m$, and indeed $H^2(G', \zz) = \zz^4$ (see also Appendix~\ref{subsec:pm-z2T} of this paper).  The fractionalization classes are thus parametrized by $[\omega_m]_{\zz} = (\sigma^m_{txty}, \sigma^m_{typx}, \sigma^m_{px}, \sigma^m_{txpx})$.

Now we need to compute $H^2_t(G', {\rm U}(1))$, noting that $t(P_x) = -1$, while $t(T_x) = t(T_y) = 1$.  The general form of the relations in a $t$-twisted projective representation is
\begin{eqnarray}
T^t_x T^t_y T^{t-1}_x T^{t-1}_y &=& \alpha_{txty} \label{eqn:example1-txtyeqn} \\
T^t_y P^t_x T^{t-1}_y P^{t}_x &=& \alpha_{typx} \\
(P^t_x)^2 &=& \alpha_{px} \label{eqn:example1-pxeqn} \\
T^t_x P^t_x T^t_x P^t_x &=& \alpha_{txpx} \label{eqn:example1-txpxeqn} \text{,}
\end{eqnarray}
where the $\alpha$'s take values in ${\rm U}(1)$.
If we redefine $T^t_y \to (\alpha_{typx})^{1/2} T^t_y$, this sends $\alpha_{typx} \to 1$, leaving the other $\alpha$'s unchanged.  In addition, the other $\alpha$'s are unchanged by redefinition of the other generators, so we have arrived at a convenient canonical gauge choice to describe a general $t$-twisted projective representation.

Next, conjugating Eq.~(\ref{eqn:example1-pxeqn}) by $P^t_x$, we find $\alpha_{px} \in \{ \pm 1 \}$, and similarly find $\alpha_{txpx} \in \{ \pm 1 \}$ by conjugating Eq.~(\ref{eqn:example1-txpxeqn}) by $T^t_x P^t_x$.  This suggests that $H^2_t(G', {\rm U}(1)) = {\rm U}(1) \times \zz \times \zz$, with the elements of the cohomology group parametrized by $[\omega]_{{\rm U}(1)} = (\alpha_{txty}, \alpha_{px}, \alpha_{txpx})$.

To verify this, we need to exhibit $t$-twisted projective representations that correspond to a generating set of $H^2_t(G', {\rm U}(1))$.  (It is enough to exhibit a generating set, because the corresponding factor sets can then be multiplied to obtain a factor set with arbitrary cohomology class.)   We introduce field annihilation (creation) operators $v_r$ ($v^\dagger_r$) for some fictitious particles residing on the sites $r = (x,y)$ of the square lattice.  Each $v_r$ is a two-component vector.  The generators are chosen to act on the field operators by
\begin{eqnarray}
T_x v_r T^{-1}_x &=& (\alpha_{txty})^{r_y} g_{tx} v_{r + \hat{x}}  \label{eqn:pm-txaction} \\
T_y v_r T^{-1}_y &=& g_{ty} v_{r + \hat{y}} \label{eqn:pm-tyaction} \\
P_x v_r P^{-1}_x &=& g_{px} v^\dagger_{P_x r}  \text{,} \label{eqn:pm-pxaction}
\end{eqnarray}
where $\alpha_{txty} \in {\rm U}(1)$, $P_x r = (-x,y)$, and $g_{tx}, g_{ty}, g_{px}$ are $2 \times 2$ unitary matrices.  Gauge transformations act on the field operators by
\begin{eqnarray}
{\cal G}[\lambda] v_r {\cal G}^{-1}[\lambda] &=& e^{i \lambda \sigma^z} v_r \\
{\cal G}[\lambda] v^\dagger_r {\cal G}^{-1}[\lambda] &=& e^{-i \lambda \sigma^z} v^\dagger_r \text{,}
\end{eqnarray}
for $\lambda \in {\rm U}(1)$, with $\sigma^z$ one of the $2 \times 2$ Pauli matrices, and where $v^\dagger_r$ denotes Hermitian conjugation of the components of $v_r$, but does not include transposition in the two-component space.  Choosing $g_{tx} = g_{ty} = g_{px} =1$ gives a continuous family of representations with $[\omega]_{{\rm U}(1)} = (\alpha_{txty},1,1)$.  Next, $\alpha_{txty} = 1$, $g_{ty} = i$, $g_{px} = i \sigma^y$, $g_{tx} = \sigma^z$ is a representation with $[\omega]_{{\rm U}(1)} = (1,-1,1)$.  (Here, again, $\sigma^x, \sigma^y, \sigma^z$ are the usual $2 \times 2$ Pauli matrices.)  Finally, $\alpha_{txty} = 1$, $g_{ty} = 1$, $g_{px} = \sigma^x$, $g_{tx} = \sigma^z$ has $[\omega]_{{\rm U}(1)} = (1,1,-1)$.  The factor sets of these three families of representations are a generating set for $H^2_t(G', {\rm U}(1)) = {\rm U}(1) \times \zz \times \zz$.

To find the map $\rho_2$, we begin with $[\omega_m]_{\zz} = (\sigma^m_{txty}, \sigma^m_{typx}, \sigma^m_{px}, \sigma^m_{txpx})$.  Viewing a corresponding projective representation as a   $t$-twisted ${\rm U}(1)$ projective representation, we can redefine $T^m_y \to (\sigma^m_{typx})^{1/2} T^m_y$, which sets $\sigma^m_{typx} \to 1$, and thus puts this projective representation in the canonical gauge described above.  Therefore, we have found
\begin{equation}
 (\alpha_{txty},\alpha_{px},\alpha_{txpx})  = \rho_2([\omega_m]_{\zz}) = (\sigma^m_{txty}, \sigma^m_{px}, \sigma^m_{txpx}) \text{.}
\end{equation}
This can be continuously deformed to the identity in $H^2_t(G', {\rm U}(1))$ if and only if $\sigma^m_{px} = \sigma^m_{txpx} = 1$.

Therefore, we have found that the fractionalization pattern $eC[\omega_e]m0[\omega_m]$, with $[\omega_m]_{{\zz}} = (\sigma^m_{txty}, \sigma^m_{typx}, \sigma^m_{px}, \sigma^m_{txpx})$, is anomalous if $\sigma^m_{px} = -1$ or $\sigma^m_{txpx} = -1$ (or both).  Equivalently, we can observe that anomaly-negative $[\omega_m]_{\zz}$ are given by $[\omega_m]_{{\zz}} = (\sigma^m_{txty}, \sigma^m_{typx},1,1)$, which form the subgroup ${\cal N} \subset H^2(G', \zz)$, with ${\cal N} = (\zz)^2$.

\subsection{$G = ({\rm U}(1) \rtimes \zz^T) \times p1$}
\label{subsec:p1}

The group $p1$ is the $d=2$ space group consisting only of translation symmetry.  Here, we consider this symmetry combined with time-reversal, which enters via the semi-direct product ${\rm U}(1) \rtimes \zz^T$.  This example is straightforward to analyze by following the steps in Sec.~\ref{subsec:pm} and Appendix~\ref{subsec:pm-z2T}, so we only quote the results.

We have $G' = p1 \times \zz^T$, which is generated by translations $T_x$, $T_y$, their inverses, and time reversal ${\cal T}$.  The relations are
\begin{eqnarray}
T_x T_y T^{-1}_x T^{-1}_y &=& 1 \\
{\cal T}^2 &=& 1 \\
{\cal T} T_x &=& T_x {\cal T} \\
{\cal T} T_y &=& T_y {\cal T} \text{.}
\end{eqnarray}
The $m$ particle symmetry fractionalization is specified by
\begin{eqnarray}
T_x^m T_y^m T^{m-1}_x T^{m-1}_y &=& \sigma^m_{txty} \\
({\cal T}^m)^2 &=& \sigma^m_T \\
{\cal T}^m T_x^m &=& \sigma^m_{Ttx} T_x^m {\cal T}^m \\
{\cal T}^m T_y^m &=& \sigma^m_{Tty} T_y^m {\cal T}^m  \text{,}
\end{eqnarray}
where the $\sigma^m$'s are $\zz$-valued phase factors.  The group of vison fractionalization classes is $H^2(G', \zz) = (\zz)^4$. The anomaly-negative fractionalization classes are those with $\sigma^m_{Ttx} = \sigma^m_{Tty} = 1$, and form the group ${\cal N} = (\zz)^2$.  We thus have an anomalous fractionalization pattern when $\sigma^m_{Ttx} = -1$, $\sigma^m_{Tty} = -1$, or both.  The disjoint sets of SPT phases distinguished by the anomaly test are labeled by elements of ${\cal S} = H^2(G', \zz) / {\cal N} = (\zz)^2$.

It is interesting to note that, in this case, we find anomalies involving time reversal that \emph{cannot} be understood in terms of ${\rm U}(1) \times \zz^T$ subgroups of $G$; these are type 3 anomalies as discussed in Sec.~\ref{sec:intro}.  These anomalies occur when one or more of $\sigma^m_{Ttx}$ or $\sigma^m_{Tty}$ are equal to $-1$.  It appears these anomalies cannot be understood in terms of dimensional reduction to $d=1$ SPT phases, as explained in Sec.~\ref{sec:1dspt}.

\section{Dimensional reduction viewpoint}
\label{sec:1dspt}

For type 1 and 2 anomalies (see Sec.~\ref{sec:intro}), which includes the examples described in Sections~\ref{sec:ex1} through~\ref{subsec:pm}, the flux-fusion anomaly test can be understood in terms of dimensional reduction to $d=1$ SPT phases. This viewpoint provides a different way of applying the anomaly test, which does not depend on some of the formalism introduced in Sec.~\ref{sec:ft-general}.  In particular, for the discussion below, we do not need the description of symmetry action on $\z_n$ fluxes $\Omega$ presented in Sec.~\ref{subsec:anomaly-test}.  Our discussion makes significant use of the results of Ref.~\onlinecite{zaletel15}, especially for the case of reflection symmetry.

We imagine putting the un-gauged $d=2$ SET phase with $\zz$ topological order on a cylinder with large but finite circumference, in such a way that the symmetry is preserved. The longitudinal dimension of the cylinder remains infinite.  We can approach the limit of an infinitely long cylinder either
via finite-length cylinders with open boundary conditions, or those with their ends identified periodically.
The minimally entangled states (MES) of the SET phase are those with definite anyon flux threaded along the cylinder (see Ref.~\onlinecite{zaletel15} for a more complete discussion). The cylinder is a $d=1$ system, gapped and symmetric. Therefore each of the MES is in a $d=1$ SPT phase.

\begin{figure}
\includegraphics[width=0.65\columnwidth]{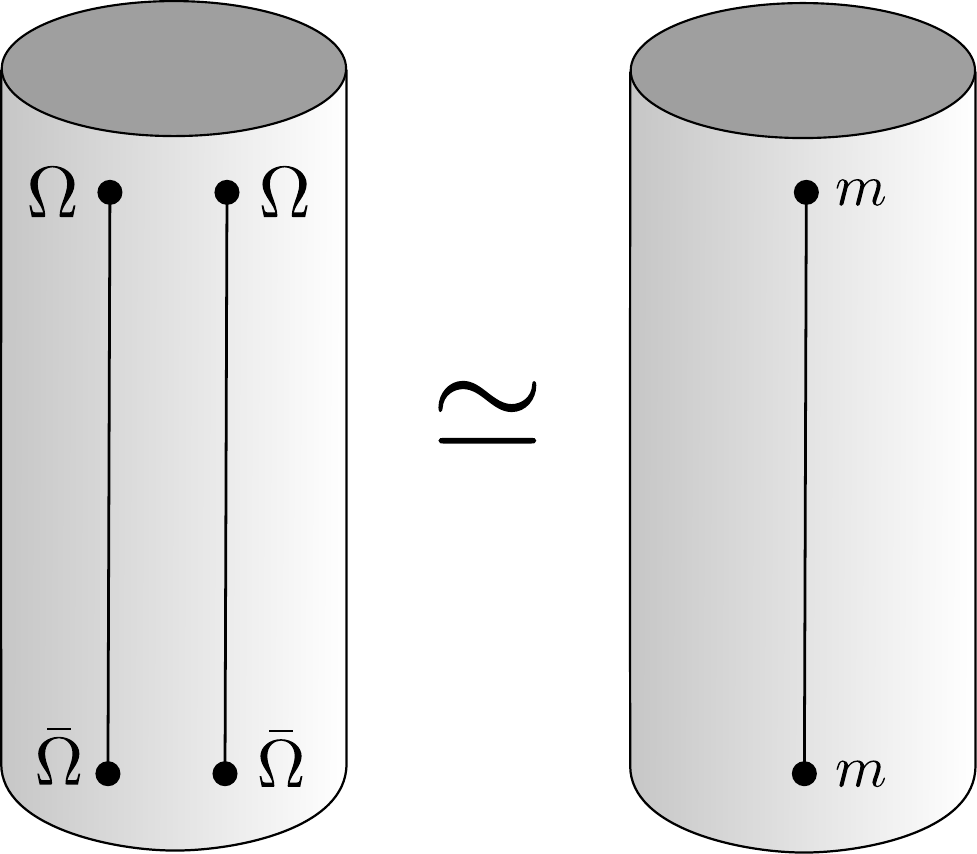}
\caption{Illustration of the flux-fusion anomaly test using dimensional reduction to a $d=1$ cylinder.  The $\zz \subset {\rm U}(1)$ flux $\Omega$ is threaded twice along the left-hand cylinder, while $m$ is threaded along the right-hand cylinder.  Due to the fusion rule $\Omega^2 = m$, these two systems are in the same $d=1$ SPT phase.}
\label{fig:cylinders}
\end{figure}

Starting from any of the MES, we imagine threading an anyon $a$ along the cylinder (see Fig.~\ref{fig:cylinders}).  This can be accomplished by a process creating an $a$-$\bar{a}$ pair in the bulk ($\bar{a}$ is the anti-particle of $a$), and then separating $a$ and $\bar{a}$ to infinity.  This maps the initial MES to a different MES, and so can be thought of as a mapping of $d=1$ SPT phases.    In Ref.~\onlinecite{zaletel15}, it was argued that this mapping between SPT phases only depends on the nature of symmetry action on $a$ in the original $d=2$ SET phase.

Now we consider the case of $G = {\rm U}(1) \times \zz^T$ symmetry, making our usual assumptions that the $e$-particle ($m$-particle) carries half-odd-integer (integer) ${\rm U}(1)$ charge.  Starting with some MES, we thread $m$ along the cylinder.  We use only the $\zz^T$ symmetry to analyze the dimensionally reduced $d=1$ SPT phases, so that there are two phases distinguished by a $\zz$ invariant, that corresponds to the presence or absence of Kramers doublet end states at the open boundaries \cite{gu09,pollmann12}.  If the $m$ is a Kramers singlet [$({\cal T}^m)^2 = 1$], then threading it along the cylinder leaves this $\zz$ invariant unchanged.  On the other hand, if $m$ is a Kramers doublet [$({\cal T}^m)^2 = -1$], threading it along the cylinder flips the $\zz$ invariant.  

To apply the anomaly test via dimensional reduction, we next introduce fluxes of the $\zz \subset {\rm U}(1)$ symmetry.  Note that we do not fully gauge the $\zz$ symmetry; it is enough to consider static flux defects of this symmetry, without introducing a dynamical gauge field.  We can thread the flux $\Omega$ along the cylinder, which amounts to introducing a flux defect $\Omega$ near one end, and the corresponding anti-defect $\bar{\Omega}$ near the other end.  Because the time reversal symmetry maps $\Omega \mapsto \Omega$ (as shown in Appendix~\ref{app:Omega-is-boson}), this can be done while preserving $\zz^T$, and threading $\Omega$ gives another map between $d=1$ SPT phases.  This map either flips the $\zz$ invariant or leaves it the same.

Finally, because $\Omega^2 = m$, threading $\Omega$ twice along the cylinder is the same as threading $m$, as illustrated in Fig.~\ref{fig:cylinders}.  But threading $\Omega$ twice must leave the $\zz$ SPT invariant unchanged, which means that threading $m$ must also leave this invariant unchanged.  This recovers the result that $m$ being a Kramers doublet is anomalous \footnote{This argument, and the corresponding argument for reflection symmetry, relies on the expectation that the effect of threading a flux or anyon on the $d=1$ SPT invariant does not depend on the MES one starts from before threading.  It can be argued this expectation holds by first noting the circumference of the cylinder can be taken as large as desired, so that we can think of the state as two-dimensional, even though ultimately we view it as a $d=1$ system to compute the SPT invariant.  Then two fluxes or anyons can be threaded along the cylinder in regions that are separated by a parametrically large distance, which corresponds to acting in these regions with string operators transporting the fluxes or anyons.  For $d=1$ SPT invariants of internal symmetry (such as time reversal), making an entanglement cut across a string exposes an anyon (or flux) that transforms projectively.  If we act with two well-separated strings and make an entanglement cut, the resulting states will involve a tensor product of two projective representations, which corresponds to adding SPT invariants in the usual way.  For reflection symmetry, the $d=1$ SPT invariant is simply the reflection eigenvalue of the string operator \cite{zaletel15}, and these eigenvalues add when well-separated strings act on the same MES.}.

In this argument, we used the fact that time reversal maps $\Omega \mapsto \Omega$.  It is important to note that this is not \emph{a priori} obvious simply because $\Omega$ is a $\pi$ flux of the ${\rm U}(1)$ symmetry, because $\Omega$ may be attached to an anyon under the action of symmetry,   as indeed occurs in the following example.  Instead, we need the considerations of Appendix~\ref{app:Omega-is-boson} to conclude $\Omega \mapsto \Omega$ under time reversal.

For $G = {\rm U}(1) \times \zz^P$ symmetry, a very similar discussion applies.  We choose the $\zz^P$ symmetry to exchange the two ends of the cylinder; that is, the $d=2$ reflection symmetry becomes $d=1$ reflection symmetry upon dimensional reduction.  SPT phases in $d=1$ protected by such $\zz^P$ symmetry are also characterized by a $\zz$ invariant \cite{gu09,pollmann12}.  Ref.~\onlinecite{zaletel15} argued that threading $m$ along the cylinder preserves this invariant if $(P^m)^2 = 1$, and flips the invariant if $(P^m)^2 = -1$.  Again, we consider the effect of threading a flux $\Omega$, of the $\zz \subset {\rm U}(1)$ symmetry, along the cylinder.  Because $\zz^P$ maps $\Omega$ to the anti-flux $\bar{\Omega} = \Omega^{3} = m \Omega$ (Appendix~\ref{app:Omega-is-boson}), this can be done while preserving the $\zz^P$ symmetry.  Therefore threading $\Omega$ either flips or preserves the $\zz$ SPT invariant.  At this point the discussion proceeds identically to the case of time reversal above, and we find that the symmetry fractionalization pattern with $(P^m)^2 = -1$ is anomalous.

This discussion also applies directly to the case of $G = {\rm U}(1) \times pm$ symmetry, because the anomalous symmetry fractionalization patterns found in Sec.~\ref{subsec:pm} are associated with two different $\zz^P$ subgroups of $pm$.  One of these is generated by $P_x$, and the other is generated by $T_x P_x$.  

Using the approach of Sec.~\ref{sec:ft-general}, we also find anomalies associated with the interplay between time reversal and other symmetries, that apparently cannot be understood from the dimensional reduction point of view.  These were designated type 3 anomalies in Sec.~\ref{sec:intro}, and arise when time reversal forms a semi-direct product with ${\rm U}(1)$ [\emph{i.e.} ${\rm U}(1) \rtimes \zz^T \subset G$].  They occur in the examples $G = ({\rm U}(1) \rtimes \zz^T) \times p1$ (Sec.~\ref{subsec:p1}) and $G = ({\rm U}(1) \rtimes \zz^T) \times pm$, $G = ({\rm U}(1) \rtimes \zz^T) \times p4mm$ (Appendix~\ref{sec:more-examples}).
For example, in each of these cases, $T_x^m$  (${\cal T}^m$) gives the action of translation in the $x$-direction (time reversal) on $m$-particles, and these generators obey the relation
\begin{equation}
{\cal T}^m T^m_x = \sigma^m_{Ttx} T^m_x {\cal T}^m \text{,}
\end{equation}
where we find that $\sigma^m_{Ttx} = -1$ is anomalous.  If we try to apply dimensional reduction here, we observe that time reversal maps the flux $\Omega$ to the anti-flux $\bar{\Omega}$, without exchanging the two ends of the cylinder.  This means that the state obtained upon threading $\Omega$ breaks time reversal, and is thus not a $d=1$ SPT phase preserving the symmetries involved in the anomaly.

\section{Bosonic topological crystalline insulators}
\label{sec:btci}

Here, we use the results from the flux-fusion anomaly test to identify and distinguish some $d=3$ bosonic TCIs.  We focus on the examples of $G = {\rm U}(1) \times \zz^P$ and $G = {\rm U}(1) \times pm$ symmetry discussed above in Sec.~\ref{sec:examples}; the latter example is sufficiently complex to illustrate the corresponding general results.
 The discussion for SPT phases with $G = {\rm U}(1) \times \zz^T$ symmetry entirely parallels that given in Sec.~\ref{sec:z2p-tci}, where we thus also comment on that case.  The focus is on understanding the extent to which information obtained from the anomaly test can distinguish the TCI phases identified, without using further information, while also illustrating what additional information can be used to make finer distinctions among phases.

\subsection{$G = {\rm U}(1) \times \zz^P$}
\label{sec:z2p-tci}

In Sec.~\ref{sec:ff-z2p}, we found that the symmetry fractionalization patterns $eCmP$ and $eCPmP$ are anomalous.  A coupled layer construction (see Ref.~\onlinecite{cwang13} and Appendix~\ref{app:coupled-layer}) shows that these fractionalization patterns -- indeed, \emph{any} fractionalization pattern -- can be realized as a surface of a $d=3$ SPT phase, or bosonic TCI.  Because each fractionalization pattern is anomalous, its corresponding SPT phase must be non-trivial.

Having established that the $eCmP$ and $eCPmP$ SPT phases are non-trivial, we would like to understand whether these phases are distinct from one another.  Moreover, as is well-known, SPT phases can be added together by combining together two decoupled systems, and observing that the combined system thus obtained is also an SPT phase.  This operation is expected to form an Abelian group.  We would also like to know how the $eCmP$ and $eCPmP$ SPT phases behave under this addition operation.

Let us consider adding together two copies of the $eCmP$ SPT phase.  This results in a surface with two decoupled surface SET phases, which we denote by $eCmP \oplus eCmP$.  Denoting the anyons in one SET phase by $e_1, m_1$, and in the other by $e_2, m_2$, we consider the result of condensing $e_1 e_2$ and $m_1 m_2$.  This condensation destroys the topological order, since all anyons of the $eCmP \oplus eCmP$ surface are either condensed, or are confined due to non-trivial mutual statistics with the condensate.  In addition, both the particles condensed have integer charge and $P^2 = 1$, so that they can be condensed without breaking any symmetries.  Therefore, we have obtained a gapped, symmetry-preserving, trivial surface, and the resulting SPT phase is the trivial phase.

The same conclusion clearly holds for the $eCPmP$ SPT phase.  Indeed, the conclusion holds for any $SPT$ phase with surface $\zz$ topological order (at least as long as the symmetry does not permute the anyon species).  To find SPT phases with order higher than two under addition, we would need to consider other types of topological order (\emph{e.g.} $\z_n$ topological order), or potentially those with symmetries permuting the anyon species.

Next, we consider adding together the $eCmP$ and $eCPmP$ SPT phases, obtaining a $eCmP \oplus eCPmP$ surface.  This surface can be simplified by condensing $m_1 m_2$, which again can be done without breaking any symmetries.  This results in a new surface SET phase with $\zz$ topological order, with anyons $e,m$, given in terms of the original anyons by $e = e_1 e_2$ and $m = m_1 \simeq m_2$.  Therefore, the fractionalization pattern after condensing $m_1 m_2$ is $ePmP$.  

The flux-fusion anomaly test provides no information about $ePmP$, so without additional information we cannot draw any further conclusions.  It has been shown via other methods that $ePmP$ is anomalous \cite{yqi15b, hsong16}, so the two SPT phases are different.

The flux-fusion anomaly test on its own allows us to distinguish a pair of SPT phases.  We can take this pair to be either the trivial phase and the $eCmP$ phase, or the trivial phase and the $eCPmP$ phase.  Both pairs form a $\zz$ subgroup group under addition of SPT phases; this is the result appearing in the fourth column of Table~\ref{tab:summary}.

The same discussion holds for $G = {\rm U}(1) \times \zz^T$, replacing ``$P$'' by ``$T$'' everywhere, so that we consider the non-trivial SPT phases with $eCmT$ and $eCTmT$ surfaces.  In this case, we note that $eTmT$ has also been argued to be anomalous \cite{vishwanath13,cwang13}.

\subsection{General results and $G = {\rm U}(1) \times pm$}
\label{sec:pm-tci}

We now consider bosonic TCIs for the general case of $G = {\rm U}(1) \rtimes G'$ symmetry,  using the example of  $G = {\rm U}(1) \times pm$ to illustrate the discussion.  First, we make some fixed but arbitrary choice for the fractionalization class of the $e$ particle, $[\omega_e]_{\zz} \in H^2(G', \zz)$, and we consider fractionalization patterns of the form
\begin{equation}
F^i = eC[\omega_e]m0[\omega^i_m] \text{,} \label{eqn:btci-fpatterns}
\end{equation}
where we have introduced an index $i$ to label the distinct vison fractionalization classes $[\omega^i_m]_{\zz} \in H^2(G', \zz)$.  Each fractionalization pattern $F^i$ corresponds to a bosonic TCI ($d=3$ SPT phase), for which it describes a surface SET phase (see Appendix~\ref{app:coupled-layer}).  In general, not all the $F^i$'s correspond to distinct or non-trivial SPT phases.  Formally, it will be convenient to refer to a map $\varphi : \{ F^i \} \to {\cal G}_{SPT}$, from the set of fractionalization patterns described in Eq.~(\ref{eqn:btci-fpatterns}), to the (Abelian) group of distinct $d=3$ SPT phases of the given symmetry, which we denote by ${\cal G}_{SPT}$.

Adding together the SPT phases corresponding to $F_i$ and $F_j$ gives the surface SET phase $F_i \oplus F_j$.  Labeling the $F_i$ anyons by $e_1, m_1$, and the $F_j$ anyons by $e_2, m_2$, this surface theory can be simplified by condensing $e_1 e_2$, which can be done without breaking symmetry, because both $e$ particles have the same fractionalization class.  The resulting surface theory has $\zz$ topological order, with anyons $e = e_1 \simeq e_2$ and $m = m_1 m_2$, so that we have
\begin{equation}
F_i \oplus F_j \simeq eC[\omega_e]m0[ \omega^i_m \omega^j_m] \text{.}
\end{equation}
We thus see that the fractionalization patterns add according to the same operation governing multiplication of $[\omega^i_m]_{\zz}$ in $H^2(G', \zz)$, and therefore the  set $\{ F^i \}$ can be viewed as a group isomorphic to $H^2(G', \zz)$.  In addition, this shows that the map $\varphi$ can be viewed as a group homomorphism $\varphi : H^2(G', \zz) \to {\cal G}_{SPT}$.

We know the group $H^2(G', \zz)$, and the flux-fusion anomaly test gives us some knowledge about the map $\varphi$.  The goal is to use this information to learn as much as possible about ${\cal G}_{SPT}$.   If $F^i$ is a fractionalization pattern known to be anomalous, then $\varphi(F^i) \neq 1$; that is, the corresponding SPT phase is non-trivial.  In the example $G = {\rm U}(1) \times pm$, recall that $[\omega_m]_{\zz} = (\sigma^m_{txty}, \sigma^m_{typx}, \sigma^m_{px}, \sigma^m_{txpx})$, and the corresponding SPT phase is non-trivial whenever $\sigma^m_{px} = -1$, or $\sigma^m_{txpx} = -1$, or both.

In addition, we would ideally like to know which fractionalization patterns are non-anomalous, and thus map to the trivial SPT phase.  Formally, the set of such fractionalization patterns is the kernel of $\varphi$, and is a subgroup of $H^2(G', \zz)$ denoted by $\operatorname{Ker} \varphi$.  In general, we do not know $\operatorname{Ker} \varphi$.  However, we do know which fractionalization patterns are anomaly-negative, these also form a subgroup denoted ${\cal N} \subset H^2(G', \zz)$.  In the present example, ${\cal N} = \zz \times \zz$ consists of those fractionalization classes of the form $[\omega_m]_{\zz} = (\sigma^m_{txty}, \sigma^m_{typx}, 1, 1)$.  In general, we have $\operatorname{Ker} \varphi \subset {\cal N} \subset H^2(G', \zz)$; that is, non-anomalous fractionalization patterns are a subgroup of anomaly-negative ones.

Now, we consider the group ${\cal S} = H^2(G', \zz) / {\cal N}$, elements of which are cosets of ${\cal N}$.  We will see that there are at least as many distinct SPT phases as there are elements of ${\cal S}$.  From the fact $\operatorname{Ker} \varphi \subset {\cal N} \subset H^2(G', \zz)$, it follows immediately that distinct elements of ${\cal S}$ map to disjoint sets of SPT phases in ${\cal G}_{SPT}$.\footnote{In detail, suppose that $F_i$ and $F_j$ are two fractionalization patterns corresponding to the same SPT phase, so that $\varphi(F_i) = \varphi(F_j)$.  Therefore, $F_i = F_j \oplus F_k$, where $F_k \in \operatorname{Ker} \varphi$.  Since $\operatorname{Ker} \varphi \subset {\cal N}$, we have $F_k \in {\cal N}$, and therefore $F_i$ and $F_j$ belong to the same coset of ${\cal N}$.  It follows that, if $F_i$ and $F_j$ belong to different cosets of ${\cal N}$, they must correspond to different SPT phases.}  The disjoint sets of SPT phases are thus labeled by elements of ${\cal S}$; this is the group that appears in the fourth column of Table~\ref{tab:summary}.

In the present example, ${\cal S} = \zz \times \zz$, and its elements $p_1, \dots, p_4$ are the four cosets
\begin{eqnarray}
p_1 &=& (1,1,1,1) \times {\cal N} \\
p_2 &=& (1,1,-1,1) \times {\cal N} \\
p_3 &=& (1,1,1,-1) \times {\cal N} \\
p_4 &=& (1,1,-1,-1) \times {\cal N} \text{.}
\end{eqnarray}
Each of these cosets corresponds to four different surface SET phases, depending on the element chosen from ${\cal N}$.  Surface SET phases belonging to the same coset may or may not correspond to the same SPT phase, but surface SET phases belonging to different cosets correspond to different SPT phases.  Therefore, in this example, there are at least four bosonic TCIs (one of which is trivial).

To obtain additional information, we need to determine $\operatorname{Ker} \varphi \subset {\cal N}$.  For example, in some cases it may be true that $\operatorname{Ker} \varphi = {\cal N}$, if all the anomaly-negative fractionalization patterns are in fact non-anomalous.  The number of distinct SPT phases obtained from each coset of ${\cal N}$ is $| {\cal N} | / | \operatorname{Ker} \varphi |$.  

Throughout this discussion, we have fixed the $e$ particle fractionalization class, but it is also natural to consider fractionalization pattens with different $e$ particle fractionalization classes, as we did for the case of $G = {\rm U}(1) \times \zz^P$ symmetry in Sec.~\ref{sec:z2p-tci}.
Suppose we add together $F^1 = eC[\omega^1_e]m0[\omega^1_m]$ and $F^2 = eC[\omega^2_e]m0[\omega^2_m]$, to obtain a $F^1 \oplus F^2$ surface, where $[\omega^1_e]_{\zz} \neq [\omega^2_e]_{\zz}$.  If it happens that $[\omega^1_m]_{\zz} = [\omega^2_m]_{\zz}$, we can condense $m_1 m_2$ to obtain a surface SET phase with fractionalization pattern $e0[\omega^1_e \omega^2_e]m0[\omega^1_m ]$.  Here, none of the anyons carry fractional ${\rm U}(1)$ charge.  There is not a general understanding of which such fractionalization patterns are anomalous.  However, many such patterns can be explicitly constructed strictly in $d=2$, using, for instance, exactly solvable models or parton gauge theory \cite{hsong15}.  This has been done for square lattice space group symmetry using exactly solvable models \cite{hsong15}, and could be done for other symmetry groups as needed. In addition, in the case of reflection symmetry, the $ePmP$ fractionalization pattern is anomalous \cite{yqi15b,hsong16}.  Results along these lines can thus be used to obtain further information on bosonic TCI phases, which we leave for future work.

\section{Anomalous Superfluids}
\label{sec:anomalous-superfluids}

Our results on anomalous symmetry fractionalization patterns can also be used to obtain anomalous surface superfluid states of $d=3$ bosonic TCIs.  The anomalous nature of these superfluids arises from the symmetry fractionalization of vortices, which transform projectively under $G'$ symmetry.  It is particularly useful to identify such anomalous superfluids, because it is straightforward to proceed from their formal description to explicit field theories, which can be used to describe not only the surface superfluid phase, but also nearby surface phases and phase transitions.

A related prior work is  Ref.~\onlinecite{vishwanath13}, which studied $d=3$ bosonic topological insulators (with ${\rm U}(1)$ and time reversal symmetry) by constructing field theories for anomalous surface superfluids.  Some of those superfluids are  characterized by non-trivial vortex symmetry fractionalization, and were argued to be anomalous based on the impossibility of realizing a trivial, gapped surface in an explicit dual vortex field theory for the surface.  Our results are complementary, allowing one to establish that some dual vortex field theories for $d=2$ superfluids are anomalous \emph{without} a detailed and potentially subtle analysis of possible phases.

As usual, we consider a surface SET phase with $\zz$ topological order and symmetry fractionalization pattern $eC[\omega_e]m0[\omega_m]$, but making the additional assumption that $[\omega_e]_{\zz} = 1$; that is, the $e$ particle transforms trivially under $G'$.  It is therefore possible to condense an $e$ particle carrying ${\rm U}(1)$ charge $1/2$ and obtain a superfluid, with spontaneously broken ${\rm U}(1)$ symmetry, where $G'$ symmetry is preserved.  Under these circumstances, the vison of the SET phase becomes the elementary $2\pi$-vortex of the superfluid \cite{senthil00}, so the vortex thus inherits the $G'$ transformation properties of the $m$ particle.  If we start with an anomalous surface SET phase, the resulting surface superfluid must also be anomalous, because both are surfaces of the same non-trivial SPT phase.  

It is well-known that vortices can transform projectively under symmetry \cite{senthil04a,senthil04b,balents05}.  This can be seen conveniently in the dual description of a superfluid, where vortices carry the charge of a non-compact ${\rm U}(1)$ gauge field, for which the photon is nothing but the superfluid sound mode.  Symmetry operations acting on vortices can thus be augmented by ${\rm U}(1)$ gauge transformations, and the symmetry acts projectively.  In fact, in more detail, vortices transform as a $t$-twisted ${\rm U}(1)$ projective representation of $G'$, and vortex fractionalization classes are thus elements of $H^2_t(G', {\rm U}(1))$.  This can be seen by introducing field operators for the vortices, as was done in Sec.~\ref{subsec:anomaly-test} for the $\z_n$ flux $\Omega$.  Here, the field operators are labeled by an integer, which is simply the vorticity, and we have ${\rm U}(1)$ gauge transformations rather than $\z_{2n}$ gauge transformations.  Otherwise, the discussion entirely parallels that given in Sec.~\ref{subsec:anomaly-test}.

Because a $m$ particle becomes a vortex upon entering the superfluid phase, the vortex fractionalization class $[\omega_v] \in H^2_t(G', {\rm U}(1) )$ is given by
\begin{equation}
[\omega_v] = [\omega_m]_{{\rm U}(1)} = \rho_2( [\omega_m]_{\zz} ) \text{.}  \label{eqn:vf-mf}
\end{equation}
Remarkably, $[\omega_m]_{{\rm U}(1)}$, which provides a simple mathematical characterization of which fractionalization patterns are anomaly-negative, also has direct physical meaning as the fractionalization class of vortices in the superfluid phase.  This allows us to establish that superfluids with certain vortex fractionalization classes are anomalous.

This conclusion is bolstered by proceeding in the reverse direction; that is, we can start with a superfluid, and condense pairs of vortices to obtain a SET phase with $\zz$ topological order.  This can be done without breaking symmetry as long as the vortex fractionalization class satisfies $[\omega_v]^2 = 1$, so that vortex pairs transform trivially.  The fractionalization class of the $m$ particle must satisfy Eq.~(\ref{eqn:vf-mf}), but we note this does not completely determine $[\omega_m]_{\zz}$ given $[\omega_v]$.  We expect that, given $[\omega_v]$, the different possible choices of $[\omega_m]_{\zz}$ correspond to inequivalent condensates of paired vortices; detailed study of this point is left for future work.

Which vortex fractionalization classes are anomalous?  We answer this question for the example of $G = {\rm U}(1) \times pm$ symmetry, and then make some comments on the answer more generally.  First, Eq.~(\ref{eqn:vf-mf}) implies that $[\omega_v] = (\alpha_{txty} = \pm 1, \alpha_{px}, \alpha_{txpx})$ is anomalous whenever $\alpha_{px} = -1$, $\alpha_{txpx} = -1$, or both, because these $[\omega_v]$ are obtained from anomalous $[\omega_m]_{\zz}$.

We can also find more anomalous vortex fractionalization classes, by starting with an anomalous superfluid, and adding a layer of $d=2$ superfluid.  This can be done by first assuming that each layer has an independent ${\rm U}(1)$ symmetry, and then breaking the resulting ${\rm U}(1) \times {\rm U}(1)$ down to ${\rm U}(1)$; that is, we allow unit charge excitations to tunnel between the two layers.  Before breaking the ${\rm U}(1) \times {\rm U}(1)$ symmetry, each layer has independent vortices, schematically labeled by $v_1$ and $v_2$.  After breaking the symmetry, $v_1$ and $v_2$ vortices are confined together, so that the new superfluid state has vortices $v = v_1 v_2$.  This results is a modified vortex fractionalization class $[\omega_v] = [\omega_{v_1}] [\omega_{v_2}]$.

In the present example, vortex fractionalization classes $[\omega_v] = (\alpha_{txty},1,1)$, with $\alpha_{txty}$ an arbitrary ${\rm U}(1)$ phase, can occur in $d=2$.  Writing $\alpha_{txty}  = e^{2\pi i \bar{n}}$, such superfluids occur for bosons on the square lattice at filling $\bar{n}$.  This is easily seen via straightforward application of boson-vortex duality to such a model; briefly, the vortices feel the background boson density as a magnetic flux of $2\pi \bar{n}$ per plaquette, and thus transform projectively under translation symmetry.  By adding layers of such non-anomalous superfluids, we can see that the only non-anomalous vortex fractionalization classes are $[\omega_v] = (\alpha_{txty},1,1)$, and all others are anomalous.  

This result can be stated in a more general fashion, namely, $[\omega_v]$ is non-anomalous if and only if it can be continuously deformed to the identity element of $H^2_t(G', {\rm U}(1))$.  We conjecture that this result holds for all symmetries $G = {\rm U}(1) \rtimes G'$, but we do not have a general argument, for two reasons.  First, we note that the discussion above relied on being able to find all non-anomalous vortex fractionalization classes for $G = {\rm U}(1) \times pm$ symmetry. Second, in this case, $H^2_t(G', {\rm U}(1))$ was a product of ${\rm U}(1)$ and $\zz$ factors, containing no $\z_n$ factors for $n > 2$.  If there were such $\z_n$ factors, some vortex fractionalization classes could not be obtained by condensing the $e$ particle of a SET phase with $\zz$ topological order.  It is likely that this could be handled by generalizing the flux-fusion approach to SET phases with $\z_n$ topological order, a problem that is left for future work.

We conclude this section, and illustrate the utility of the present results, by describing the construction of a field theory for the surface of a bosonic TCI with symmetry $G = {\rm U}(1) \times pm$. We work in a dual description of the surface superfluid, introducing a two-component complex field $\Phi_v$ for the superfluid vortices.  $\Phi_v$ carries unit charge of the ${\rm U}(1)$ gauge field $a_{\mu}$, in terms of which the global ${\rm U}(1)$ current is $j_{\mu} = \epsilon_{\mu \nu \lambda} \partial_{\nu} a_{\lambda} / 2 \pi$.  We work in Euclidean space time.  We choose the $pm$ symmetry generators to act on the vortices as follows:
\begin{eqnarray}
T_x, T_y : \Phi_v(x,y, \tau) &\to& \Phi_v(x,y,\tau) \\
P_x : \Phi_v(x,y,\tau) &\to& (i \sigma^y) \Phi_v(-x, y, \tau) \text{.}
\end{eqnarray}
As usual, we neglect the action of lattice translations $T_x$ and $T_y$ on the spatial position of the continuum field $\Phi_v$, as this only leads to subleading gradient terms.  The presence of the Pauli matrix $i \sigma^y$ in the action of $P_x$ implies that $P_x^2 = -1$ acting on $\Phi_v$, and we have the vortex symmetry fractionalization $[\omega_v] = (1,-1,-1)$.  This fractionalization class is anomalous, so we are not describing a $d=2$ superfluid, but rather the surface of a bosonic TCI.  We note that we chose $\Phi_v$ as a two-component field in order to realize this non-trivial fractionalization class.

The Lagrangian is
\begin{equation}
{\cal L} = | (\partial_{\mu} + i a_{\mu} ) \Phi_v |^2 + \frac{K_s}{2} \sum_{\mu} j_{\mu}^2 + V( \Phi_v ) + \cdots \text{.}
\end{equation}
Here, the first term is the vortex kinetic energy,  the second term controls the superfluid stiffness, and $V( \Phi_v)$ is a potential for the vortex field, whose form is dictated by gauge invariance and the action of the microscopic symmetries.  The ellipsis includes various other perturbations allowed by symmetry.  This field theory can be used to study the superfluid phase itself (where $\Phi_v$ is massive), neighboring phases described as condensates of $\Phi_v$ (which break lattice symmetries), surface SET phases where $\Phi_v^2$ is condensed, and transitions among these phases.

\section{Discussion and outlook}
\label{sec:discussion}

We introduced the flux-fusion anomaly test, a method to detect some anomalous symmetry fractionalization patterns in $d=2$ SET phases with $\zz$ topological order.  This constrains the possible physical properties of strictly $d=2$ SET phases, and is a step toward the full classification of such phases in the presence of crystalline symmetry.  In addition, the same results allow us to identify and distinguish some $d=3$ SPT phases via their surface theories, including some bosonic TCIs that have not previously been identified to our knowledge.  For some of the bosonic TCIs, we identified not only surface SET phases with anomalous symmetry fractionalization, but also anomalous surface superfluids distinguished by the projective symmetry transformations of vortex excitations.

We note that the flux-fusion anomaly test is closely related but not equivalent to a ``monopole tunneling'' approach developed in~\cite{metlitski13} and used in~\cite{cwang13}.  In this approach, one considers a $d=3$ SPT phase with ${\rm U}(1)$ symmetry, gauges the ${\rm U}(1)$ symmetry, and studies magnetic monopole excitations in the bulk.  If one has ${\rm U}(1) \times \zz^T$ symmetry, the monopole can be a Kramers doublet, indicating the bulk SPT phase is non-trivial. Tunneling a monopole into the bulk through a superfluid surface leaves  a vortex behind on the surface, which must also be a Kramers doublet.  Condensing double vortices on the surface leads to the $eCmT$ state, which is thus a surface SET of a non-trivial SPT phase.

Using the description of symmetry action on vortices given in Sec.~\ref{sec:anomalous-superfluids}, very similar reasoning can be applied to the symmetries considered in this paper, and the same anomalies we find can presumably be diagnosed.  However, the two approaches are not equivalent.  In particular, the flux-fusion method is more general as it can be applied for discrete symmetries, \emph{e.g.} $G = \z_n \rtimes G'$, as mentioned below.

In this paper, we focused primarily on symmetries of the form $G = {\rm U}(1) \times G_{{\rm space}}$ and $G = [ {\rm U}(1) \rtimes \zz^T ] \times G_{{\rm space}}$, where $G_{{\rm space}}$ is a $d=2$ space group.  The latter symmetry is particularly relevant for systems of bosons.  We did not consider the very important class of symmetries $G = {\rm U}(1) \times \zz^T \times G_{{\rm space}}$, which are relevant for spin systems with continuous spin rotation symmetry.  For example, when ${\rm U}(1) \subset {\rm SO}(3)$, these are the symmetry groups of Heisenberg spin models.  The reason for this omission is a surprising finding that complicates application of our anomaly test:  for these symmetries, it is sometimes impossible to gauge $\z_n \subset {\rm U}(1)$ without breaking $\zz^T \times G_{{\rm space}}$ symmetry, even in strictly two dimensions \cite{tcharge}.  This can occur when some lattice sites transform in a projective representation of the on-site ${\rm U}(1) \times \zz^T \subset G$ symmetry; for example, when ${\rm U}(1) \subset {\rm SO}(3)$, this means that there are $S = 1/2$ or other half-odd-integer spins in the system.  Naively, it would appear the anomaly test is less useful for these symmetries, but, remarkably, it turns out that this obstruction to gauging the symmetry makes the anomaly test significantly more powerful.  These results will be presented in a forthcoming work \cite{tcharge}.

More generally, to which symmetry groups does the flux-fusion anomaly test apply?  One point is that ${\rm U}(1)$ symmetry is not required, and it is simple to generalize the results of this paper to symmetries $G = \z_n \rtimes G'$ in a straightforward manner. This works as long as $n$ is even, so that we can sensibly choose $e$ to carry half-charge of $\z_n$, and as long as the $G'$ symmetry constrains the $\z_n$ flux to be a boson.  We note that, if $G = \zz \times G'$, the $\zz$ flux is always a boson, independent of $G'$ (see Sec.~\ref{subsec:gauging}).  In principle, we can also consider $G = G_o \rtimes G'$, where $G_o$ is some finite, on-site, unitary symmetry.  In practice, in the latter case, one generally obtains a non-Abelian gauged SET phase, which can be expected to increase the complexity of analysis required.

A related point is that the requirement that symmetry fluxes are bosons is not fundamental, but is rather imposed because it simplifies the analysis.  For on-site, unitary symmetries that do not permute the anyons of the gauged SET phase, we believe it likely this requirement plays no role and can simply be ignored.  More generally, one needs a description of the action of symmetry on non-bosonic anyons, which is subtle and not yet fully understood for crystalline symmetry \cite{essin13,ymlu14,yqi15b,zaletel15}.  However, we expect that the necessary theory will become available with further progress, in which case it can be applied to broaden the applicability of the flux-fusion anomaly test.

It is also interesting to consider generalizing the flux-fusion anomaly test to other topological orders.  The basic idea of the anomaly test is, given an action of symmetry on the anyons of an un-gauged SET phase, to determine whether this action can be extended consistently to symmetry fluxes.  This idea applies more generally to SET phases with topological orders and symmetries beyond those considered here, although we do not expect our detailed analysis to apply in general.  For on-site, discrete, unitary symmetries, the framework of $G$-crossed tensor category provides a comprehensive description of SET phases \cite{barkeshli14} and a systematic means of detecting anomalies \cite{barkeshli14,chen14}.  For  symmetries where both approaches apply, the flux-fusion anomaly test as developed in this paper is certainly less general than the $G$-crossed tensor category approach, but it has the advantage of identifying some anomalies in a physically intuitive way. Moreover, without the need to introduce fluxes for all symmetries, the flux-fusion approach can be easily applied to continuous, anti-unitary and spatial symmetry, as illustrated by the examples discussed in this paper.

The examples studied in this paper can be analyzed without resort to $G$-crossed tensor category theory due to the simplicity of the topological orders involved. The magnetic sectors of the $\mathbb{Z}_2$ topological order and the gauged $\mathbb{Z}_{2n}$ theory have trivial $F$ and $R$ matrices. Therefore, when analyzing their transformation under symmetry, we do not need to worry about the ``gauge transformation'' on fusion spaces, as discussed in Eq.~58 of Ref.~\onlinecite{barkeshli14}. This greatly simplifies the mathematical structure involved, and the flux fusion procedure as discussed in this paper can be implemented.

General SET phases can have nontrivial $F$ and $R$ matrices, and it is important to take  ``gauge transformations'' into account when analyzing symmetry action. To avoid this complexity, we can restrict to the case where the symmetry flux $\Omega_h$ (for $h \in G$) remains invariant under the symmetry action of $g \in G$. That is, we require (1) $h$ commutes with $g$, so that $\Omega_h$ remains the same symmetry flux, and also (2) $\Omega_h$ remains in the same topological sector and is not attached to an anyon under the action of $g$. The second condition can be violated when $g$ and $h$ act non-commutatively on the anyons. For example, in the projective semion example of Ref.~\onlinecite{chen14}, with $G = \zz \times \zz$ symmetry, the two $\zz$ symmetries anti-commute with each other on the semion, and the flux of one $\zz$ is glued to a semion under the action of the other $\zz$. The $F$ and $R$ matrices are non-trivial in this example, so we do not expect a straightforward generalization of the flux fusion method described in this paper to apply.

When the above two conditions are satisfied, symmetry $g$ has a local action on $\Omega_h$ and we can talk about the symmetry fractionalization of $g$ on $\Omega_h$ without worrying about  ``gauge transformations.''  Here we remark that $g$ and $h$ can be the same type of symmetry operation.  More precisely, in the main text we only discussed cases where $g$ and $h$ lie in two different factors of a semidirect product.  However,  this is not necessary, and the flux fusion idea can apply even when $g = h$.  We discuss such an example in a  study of $d=3$ SET phases \cite{3DSET}.

Beyond the anomaly test itself, one natural direction for further studies is to develop an understanding of the physical properties of the bosonic TCIs we have identified.  In light of prior work on bosonic topological insulators with ${\rm U}(1)$ and time reversal symmetry \cite{vishwanath13}, we expect that the surface dual vortex field theories discussed in Sec.~\ref{sec:anomalous-superfluids} will be particularly useful in this regard.  Along the same lines, it will be interesting to look for simple, physically reasonable models realizing bosonic TCIs.

We also hope that our results on bosonic TCIs will be useful as a stepping stone to identify and perhaps classify TCIs of interacting electrons.  As has been established for electronic topological insulators [with ${\rm U}(1)$ and time reversal symmetry], there are non-trivial electronic topological phases that can be understood by forming composite bosonic particles out of electrons (Cooper pairs, or spins), and putting these objects into a bosonic SPT phase \cite{cwang14}.  This is an important part of the classification of interacting electronic topological insulators given in Ref.~\onlinecite{cwang14}.

{\emph{Note added.}  During the review process of this paper, some closely related work has appeared.  In particular, Ref.~\onlinecite{yqi15c} extended the flux-fusion anomaly test to $\zz$ spin liquids with ${\rm SO}(3)$ spin rotation symmetry, and showed that the vison symmetry fractionalization in $S = 1/2$ Heisenberg models on square and kagome lattices is completely fixed.  References~\onlinecite{zaletel16} and~\onlinecite{cincio15} adapted and used flux-fusion to constrain the symmetry fractionalization of the chiral spin liquid phase of the kagome Heisenberg model.  Another related development is the work of Ref.~\onlinecite{hsong16}, which presented an approach to classify SPT phases protected by point group symmetry based on a kind of dimensional reduction; we anticipate this approach can be generalized to provide an alternate characterization and more complete classification of the bosonic TCIs identified here.

\acknowledgments{M.H. is grateful to T. Senthil for several conversations that helped inspire some of these ideas, and to Olexei Motrunich for a useful discussion.  X.C. would like to thank Meng Cheng for helpful discussions. We thank Zhenghan Wang for useful correspondence.   M.H. was supported by the U.S. Department of Energy, Office of Science, Basic Energy Sciences, under Award \# DE-FG02-10ER46686 and subsequently under Award \# DE-SC0014415,
and by Simons Foundation grant No.  305008 (sabbatical support). X.C. is supported by the Caltech Institute for Quantum Information and Matter and the Walter Burke Institute for Theoretical Physics. Publication of this article was funded in part by the University of Colorado Boulder Libraries Open Access Fund.}

\appendix

\section{Procedure to gauge $\z_n \subset {\rm U}(1)$ symmetry}
\label{app:gauging-procedure}

Here, we consider symmetry groups $G = ({\rm U}(1) \rtimes \zz^T) \times G_s$, where $G_s$ is a $d=2$ space group, and describe an explicit procedure to gauge the $\z_n \subset {\rm U}(1)$ symmetry.  In particular, we verify that this can be done while preserving $G' = \zz^T \times G_s$ symmetry.  We also discuss the case of $G = {\rm U}(1) \times \zz^T$, giving a procedure to gauge $\z_n \subset {\rm U}(1)$ while preserving $\zz^T$.  While these conclusions may appear obvious, they do not hold in general for other symmetry groups (in particular, for $G = {\rm U}(1) \times \zz^T \times G_s$).  This has interesting consequences that will be explored in a future publication \cite{tcharge}.

First we discuss the case $G = ({\rm U}(1) \rtimes \zz^T) \times G_s$.  We consider a bosonic model defined on a lattice with sites $r$, which is invariant under the space group symmetry $G_s$.  Each $g \in G_s$ acts on lattice sites, which we write formally as $r \mapsto g r$.  There is a Hilbert space ${\cal H}_r$ associated with each lattice site, and the full Hilbert space is the tensor product ${\cal H} = \otimes_r {\cal H}_r$.  Because the ${\rm U}(1)$ symmetry is on-site, for each lattice site $r$ there is a charge density operator $N_r$ with integer eigenvalues.  Because time reversal forms a semidirect product with the ${\rm U}(1)$, we have
\begin{equation}
{\cal T} N_r {\cal T}^{-1} = N_r \text{.}
\end{equation}
In general, we might wish to allow for a shift $N_r \to N_r + \delta N_r$ under time reversal.  But, since we assume the ground state is invariant under ${\cal T}$, we must have $\langle N_r \rangle = \langle N_r \rangle + \delta N_r$, and $\delta N_r = 0$.
Moreover, the space group operation $g \in G_s$ is represented by $U_g$, and acts on the charge density by
\begin{equation}
U_g N_r U^{-1}_g = N_{g r} \text{.}
\end{equation}

To gauge the $\z_{n} \subset {\rm U}(1)$ symmetry, we introduce $\z_n$ electric field and vector potential operators, that reside on oriented links $\ell = (r,r')$, where each link joins a pair of lattice sites $r$ and $r'$.  The set of links is chosen to make the lattice into a connected graph that is invariant under space group symmetry; for example, choosing links to join nearest-neighbor sites is sufficient in many cases.  

The electric field $e_\ell$ and vector potential $a_\ell$ act on the Hilbert space of link $\ell = (r,r')$, which we choose to be $n$-dimensional with basis $\{ |0 \rangle, |1 \rangle, \dots, | n-1 \rangle \}$.  The link operators are defined by
\begin{eqnarray}
a_{\ell} | k \rangle &=& \exp \Big( \frac{2 \pi i k}{n} \Big) | k \rangle \\
e_{\ell} | k \rangle &=& | (k + 1) \!\!\!\! \mod n \rangle \text{.}
\end{eqnarray}
These lattice vector fields are oriented, so that if $\bar{\ell} = (r', r)$ is $\ell$ with reversed orientation, then $e_{\bar{\ell}} = e^\dagger_{\ell}$ and $a_{\bar{\ell}} = a^\dagger_{\ell}$.

We impose the Gauss' law constraint
\begin{equation}
\prod_{\ell \sim r} e_{\ell} = \exp\Big[ \frac{2 \pi i}{n} N_r  \Big] \text{,}  \label{eqn:zn-gauss-law}
\end{equation}
where the product is over those links $\ell$ that join $r$ to other sites, with orientation pointing away from $r$.  Choosing time reversal and space group operations $g \in G_s$ to act on $e_{r r'}$ by
\begin{eqnarray}
{\cal T} e_{r r'} {\cal T}^{-1} &=& e^{\dagger}_{r r'} \\
U_g e_{r r'} U^{-1}_g &=& e_{gr, gr'} \text{,}
\end{eqnarray}
we see that the Gauss' law constraint respects the $G'$ symmetry.  In addition, the Hamiltonian has to be made gauge-invariant via the minimal coupling prescription, which can be done while respecting $G'$.

Now we consider the case $G = {\rm U}(1) \times \zz^T$.  Again we have a lattice with sites $r$; because there is no space group symmetry, the lattice does not have to satisfy any symmetry conditions.  Each site again has a charge density operator $N_r$ with integer eigenvalues.  Time reversal now acts by
\begin{equation}
{\cal T} N_r {\cal T}^{-1} = \delta N_r - N_r \text{,}
\end{equation}
where $\delta N_r$ must be an integer. By making constant integer shifts of $N_r$, we may choose $\delta N_r = 0,1$.  Next, by combining pairs of sites together as needed, and making further integer shifts of $N_r$, we can set $\delta N_r = 0$.

Introducing $\z_n$ electric fields and vector potentials as above, and imposing the Gauss' law Eq.~(\ref{eqn:zn-gauss-law}), we choose time reversal to act on the electric field by
\begin{equation}
{\cal T} e_{r r'} {\cal T}^{-1} = e_{r r'} \text{.}
\end{equation}
With this choice, the Gauss' law constraint respects the $\zz^T$ symmetry, as desired.

\section{Conditions under which $\Omega$ is a boson, and permutation of anyons in the gauged SET phase}
\label{app:Omega-is-boson}

Here, we show that the $\z_n$ symmetry flux $\Omega$ is a boson in the gauged SET phase whenever time reversal or reflection symmetry is present.  We also show that these operations either map $\Omega \mapsto \Omega$, or $\Omega \mapsto \Omega^{2n-1}$, depending on whether they commute with the ${\rm U}(1)$ symmetry.

The starting point for the analyses below are the fusion rules and statistics of the gauged SET phase.  According to the discussion of Sec.~\ref{subsec:gauging}, the fusion rules are
\begin{eqnarray}
Q^n &=&  1 \\
e^2 &=& Q \\
m^2 &=& 1 \\
\Omega^n &=& m Q^k  \text{,}
\end{eqnarray}
and the statistics are specified by
\begin{eqnarray}
\theta_e &=& \theta_m = 0  \\
\Theta_{e,m} &=& \pi  \\
\theta_Q &=& \Theta_{e,Q} = \Theta_{m,Q} = 0  \\
\Theta_{Q,\Omega} &=& \frac{2\pi}{n}  \\
\Theta_{e,\Omega} &=& \frac{\pi}{n}   \\
\Theta_{m, \Omega} &=& 0   \\
\theta_{\Omega} &=& \frac{\pi k}{n^2}  \text{.}
\end{eqnarray}
Here, $0 \leq k < n$ is an even integer.

The statistics of the gauged SET phase must obey certain conditions in the presence of time reversal or reflection symmetry.  For $\zz^T$ time reversal symmetry generated by ${\cal T}$, we write the action of ${\cal T}$ on some anyon $a$ in the gauged SET phase as ${\cal T} \star a$.  The statistics must satisfy
\begin{eqnarray}
\theta_{{\cal T} \star a} &=& - \theta_a \label{eqn:Ta} \\
\Theta_{ {\cal T} \star a, {\cal T} \star b } &=& - \Theta_{a, b} \text{.} \label{eqn:Tab}
\end{eqnarray}
These relations hold because the time reversed (clockwise) exchange process with time reversed anyons must give the same result as the ordinary exchange process before time reversal.

Next, under $\zz^P$ reflection symmetry generated by $P$, we denote the action of $P$ on $a$ by $P \star a$.  Equations~(\ref{eqn:Ta}) and~(\ref{eqn:Tab}) again hold, simply replacing $T$ by $P$.  This is the case because a counterclockwise exchange process is mapped to a clockwise one under $P$.

We will use these relations to show that $\Omega$ is a boson whenever $\zz^T$ or $\zz^P$ symmetry is present.  There are four cases to consider, where $G$ contains a subgroup ${\rm U}(1) \times \zz^T , {\rm U}(1) \times \zz^P, {\rm U}(1) \rtimes \zz^T$ or ${\rm U}(1) \rtimes \zz^P$.  We handle these cases one by one:

\emph{Case 1}: $G$ contains a subgroup ${\rm U}(1) \times \zz^T$.

Because ${\cal T}$ reverses the sign of ${\rm U}(1)$ charge, in the gauged SET phase we have
\begin{equation}
{\cal T} \star Q = \bar{Q} = Q^{n-1} \text{.}
\end{equation}
To determine the action of ${\cal T}$ on $e$ and $m$, note that ${\cal T}$ leaves these anyons invariant in the un-gauged SET phase, but it reverses their ${\rm U}(1)$ symmetry charges.  The $e$ sector of the gauged SET phase consists of those $e$ particles of the un-gauged SET phase whose ${\rm U}(1)$ charge modulo $n$ is $1/2$.  Similarly, the $m$ sector in the gauged SET phase consists of those $m$ particles in the un-gauged SET phase with the ${\rm U}(1)$ charge $0 \operatorname{mod} n$.  Therefore, we have
\begin{eqnarray}
{\cal T} \star e &=& \bar{Q} e \\
{\cal T} \star m &=& m \text{.}
\end{eqnarray} 

Now, let $\Omega' \equiv {\cal T} \star \Omega$.  In general, we can write
\begin{equation}
\Omega' = e^p \Omega^q \text{,}
\end{equation}
for integers $0 \leq p, q < 2n - 1$ that we will determine.  This is a unique parametrization of all $(2n)^2$ anyons in the gauged SET phase.  

Using Eq.~(\ref{eqn:Tab}),
\begin{equation}
0 = \Theta_{m, \Omega} = - \Theta_{m, \Omega'} = - p \Theta_{m,e} = - p \pi \text{.}
\end{equation}
This implies $p$ is even, and letting $\tilde{p} = p/2$, we have $\Omega' = Q^{\tilde{p}} \Omega^q$.  Next, we apply Eq.~(\ref{eqn:Tab}) again, this time to the mutual statistics of $e$ and $\Omega$, to obtain
\begin{eqnarray}
\frac{\pi}{n} &=& \Theta_{e, \Omega} = - \Theta_{e \bar{Q}, \Omega'} = - \Theta_{e Q^{n-1}, Q^{\tilde{p}} \Omega^q } \\
&=& - \Big[ q \frac{\pi}{n} + (n-1) q \frac{2 \pi}{n} \Big] \\
&=& q \frac{\pi}{n} \text{.}
\end{eqnarray}
This implies $q=1$, and so far we have shown $\Omega' = Q^{\tilde{p}} \Omega$.

Finally, we consider the self-statistics of $\Omega$, and apply Eq.~(\ref{eqn:Ta}), finding
\begin{eqnarray}
\frac{\pi k}{n^2} &=& \theta_{\Omega} = -\theta_{\Omega'} = - \theta_{Q^{\tilde{p}} \Omega} \\
&=& - \theta_{\Omega} - \tilde{p} \Theta_{Q, \Omega} \\
&=& - \frac{\pi k}{n^2} - \frac{2 \pi \tilde{p}}{n} \text{.}
\end{eqnarray}
This implies
\begin{equation}
\frac{2 \pi k}{n^2} = - \frac{2 \pi \tilde{p}}{n} \text{,}
\end{equation}
which has the unique solution $k = \tilde{p} = 0$.

Therefore we have shown that 
\begin{equation}
{\cal T} \star \Omega = \Omega \text{.}
\end{equation}
We also showed that $k = 0$, so that $\Omega$ is a boson ($\theta_{\Omega} = 0$).

\emph{Case 2}: $G$ contains a subgroup ${\rm U}(1) \times \zz^P$.

In this case, reflection does not act on ${\rm U}(1)$ charge, and we have
\begin{eqnarray}
P \star Q &=& Q \\
P \star e &=& e \\
P \star m &=& m \text{.}
\end{eqnarray}
As above, we let $\Omega' \equiv {\cal T} \star \Omega$ and write
\begin{equation}
\Omega' = e^p \Omega^q \text{,}
\end{equation}
for integers $0 \leq p, q < 2n - 1$ to be determined.

We follow the same strategy as in Case 1, repeatedly applying Eqs.~(\ref{eqn:Tab}) and~(\ref{eqn:Ta}) to determine $\Omega'$.  First we have
\begin{equation}
0 = \Theta_{m,\Omega} = - \Theta_{m, \Omega'} = - p \pi \text{,}
\end{equation}
which implies $p$ is even.  We define $\tilde{p} = p/2$, so that $\Omega' = Q^{\tilde{p}} \Omega^q$.  Then
\begin{eqnarray}
\frac{\pi}{n} &=& \Theta_{e,\Omega} = - \Theta_{e, \Omega'} \\
&=& - q \Theta_{e,\Omega} = - q \frac{\pi}{n} \text{.}
\end{eqnarray}
This implies that $q = 2n-1$, and so far we have shown $\Omega' = Q^{\tilde{p}} \Omega^{2n-1}$.
Finally,
\begin{eqnarray}
\frac{\pi k}{n^2} &=& \theta_{\Omega} = -\theta_{\Omega'} \\
&=& - \big[ \theta_{\Omega^{2n-1}} + \Theta_{Q^{\tilde{p}}, \Omega^{2n-1}} \big] \\
&=& - \big[ (2n-1)^2 \frac{\pi k}{n^2} + (2n-1) \tilde{p} \frac{2\pi}{n} \big]  \text{.}
\end{eqnarray}
Rearranging terms, and dropping those that vanish modulo $2\pi$, this is equivalent to
\begin{equation}
\frac{2\pi k}{n^2} = \frac{2\pi}{n} ( \tilde{p} + 2k) \text{,}
\end{equation}
which has the unique solution $k = \tilde{p} = 0$.

Therefore we have shown that 
\begin{equation}
P \star \Omega = \Omega^{2n-1} \text{.}
\end{equation}
We also showed that $k = 0$, so that $\Omega$ is a boson ($\theta_{\Omega} = 0$).

\emph{Case 3}: $G$ contains a subgroup ${\rm U}(1) \rtimes \zz^T$.

Here, time reversal does not change the ${\rm U}(1)$ charge, so we have
\begin{eqnarray}
{\cal T} \star Q &=& Q \\
{\cal T} \star e &=& e \\
{\cal T} \star m &=& m \text{.}
\end{eqnarray}
These equations are identical to those for $P$ in case 2.  Because the symmetry conditions on statistics are the same for ${\cal T}$ and $P$ symmetry, the analysis proceeds exactly as in case 2, and we have ${\cal T} \star \Omega = \Omega^{2n-1}$ and $\theta_{\Omega} = 0$.

\emph{Case 4}: $G$ contains a subgroup ${\rm U}(1) \rtimes \zz^P$.

In this case, $P$ reverses ${\rm U}(1)$ charge, so as in case 1 we have
\begin{eqnarray}
{\cal P} \star Q &=& \bar{Q} \\
{\cal P} \star e &=& \bar{Q} e \\
{\cal P} \star m &=& m \text{.}
\end{eqnarray}
Because these equations are identical to those in case 1, the analysis proceeds identically, so that $P \star \Omega = \Omega$ and $\theta_{\Omega} = 0$.

\section{Specifying fractionalization classes in terms of ${\rm U}(1)$ and $G'$ fractionalization classes}
\label{app:fracinfo}

By definition, the fractionalization class of $e$ or $m$ is an element $[\omega] \in H^2(G, \zz)$.  In this paper, we consider $G = {\rm U}(1) \rtimes G'$, and we specify the fractionalization class by two pieces of information:  1) whether the particle carries integer or half-odd integer ${\rm U}(1)$ charge, and 2) an element $[\omega'] \in H^2(G', \zz)$.  Here, we show that all fractionalization classes can be uniquely specified in this manner.

We observe that $[\omega] \in H^2(G, \zz)$ uniquely determines elements of $H^2({\rm U}(1), \zz)$ [corresponding to the ${\rm U}(1)$ charge modulo 1] and $H^2(G', \zz)$.  These elements are obtained by restricting the arguments of the factor set $\omega(g_1, g_2)$ to the ${\rm U}(1)$ and $G'$ subgroups, respectively.  Therefore, we need only show that no additional information is needed to uniquely specify $[\omega]$.  

We consider a projective representation of $G$, where $\phi \in {\rm U}(1)$ is represented by $e^{i \phi Q}$, and $g \in G'$ is represented by $\Gamma(g)$.  Letting $\sigma_q =1$ ($\sigma_q = -1$) correspond to integer (half-odd-integer) ${\rm U}(1)$ charge, we have
\begin{eqnarray}
e^{2 \pi i Q} &=& \sigma_q \\
\Gamma(g_1) \Gamma(g_2) &=& \omega'(g_1, g_2) \Gamma(g_1 g_2) \text{.}
\end{eqnarray}
So far, we have only specified $\sigma_q$ and $[\omega'] \in H^2(G', \zz)$.

To complete the description of the projective representation, we need to describe the multiplication of an element of ${\rm U}(1)$ with an element of $G'$.  First, fix $g \in G'$, and suppose that $\phi g = g \phi$ for all $\phi \in {\rm U}(1)$.  Then, in the projective representation
\begin{equation}
e^{i \phi Q} \Gamma(g) e^{-i \phi Q} = f_g(\phi) \Gamma(g) \text{,}
\end{equation}
where $f_g(\phi) \in \{ \pm 1\}$.  Setting $\phi = 0$, clearly $f_g(0) = 1$.  Moreover, the left-hand side is a continuous function of $\phi$, so $f_g(\phi)$ must also be continuous, and $f_g(\phi) = 1$ for all $\phi$.

The other possibility we need to consider is a fixed $g \in G'$ where $\phi g = g (- \phi)$ for all $\phi \in {\rm U}(1)$.  In the projective representation,
\begin{equation}
e^{i \phi Q} \Gamma(g) e^{i \phi Q} = f_g(\phi) \Gamma(g) \text{.}
\end{equation}
Here, the same arguments show that $f_g(\phi) = 1$.

We have thus shown that the fractionalization class $[\omega] \in H^2(G, \zz)$ is completely specified by $\sigma_q$ and $[\omega'] \in H^2(G', \zz)$.

\section{Characterization of anomaly-negative fractionalization patterns}
\label{app:characterization-theorem}

We recall that, by definition, the symmetry fractionalization pattern $eC[\omega_e]m0[\omega_m]$ is anomaly-negative if and only if, for each even $n \geq 2$, there exists a $t$-twisted $\z_{n}$ factor set $\phi_n(g_1, g_2)$ solving the equation
\begin{equation}
\omega_m(g_1, g_2) = [\phi_n(g_1, g_2)]^n \text{,} \label{eqn:main-equation-appendix}
\end{equation}
where $g_1, g_2 \in G'$.  In this section, we prove a simple characterization, stated as Theorem~\ref{thm:characterization-maintext} in Sec.~\ref{subsec:anomaly-test},  of which $m$ particle fractionalization classes $[\omega_m]_{\zz}$ give rise to anomaly-negative fractionalization patterns.

It will be convenient to relate the $\zz$ and $\z_{2n}$ factor sets $\omega_m$ and $\phi_n$ to $t$-twisted ${\rm U}(1)$ factor sets.  To proceed, let $k$ be a positive integer. $Z^2_t(G', \z_{2k})$ is the Abelian group of $t$-twisted $\z_{2k}$ 2-cocycles (factor sets) for the group $G'$.
$B^2_t(G', \z_{2k})$ is the corresponding Abelian group of 2-coboundaries, which are factor sets of the form $\omega(g_1, g_2) = \lambda(g_1) [\lambda(g_2)]^{t(g_1)} [\lambda(g_1 g_2)]^{-1}$, for $\lambda(g) \in \z_{2k}$.  The second cohomology group is defined by $H^2_t(G', \z_{2k}) = Z^2_t(G', \z_{2k}) / B^2_t(G', \z_{2k} )$.  There is a projection homomorphism $\pi_{2k} : Z^2_t(G', \z_{2k} ) \to H^2(G', \z_{2k} )$.  Note that if $k=1$, we can drop the $t$ subscripts everywhere, since in that case the twisting by $t(g)$ is trivial.  The same definitions hold for ${\rm U}(1)$ coefficients, in which case we call the projection homomorphism $\pi_{{\rm U}(1)} : Z^2_t(G', {\rm U}(1) ) \to H^2_t(G', {\rm U}(1) )$.  There is an obvious inclusion map $i_{2k} : Z^2_t(G', \z_{2k} ) \to Z^2_t(G', {\rm U}(1) )$, which just expresses the fact that $\omega \in Z^2_t(G', \z_{2k})$ can also be viewed as a $t$-twisted ${\rm U}(1)$ factor set.

For each $k$, we would like to define a homomorphism $\rho_{2k} : H^2_t(G', \z_{2k} ) \to H^2_t(G', {\rm U}(1) )$, so that the following diagram is commutative:
\begin{equation}
\begin{CD}
Z^2_t(G', \z_{2k}) @>i_{2k}>> Z^2_t(G', {\rm U}(1) ) \\
@VV\pi_{2k} V @VV\pi_{{\rm U}(1)} V \\
H^2_t(G', \z_{2k} ) @>\rho_{2k}>> H^2_t(G', {\rm U}(1) ) 
\end{CD} \text{.}
\end{equation}
In fact, we will see that $\rho_{2k}$ is the unique homomorphism making this diagram commutative.

Why do we want to define $\rho_{2k}$?  Given $\omega \in Z^2_t(G', \z_{2k} )$, we can define a ${\rm U}(1)$ fractionalization class by $[\omega]_{{\rm U}(1)} = \pi_{{\rm U}(1)} ( i_{2k} ( \omega )) \in H^2_t(G', {\rm U}(1))$.  If we can find a unique $\rho_{2k}$, commutativity of the diagram tells us that $[\omega]_{{\rm U}(1)}$ depends, in a unique way, only on the $\z_{2k}$ fractionalization class $[\omega]_{\z_{2k}} = \pi_{2k}(\omega) \in H^2_t(G', \z_{2k})$, by $[\omega]_{{\rm U}(1)} = \rho_{2k} ( [\omega]_{\z_{2k}} )$.  Therefore, it is meaningful to talk about $[\omega]_{{\rm U}(1)}$ as a function of $[\omega]_{\z_{2k}}$.

We define $\rho_{2k}$ as follows.  Pick some element $c \in H^2_t(G', \z_{2k})$.  Choose a representative $\omega \in Z^2_t(G', \z_{2k} )$ so that $c = \pi_{2k}(\omega)$.  Then define $\rho_{2k}(c) = \pi_{{\rm U}(1)} ( i_{2k} ( \omega ))$.

First, we have to check $\rho_{2k}$ is well-defined, which means it must be independent of the particular representative $\omega$.  It is easy to see that $\omega, \omega' \in Z^2_t(G', \z_{2k})$ belonging to the same cohomology class, also belong to the same cohomology class after mapping under $i_{2k}$ to $Z^2_t(G', {\rm U}(1))$, so $\rho_{2k}$ is well-defined.  Next, we check $\rho_{2k}$ is a homomorphism.  Let $c, c' \in H^2_t(G', \z_{2k})$, with corresponding representatives $\omega, \omega'$.  We have $c c' = \pi_{2k}(\omega) \pi_{2k}(\omega') = \pi_{2k}(\omega \omega')$, so $\omega \omega'$ is a representative for $c c'$.
Then
\begin{equation}
\rho_{2k}(c c') = \pi_{{\rm U}(1)} (i_{2k} ( \omega \omega')) = \rho_{2k}(c) \rho_{2k}(c') \text{.}
\end{equation}

Finally, we check $\rho_{2k}$ is the unique homomorphism making the diagram commutative.  Suppose $\tilde{\rho}_{2k}$ also makes the diagram commutative, but for some $c \in H^2_t(G', \z_{2k})$, we have $\rho_{2k}(c) \neq \tilde{\rho}_{2k}(c)$.  Let $\omega \in Z^2_t(G', \z_{2k})$ be a representative for $c$, then it follows that $\pi_{{\rm U}(1)}(i_{2k}(\omega)) = \rho_{2k}(\pi_{2k}(\omega)) = \tilde{\rho}_{2k}(\pi_{2k}(\omega))$, which implies $\rho_{2k}(c) = \tilde{\rho}_{2k}(c)$, a contradiction.

Now we return to Eq.~(\ref{eqn:main-equation-appendix}).  Viewing $\omega_m$ and $\phi_n$ as ${\rm U}(1)$ factor sets, and applying $\pi_{{\rm U}(1)}$ to both sides of the equation, we have
\begin{equation}
[\omega_m]_{{\rm U}(1)} = ( [ \phi_n]_{{\rm U}(1)} )^n \text{.} \label{eqn:nth-root}
\end{equation}
Therefore, another way of putting the anomaly test is that, in order for $[\omega_m]_{\zz}$ to be anomaly-negative, $[\omega_m]_{{\rm U}(1)}$ must have a $n$th root in $H^2_t(G', {\rm U}(1))$ for all even $n \geq 2$.

Now we can prove the desired characterization of anomaly-negative $[\omega_m]_{\zz}$.  We assume that $H^2_t(G', {\rm U}(1) ) = {\rm U}(1)^k \times A$, where $A$ is a finite product of finite cyclic factors.  This assumption is true in all the examples we studied, and we believe it is likely to hold in general.

\begin{prop}
Suppose that $[\omega_m]_{{\rm U}(1)}$ lies in the connected component of $H^2_t(G', {\rm U}(1))$ that contains the identity element.  Suppose also that $H^2_t(G', {\rm U}(1) ) = {\rm U}(1)^k \times A$, where $A$ is a finite product of finite cyclic factors.  Then $[\omega_m]_{\zz}$ is anomaly-negative. \label{prop:connected-new}
\end{prop}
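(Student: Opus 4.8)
The plan is to establish anomaly-negativity directly from the definition: for each even $n \ge 2$ I would construct an explicit solution of \rfs{eqn:main-equation-appendix}, i.e.\ a $t$-twisted $\z_{2n}$ factor set $\phi_n$ with $\phi_n(g_1,g_2)^n = \omega_m(g_1,g_2)$ pointwise, where $\omega_m$ is any fixed $\zz$-valued cocycle representing $[\omega_m]_{\zz}$. This would be done in two stages: first solve the $n$-th root problem at the level of cohomology \emph{classes} in $H^2_t(G',{\rm U}(1))$ — this is where the hypothesis is used, via divisibility of ${\rm U}(1)$ — and then rigidify that class-level solution into an exact identity of cocycles, using divisibility once more. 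A short final observation shows that the cocycle so obtained is automatically $\z_{2n}$-valued, so no further coefficient bookkeeping is needed.

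In more detail: since $A$ is finite, the identity component of $H^2_t(G',{\rm U}(1)) = {\rm U}(1)^k \times A$ is ${\rm U}(1)^k \times \{1\}$, so the hypothesis says $[\omega_m]_{{\rm U}(1)} = \rho_2([\omega_m]_{\zz}) = \pi_{{\rm U}(1)}\big(i_2(\omega_m)\big)$ lies in ${\rm U}(1)^k \times \{1\}$. Because ${\rm U}(1)^k$ is a divisible abelian group, I would choose $c \in {\rm U}(1)^k \times \{1\}$ with $c^n = [\omega_m]_{{\rm U}(1)}$ and a cocycle representative $\psi \in Z^2_t(G',{\rm U}(1))$ of $c$; then $\psi^n$ and $i_2(\omega_m)$ are cohomologous, so $\psi^n = i_2(\omega_m)\, d\eta$ for some $1$-cochain $\eta : G' \to {\rm U}(1)$, with $d$ the $t$-twisted coboundary. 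Choosing an $n$-th root $\eta'(g) \in {\rm U}(1)$ of $\eta(g)$ for each $g$ gives $(d\eta')^n = d\eta$, so $\phi_n := \psi\,(d\eta')^{-1}$ is still a $t$-twisted ${\rm U}(1)$ cocycle and now satisfies $\phi_n(g_1,g_2)^n = \omega_m(g_1,g_2)$ for all $g_1,g_2$. Since $\omega_m$ is $\{\pm 1\}$-valued, this forces $\phi_n(g_1,g_2)^{2n} = 1$, i.e.\ $\phi_n$ takes values in $\z_{2n} \subset {\rm U}(1)$; as a ${\rm U}(1)$ cocycle valued in that subgroup it is a genuine $t$-twisted $\z_{2n}$ factor set. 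Running this for every even (indeed every positive) $n$ establishes anomaly-negativity. I would also note that the construction is insensitive to the chosen representative: replacing $\omega_m$ by $\omega_m\, d\lambda$ with $\lambda : G' \to \zz$ is compensated by replacing $\phi_n$ with $\phi_n\, d\nu$, where $\nu(g)$ is an $n$-th root of $\lambda(g)$.

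The hard part, as I see it, is the passage from cohomology to cochains: the hypothesis only supplies an $n$-th root of the \emph{class} $[\omega_m]_{{\rm U}(1)}$, whereas anomaly-negativity demands an $n$-th root of a fixed cocycle on the nose. The two invocations of divisibility of ${\rm U}(1)$ — once to produce $c$, once to choose $\eta'$ and absorb the discrepancy $d\eta$ — are exactly what bridge this gap, and the remark that a ${\rm U}(1)$ cocycle whose $n$-th power is $\{\pm1\}$-valued must already be $\z_{2n}$-valued is what keeps the output in the correct coefficient group with no extra work. It is also worth checking that the hypothesis is essential rather than an artifact of the argument: if $[\omega_m]_{{\rm U}(1)}$ had a nontrivial component in the finite group $A$, that component would generically fail to have an $n$-th root in $A$ for some $n$, obstructing the very first step — which is precisely why the identity component appears in the statement.
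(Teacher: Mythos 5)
Your proposal is correct and follows essentially the same route as the paper's proof: use divisibility of the identity component ${\rm U}(1)^k$ to obtain an $n$th root of the class $[\omega_m]_{{\rm U}(1)}$, write the resulting cohomologous relation as a twisted coboundary discrepancy, take pointwise $n$th roots of that coboundary's $1$-cochain to upgrade the class-level root to an on-the-nose identity $\phi_n^n=\omega_m$, and note that $\phi_n$ is then automatically $\z_{2n}$-valued and satisfies the twisted associativity condition. The paper's $\lambda$, $\alpha$, $\Omega_n$ correspond directly to your $\eta$, $\eta'$, $\psi$, so the two arguments are the same up to notation.
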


\begin{proof}
It follows from the assumptions that $[\omega_m]_{{\rm U}(1)}$ has an $n$th root in $H_t^2(G', {\rm U}(1))$ for any $n > 0$.  This holds by the assumption that $H^2_t(G', {\rm U}(1))$ is a product of ${\rm U}(1)$ and finite cyclic factors, so that the connected component containing $1$ is just a product of ${\rm U}(1)$'s.  Then we have
\begin{equation}
\omega_m(g_1, g_2) = \lambda^{-1}(g_1) [\lambda(g_2)]^{-t(g_1)} \lambda(g_1 g_2) ( \Omega_n(g_1, g_2) )^n \text{,}
\end{equation}
where $\Omega_n \in Z^2_t(G', {\rm U}(1))$, and $\lambda(g) \in {\rm U}(1)$.

We choose $0 \leq \theta(g) < 2\pi$ so that $\lambda(g) = e^{i \theta(g)}$.  Then we define $\alpha(g) = e^{i \theta(g) / n}$,
and we choose
\begin{equation}
\phi_n(g_1, g_2) = \alpha^{-1}(g_1) [\alpha(g_2)]^{-t(g_1)} \alpha(g_1 g_2) \Omega_n(g_1, g_2) \text{.}
\end{equation}
This is by construction a $n$th root of $\omega_m(g_1, g_2)$, so $\phi_n(g_1,g_2) \in \z_{2n}$.  To show $\phi_n \in Z^2_t(G', \z_{2n})$, we note that $\phi_n(g_1, g_2)$ clearly satisfies the relevant associativity condition.  For the given $\omega_m$, we have thus constructed a solution to Eq.~(\ref{eqn:main-equation}) for each even $n > 0$.
\end{proof}

The converse of Proposition~\ref{prop:connected-new} is also true:

\begin{prop}
If $[\omega_m]_{\zz}$ is anomaly-negative, and if $H^2_t(G', {\rm U}(1) ) = {\rm U}(1)^k \times A$, where $A$ is a finite product of finite cyclic factors, then   $[\omega_m]_{{\rm U}(1)}$ lies in the same connected component of $H_t^2(G', {\rm U}(1))$ that contains the identity element.
\end{prop}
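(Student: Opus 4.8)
The plan is to transport the defining property of anomaly-negative from $t$-twisted $\z_{2n}$ coefficients to $t$-twisted ${\rm U}(1)$ coefficients, and then to make a clever choice of $n$. By hypothesis, for every even $n \geq 2$ there is $\phi_n \in Z^2_t(G',\z_{2n})$ solving Eq.~(\ref{eqn:main-equation-appendix}), $\omega_m = \phi_n^n$. Regarding $\phi_n$ and $\omega_m$ as ${\rm U}(1)$-valued cocycles via $i_{2n}$ and applying $\pi_{{\rm U}(1)}$ — both are homomorphisms, and the inclusion $\z_{2n}\hookrightarrow{\rm U}(1)$ is compatible with the $t$-action on coefficients — we obtain $[\omega_m]_{{\rm U}(1)} = \big([\phi_n]_{{\rm U}(1)}\big)^n$ in $H^2_t(G',{\rm U}(1))$, which is Eq.~(\ref{eqn:nth-root}). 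Crucially, since $\phi_n$ is $\z_{2n}$-valued, $\big([\phi_n]_{{\rm U}(1)}\big)^{2n}=1$.

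Now write $H^2_t(G',{\rm U}(1)) = {\rm U}(1)^k\times A$ with $A$ a finite abelian group. Since ${\rm U}(1)^k$ is connected and $A$ is discrete, the identity component is $C := {\rm U}(1)^k\times\{1\}$, and an element lies in $C$ precisely when its $A$-component is trivial. Let $\gamma\in A$ and $\gamma_n\in A$ denote the $A$-components of $[\omega_m]_{{\rm U}(1)}$ and $[\phi_n]_{{\rm U}(1)}$. Since projection onto $A$ is a group homomorphism, the two displayed relations give $\gamma = \gamma_n^{\,n}$ and $\gamma_n^{\,2n}=1$ for every even $n\ge 2$.

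The key step is to exploit the freedom that $n$ ranges over all even integers: take $n = 2M$, where $M$ is the exponent of the finite group $A$. Then $n$ is even and $\ge 2$, so that $\gamma_n^{\,M}=1$, and hence $\gamma = \gamma_n^{\,2M} = \big(\gamma_n^{\,M}\big)^2 = 1$. Thus $[\omega_m]_{{\rm U}(1)}$ has trivial $A$-component, so it lies in the identity component $C$, which is the claim. The only delicate point is the bookkeeping in the first paragraph — checking that passing to ${\rm U}(1)$ coefficients respects the twisted-cohomology structure and commutes with raising to the $n$-th power — but this is routine given the definitions fixed earlier in this appendix. In contrast to the forward direction (Proposition~\ref{prop:connected-new}), we do not need $[\omega_m]_{{\rm U}(1)}$ to admit $n$-th roots; we need only that the solutions $\phi_n$ handed to us are themselves torsion classes in $H^2_t(G',{\rm U}(1))$.
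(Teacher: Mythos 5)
Your proof is correct, and it reaches the conclusion by a cleaner mechanism than the paper's own argument, though along the same skeleton. Both proofs start from Eq.~(\ref{eqn:nth-root}) (your first paragraph essentially re-derives what the appendix establishes just before the propositions via the map $\rho_{2k}$), write $H^2_t(G',{\rm U}(1)) = {\rm U}(1)^k \times A$, and reduce to showing the $A$-component of $[\omega_m]_{{\rm U}(1)}$ is trivial. The paper then argues factor by factor: it writes $A = \z_{p_1}\times\cdots\times\z_{p_N}$, uses the extra fact that $[\omega_m]_{{\rm U}(1)}^2 = 1$ (since $\omega_m$ is $\zz$-valued) to kill the components in the odd cyclic factors, and uses the existence of a $p_i$-th root for each even $p_i$ to kill the components in the even factors (any $c\in\z_{p_i}$ satisfies $c^{p_i}=1$). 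You instead make the single choice $n = 2M$ with $M$ the exponent of $A$, so that $\gamma = \gamma_n^{2M} = (\gamma_n^M)^2 = 1$ in one stroke; this is more uniform, avoids the odd/even case split, and does not invoke $[\omega_m]_{{\rm U}(1)}^2=1$ at all. Two small remarks on your commentary rather than your logic: the observation you flag as crucial, $([\phi_n]_{{\rm U}(1)})^{2n}=1$, is never actually used — the exponent $M$ already annihilates every element of $A$ — and your closing sentence is slightly misleading, since the relation $\gamma=\gamma_n^{\,n}$ you do use is precisely the statement that $[\omega_m]_{{\rm U}(1)}$ admits an $n$-th root (for the one value $n=2M$); the genuine difference from Proposition~\ref{prop:connected-new} is only that here the roots are supplied by the anomaly-negative hypothesis rather than having to be constructed. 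Neither point affects the validity of the proof.
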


\begin{proof}
Under the assumption, Eq.~(\ref{eqn:nth-root}) holds; that is, $[\omega_m]_{{\rm U}(1)}$ has an $n$th root in $H^2_t(G', {\rm U}(1))$ for all even $n > 0$.

We write $[\omega_m]_{{\rm U}(1)} = (\alpha, \beta)$, where $\alpha \in {\rm U}(1)^k$ and $\beta \in A$.  We will show that $\beta = 1$, which implies $[\omega_m]_{{\rm U}(1)}$ lies in the connected component containing the identity.

Write $A = \z_{p_1} \times \cdots \times \z_{p_N}$, and $\beta = (b_{1}, \dots, b_{N})$.  Observe that $[\omega_m]_{{\rm U}(1)}^2 = 1$, which implies $\beta^2 = 1$.  If $p_i$ is odd, $\beta^2 = 1$ implies $b_i = 1$.  Now consider $p_i$ even.  Then, by assumption, there exists a $p_i$th root of $\beta$, $\beta = \gamma^{p_i}$ for $\gamma \in A$.  This implies $b_i = c^{p_i}$ for some $c \in \z_{p_i}$, but for any $c \in \z_{p_i}$, $b_i = c^{p_i} = 1$.  Therefore, $\beta = 1$.
\end{proof}

Taking these two propositions together, we have proved Theorem~\ref{thm:characterization-maintext}.

\section{Computing second cohomology groups using generators and relations}
\label{app:cohomology}

Here we provide some details to justify and explain the procedure used for computing second cohomology groups in Sec.~\ref{sec:examples} and Appendix~\ref{sec:more-examples}.  We focus on the $t$-twisted $\z_{2n}$ cohomology group $H^2_t(G', \z_{2n})$.  This includes $H^2(G', \zz)$ as a special case (setting $n=1$), and the treatment for $H^2_t(G', {\rm U}(1))$ proceeds identically, simply replacing $\z_{2n}$ by ${\rm U}(1)$ throughout the discussion.

The group $H^2_t(G', \z_{2n})$ can be computed by finding, and distinguishing, all possible equivalence classes of $t$-twisted $\z_{2n}$ factor sets $\omega(g_1,g_2)$, for $g_1, g_2 \in G'$.  Recall that such a factor set is any $\z_{2n}$-valued function satisfying the twisted associativity condition, Eq.~(\ref{eqn:twisted-associativity}), and that we are referring to equivalence classes under projective transformations defined in Eq.~(\ref{eqn:twisted-proj-trans}).

Rather than directly studying factor sets, we can equivalently study $t$-twisted $\z_{2n}$ group extensions of $G'$. Such a group extension is a group $E$ for which $\z_{2n} \subset E$ is a normal subgroup, satisfying $E / \z_{2n} = G'$.  An arbitrary element $e \in E$ can be written $e = a u(g)$, where $a \in \z_{2n}$, and $u(g)$ is chosen to satisfy $\pi [ u(g) ] = g$, where $\pi : E \to G'$ is the projection map associated with the quotient of $E$ by $\z_{2n}$.  We refer to $u(g)$ as a representative of $g$ in $E$.  We require the additional property
\begin{equation}
u(g) a = a^{t(g)} u(g) \text{,}  \label{eqn:extension-t-twisting}
\end{equation}
where $t : G' \to \zz$ is the twisting homomorphism discussed in Sec.~\ref{subsec:anomaly-test}.
We note that the representative $u(g)$ is arbitrary up to projective transformations
\begin{equation}
u(g) \to \lambda^{-1}(g) u(g) \text{,} \label{eqn:extension-proj-trans}
\end{equation}
where $\lambda(g) \in \z_{2n}$.

It follows from the definition that
\begin{equation}
u(g_1) u(g_2) = \omega(g_1, g_2) u(g_1 g_2) \text{,}
\end{equation}
where $\omega(g_1, g_2) \in \z_{2n}$.  Associative multiplication of the $u(g)$'s, together with Eq.~(\ref{eqn:extension-t-twisting}), implies that $\omega$ satisfies Eq.~(\ref{eqn:twisted-associativity}), and is thus a $t$-twisted $\z_{2n}$ factor set.  In addition, under projective transformations Eq.~(\ref{eqn:extension-proj-trans}), the factor set transforms as in Eq.~(\ref{eqn:twisted-proj-trans}).  So we have shown that a group extension is associated with a unique equivalence class $[\omega] \in H^2_t(G', \z_{2n})$.

Now we would also like to show that, given a factor set $\omega(g_1, g_2)$, we can construct a corresponding group extension.  We consider a set $E$ whose elements are ordered pairs $(a, g)$, where $a \in \z_{2n}$ and $g \in G'$.  We make this set into a group by defining the multiplication operation
\begin{equation}
(a_1, g_1) \times (a_2, g_2) = (a_1 a_2^{t(g_1)} \omega(g_1, g_2), g_1 g_2 ) \text{.}
\end{equation}
With this multiplication, it can be checked that $E$ is a group, and indeed a $t$-twisted $\z_{2n}$ group extension.\footnote{To give some details, note that the identity in $E$ is $1 = (\omega(1,1)^{-1},1)$.  There is an injective homomorphism $i : \z_{2n} \to E$ defined by $i(a) = (a \omega(1,1)^{-1},1)$.  To view $E$ as a $t$-twisted $\z_{2n}$ extension, we use the normal subgroup  $i(\z_{2n}) \subset E$, which is isomorphic to $\z_{2n}$.}  Choosing $u(g) = (1,g)$, we have $u(g_1) u(g_2) = \omega(g_1, g_2) u(g_1 g_2)$, as desired.

It follows from the above discussion that, if we would like to construct all possible factor sets (or equivalence classes thereof), it is enough to construct all possible group extensions.  We now describe, in general terms, how to do this for a group $G'$ presented in terms of generators and relations.  This procedure is worked out in Sec.~\ref{sec:examples} and Appendix~\ref{sec:more-examples} for specific examples.  We note that in those sections, to simplify the discussion in the main text, we slightly abuse terminology and refer to projective representations, which are group extensions with additional vector space structure.  This additional structure is not used in the cohomology group calculations, which can be viewed more simply as calculations with group extensions.

To begin, we describe the presentation of $G'$ in terms of a finite number of generators $h_i \in G'$ ($i = 1,2,\dots$).  Note that our goal here is not to define $G'$ abstractly in terms of generators and relations, but rather to give a description of $G'$ in this manner, assuming that $G'$ is already defined by some other means.  For every $g \in G'$, we choose a fixed canonical form in terms of the generators, for example $g_1 = h_1 h_3^2$.  In general, different choices of canonical form are possible for each $g$, and fixing the canonical form should be viewed as an arbitrary choice.  Fortunately, while we use the canonical form to justify our calculation procedure, it is not necessary to make a specific choice in the explicit calculations.  It is important to note that $h_i^{-1}$ is not automatically included as a generator, but sometimes it may need to be included, so that all $g \in G'$ can be written as a product of generators.

The generators obey a finite number of relations, for example,
\begin{eqnarray}
h_1^2 &=& 1 \\
(h_1 h_2)^4 &=& 1 \text{,}
\end{eqnarray}
and so on.  For the present purposes of general discussion, we work in a convention where the right-hand side of each relation is the unit element; however, this is not always convenient in practice.  The relations must be chosen so that, given any $g_1, g_2 \in G'$ expressed in canonical form, the relations alone can be used to bring the product $g_1 g_2$ to canonical form.

Now suppose $E$ is a $t$-twisted $\z_{2n}$ group extension of $G'$.  For each $g \in G'$, by making suitable projective transformations, we can choose a canonical form for $u(g)$, which is the product of $u(h_i)$ corresponding to the canonical form of $g$.  For example, if $g_1 = h_1 h_3^2$, we choose $u(g_1) = u(h_1) [u(h_3)]^2$, with trivial $\z_{2n}$ coefficient.  It is always possible to make such a choice, by making projective transformations $u(g) \to \lambda^{-1}(g) u(g)$, where $\lambda(h_i) = 1$.  We also choose $u(1) = 1$.  In addition, if $h_i$ and $h_i^{-1}$ are both generators, we choose $u(h_i^{-1}) = [u(h_i)]^{-1}$, which can be accomplished via a projective transformation $\lambda(g)$ where $\lambda(g) = 1$ if $g \neq h_i^{-1}$.

The relations now become relations for the $u(h_i)$, with the right-hand side modified to be an arbitrary element of $\z_{2n}$, for example,
\begin{eqnarray}
\left[ u(h_1) \right]^2 &=& \alpha_1 \label{eqn:pr1} \\
\left[ u(h_1) u(h_2) \right]^4 &=&  \alpha_2 \text{,} \label{eqn:pr2}
\end{eqnarray}
for $\alpha_1, \alpha_2 \in \z_{2n}$.  We note that, due to the special choice of $u(h_i^{-1})$ when both $h_i$ and $h_i^{-1}$ are generators, we automatically have $\alpha_i = 1$ for the relation $h_i \cdot h_i^{-1} = 1$.  These relations allow us to bring any product $u(g_1) u(g_2)$ into canonical form $u(g_1 g_2)$, up to a $\z_{2n}$ phase factor determined by the $\{ \alpha_i \}$.  This phase factor is nothing but $\omega(g_1, g_2) \in \z_{2n}$, and $u(g_1) u(g_2) = \omega(g_1,g_2) u(g_1 g_2)$.  The set $\{ \alpha_i \}$ thus determines $\omega(g_1, g_2)$.  We note that the sets $\{ \alpha_i \}$ can be multiplied according to
\begin{equation}
\{ \alpha_i \} \times \{ \beta_i \} = \{ \alpha_i \beta_i \} \label{eqn:alpha-mult} \text{,}
\end{equation}
which corresponds to the multiplication of factor sets. 

It is clear that any extension $E$ can be described by a corresponding set $\{ \alpha_i \}$.  (Note that the converse of this statement is not true.)  This fact allows us to find all equivalence classes $[\omega] \in H^2_t(G', \z_{2n})$ via the following procedure.  First, we consider the $\alpha_i$ to be free parameters.  We then exploit the remaining freedom to make projective transformations, where $\lambda(g) \neq 1$ only if $g$ is a generator, to ``fix a gauge'' for the $\alpha_i$.  After gauge-fixing, distinct sets $\{ \alpha_i \}$ are inequivalent under projective transformations.  Next, we need to determine which sets $\{ \alpha_i \}$ are consistent, giving rise to an extension $E$ (or, equivalently, to a factor set $\omega$).  Some sets $\{ \alpha_i \}$ can be ruled out by algebraic manipulations of the relations; for example, one can conjugate various relations by one of the generators, which often puts constraints on some of the $\alpha_i$.  After ruling out some sets $\{ \alpha_i \}$ in this manner, one can tentatively conclude that the remaining gauge-fixed sets $\{ \alpha_i \}$ correspond to elements of $H^2_t(G', \z_{2n})$.  This not only gives a computation of the group $H^2_t(G', \z_{2n})$, but also an explicit parametrization in terms of gauge-fixed sets $\{ \alpha_i \}$, with the group multiplication given by Eq.~(\ref{eqn:alpha-mult}).

To verify this tentative answer for $H^2_t(G', \z_{2n})$, one needs to show that each $\{ \alpha_i \}$ in fact gives rise to a factor set.  It is enough to do this for sets $\{ \alpha_i \}$ that generate $H^2_t(G', \z_{2n})$.  In each case, we can verify the existence of the corresponding factor set by, for example, exhibiting a projective representation for which the relations realize the set $\{ \alpha_i \}$.

\section{Coupled layer construction}
\label{app:coupled-layer}

The fact that all symmetry fractionalization patterns are possible on the surface of some $d=3$ SPT phase (which may be the trivial SPT phase) plays an important role in the discussion of this paper.  Here, we establish this fact using a simple generalization of the coupled layer construction of Ref.~\onlinecite{cwang13}; the discussion in the first part of this Appendix closely follows Section~IV of that work.

\begin{figure}
\includegraphics[width=0.5\columnwidth]{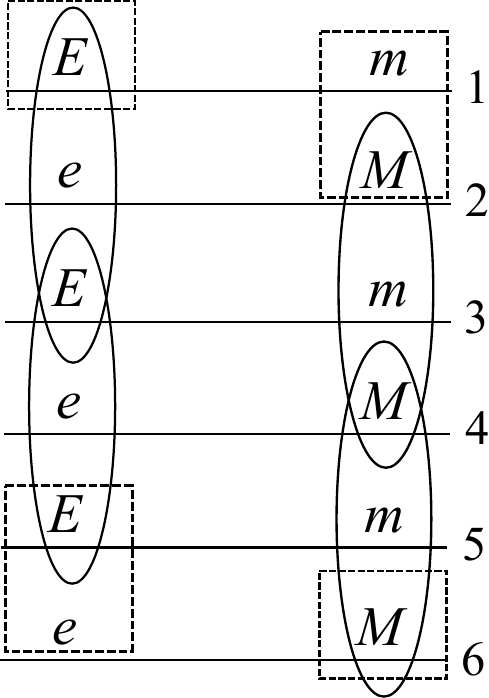}
\caption{Coupled layer construction.  Each layer is a SET phase with $\zz$ topological order.  $E$ and $M$ particles transform non-trivially under the symmetry $G$, while $e$ and $m$ transform trivially.  Composite particles indicated by ovals are condensed to obtain a $d=3$ SPT phase (which may be the trivial SPT phase).  The particles in dashed boxes remain deconfined and uncondensed, and give rise to surface SET phases at the top and bottom surfaces. By choosing the fractionalization classes of $E$ and $M$, surface SET phases with any desired symmetry fractionalization pattern can be realized by this construction.} 
\label{fig:coupled-layer}
\end{figure}

We consider a symmetry group $G$, and a fractionalization pattern $e[\omega_e]m[\omega_m]$ (perhaps anomalous) for a $d=2$ state with $\zz$ topological order.  Here, $[\omega_e]_{\zz}, [\omega_m]_{\zz} \in H^2(G, \zz)$ are the fractionalization classes of $e$ and $m$ particles, respectively.  There are no restrictions on $[\omega_e]_{\zz}$ and $[\omega_m]_{\zz}$; that is, $([\omega_e]_{\zz}, [\omega_m]_{\zz})$ is an arbitrary pair of elements of $H^2(G, \zz)$.

We build a $d=3$ system as a stack of $d=2$ layers of SET phases with $\zz$ topological order, alternating between layers where we label the two bosonic anyons as $E_i, m_i$, and layers where they are labeled $e_i, M_i$, as shown in Fig.~\ref{fig:coupled-layer}.  We choose the $E_i$ to have fractionalization class $[\omega_e]_{\zz}$, and the $M_i$ to have fractionalization class $[\omega_m]_{\zz}$.  The $e_i$ and $m_i$ have trivial fractionalization class.  We argue below that such layers can indeed be realized strictly in $d=2$.

We assume we have a total of $N$ layers with $N$ even, and condense the composite particles $E_i e_{i+1} E_{i+2}$ (for $i = 1, 3, \dots, N-3$), and $M_i m_{i+1} M_{i+2}$ (for $i = 2, 4, \dots, N - 2$).  These particles are bosons with trivial mutual statistics, so they can indeed be condensed simultaneously.  Moreover, the fractionalization classes of these particles are trivial, so they can be condensed without breaking symmetry.

In the state obtained upon condensation, all anyon excitations in the bulk are either confined or condensed.  Since the symmetry is not broken by the condensation, the resulting state is thus a $d=3$ SPT phase, which may be the trivial SPT phase.  At the $i=1$ surface, the particles $E_1$ and $m_1 M_2$ remain deconfined, and have fractionalization classes $[\omega_e]_{\zz}$ and $[\omega_m]_{\zz}$, respectively.  These are the quasiparticles of the desired surface SET phase with $\zz$ topological order and  fractionalization pattern $e[\omega_e]m[\omega_m]$.  The same holds at the $i= N$ surface, for the particles $E_{N-1} e_N$ and $M_N$.

To conclude the discussion, we need to verify that the layers in our construction are allowed strictly in $d=2$.  Equivalently, we need to argue that the fractionalization pattern $e[\omega_e]m0$ is non-anomalous for arbitrary $[\omega_e]_{\zz} \in H^2(G, \zz)$.  To do this, we construct a $\zz$ gauge theory where the matter field carrying $\zz$ gauge charge transforms with fractionalization class $[\omega_e]_{\zz}$, and show that this gauge theory can arise as a low-energy theory for a spin model.

The $\zz$ gauge charge is carried by a multi-component boson field $b^\dagger_{r \alpha}$, where $r$ labels the sites of a lattice invariant under the symmetry, and $\alpha$ labels the components.  We take the symmetry operation $g \in G$ to act on the boson field by
\begin{equation}
g : b^\dagger_{r \alpha} \mapsto \Gamma(g)^{\vphantom\dagger}_{\alpha \beta} b^\dagger_{g r, \beta} \text{.}
\end{equation}
Here, the matrices $\Gamma(g)$ are chosen to form a projective representation of $G$ whose factor set belongs to the desired fractionalization class $[\omega_e]_{\zz}$.

We choose a set $L$ of lattice links $\ell = (r, r')$ that make the lattice into a connected graph respecting the symmetry, and introduce a $\zz$ gauge field defined on links $\ell \in L$.  On each link $\ell \in L$ we introduce a two-dimensional Hilbert space, acted on by the $\zz$ vector potential $\sigma^z_{\ell}$ and the $\zz$ electric field $\sigma^x_\ell$.  These operators can be thought of as $2 \times 2$ Pauli matrices.  Apart from the action of space group operations on links, symmetry acts trivially on these fields, that is
\begin{eqnarray}
g : \sigma^{x,z}_{\ell} \mapsto \sigma^{x,z}_{g \ell} \text{.}
\end{eqnarray}

The Hamiltonian takes the form
\begin{equation}
H = - h \sum_{\ell \in L} \sigma^x_{\ell} - K \sum_{p} \prod_{\ell \in p} \sigma^z_{\ell} + u \sum_r b^\dagger_{r \alpha} b^{\vphantom\dagger}_{r \alpha}  \text{,}
\end{equation}
where $h, K, u > 0$, and the second sum is over a set of elementary plaquettes $p$ of the lattice.  We may add additional short-ranged terms consistent with symmetry, but will not need to do so for the present discussion.
We also have to specify the Gauss' law constraint, which we take to be
\begin{equation}
\prod_{r' \sim r} \sigma^x_{r r'} = (-1)^{b^\dagger_{r \alpha} b^{\vphantom\dagger}_{r \alpha} } \text{,}
\end{equation}
where the product is over those sites $r'$ joined to $r$ by some link $(r, r') \in L$.

We consider two limits of the Hamiltonian.  First, when $h = 0$, the Hamiltonian is exactly solvable, and describes a $\zz$ gauge theory in its deconfined phase, with gapped bosonic matter.  The $e$ particles, which are simply the $b^\dagger_{r \alpha}$ bosons, have fractionalization class $[\omega_e]_{\zz}$.  To see that the $m$ particles have trivial fractionalization class, we note that we can integrate out the bosonic matter in the limit where $u$ is large, to obtain a pure $\zz$ gauge theory with gauge constraint $\prod_{r' \sim r} \sigma^x_{r r'} = 1$.  Because there is no background $\zz$ gauge charge, symmetry acts trivially on the $m$ particles, and the $m$ particle fractionalization class is trivial.  Therefore, this gauge theory indeed realizes the $e[\omega_e]m0$ fractionalization pattern.

We also consider the limit $h \gg u, K$, which is a confining limit for the $\zz$ gauge field.  In this limit we may put $\sigma^x_{\ell} = 1$, and the gauge constraint becomes
\begin{equation}
(-1)^{b^\dagger_{r \alpha} b^{\vphantom\dagger}_{r \alpha} }  = 1 \text{.}
\end{equation}
This constrains the number of bosons to be even on each lattice site, and defines the Hilbert space for a bosonic model, for which the Hilbert space is a product of site Hilbert spaces.  Because all operators acting within this Hilbert space add or remove even numbers of bosons, such operators transform linearly under $G$, which is an important requirement for any physical model with $G$ symmetry.  We thus recover a sensible spin model in the confining limit of the gauge theory, and, therefore, the gauge theory can arise as a low-energy effective theory of such a spin model.  We then expect that the deconfined phase with  $e[\omega_e]m0$ fractionalization pattern can occur in this spin model, albeit for some unknown and possibly complicated Hamiltonian.

In certain special cases, it has also been shown via construction of exactly solvable spin models (\emph{i.e.}, not parton gauge theories) that all fractionalization patterns $e[\omega_e]m0$ can occur strictly in $d=2$.  This has been done for arbitrary finite, unitary, on-site symmetry \cite{hermele14}, and also for $p4mm$ square lattice space group symmetry \cite{hsong15}.

\section{More examples}
\label{sec:more-examples}

\subsection{$G = ({\rm U}(1) \rtimes \zz^T) \times pm$}
\label{subsec:pm-z2T}

This symmetry is closely related to the case $G = {\rm U}(1) \times pm$, but now with time reversal symmetry added.  The $\zz^T$ time reversal forms a semidirect product with ${\rm U}(1)$.  The generators are as in Sec.~\ref{subsec:pm}, with the addition of the time reversal operation ${\cal T}$, and we have the relations
\begin{eqnarray}
T_x T_y T^{-1}_x T^{-1}_y &=& 1 \\
T_y P_x T^{-1}_y P_x &=& 1 \\
P_x^2 &=& 1 \\
T_x P_x T_x P_x &=& 1 \\
{\cal T}^2 &=& 1 \\
{\cal T} T_x &=& T_x {\cal T}  \\
{\cal T} T_y &=& T_y {\cal T}  \\
{\cal T} P_x &=& P_x {\cal T} \text{.}
\end{eqnarray}
The $m$ symmetry fractionalization is specified by
\begin{eqnarray}
T^m_x T^m_y T^{m-1}_x T^{m-1}_y &=& \sigma^m_{txty} \\
T^m_y P^m_x T^{m-1}_y P^{m}_x &=& \sigma^m_{typx} \\
(P^m_x)^2 &=& \sigma^m_{px} \\
T^m_x P^m_x T^m_x P^m_x &=& \sigma^m_{txpx} \\
({\cal T}^m)^2 &=& \sigma^m_T \\
{\cal T}^m T_x^m &=& \sigma^m_{Ttx} T_x^m {\cal T}^m  \\
{\cal T}^m T_y^m &=&  \sigma^m_{Tty} T_y^m {\cal T}^m  \\
{\cal T}^m P_x^m &=& \sigma^m_{Tpx} P_x^m {\cal T}^m \text{,}
\end{eqnarray}
where the $\sigma^m$'s take values in $\zz$.  All the $\sigma^m$'s are invariant under projective transformations of the generators, so
we tentatively conclude that $[\omega_m]_{\zz} \in H^2(G', \zz) = (\zz)^8$.  To be sure this is correct, we need to show that each of the possible $2^8$ choices of the $\sigma^m$'s can actually be realized by a corresponding factor set.  It is enough to give a set of projective representations whose cohomology classes generate $H^2(G', \zz)$; this is done in Table~\ref{tab:pmT-genset}.

\begin{table}
\begin{tabular}{|c|c|c|c|c|c|}
\hline
Rep. number & $T_x$ & $T_y$ & $P_x$ & $U_T$ & $\sigma$'s that are $-1$ \\
\hline
1 & $\sigma^x$ & $1$ & $i \sigma^y$ & $1$  & $\sigma_{px}, \sigma_{typx}$ \\
\hline
2 & $\sigma^x$ & $1$ & $\sigma^z$ & $1$  & $\sigma_{txpx}$ \\
\hline
3 & $\sigma^x$ & $\sigma^z$ & $1$ & $1$  & $\sigma_{txty}$ \\
\hline
4 & $1$ & $\sigma^x$ & $\sigma^z$ & $1$  & $\sigma_{typx}$ \\
\hline
5 & $1$ & $1$ & $1$ & $i \sigma^y$  & $\sigma_{T}$ \\
\hline
6 & $\sigma^x$ & $1$ & $1$ & $\sigma^z$  & $\sigma_{Ttx}$ \\
\hline
7 & $1$ & $\sigma^x$ & $1$ & $\sigma^z$  & $\sigma_{Tty}$ \\
\hline
8 & $1$ & $1$ & $\sigma^x$ & $\sigma^z$  & $\sigma_{Tpx}$ \\
\hline
\end{tabular}
\caption{Set of 8 projective representations whose cohomology classes generate $H^2(G', \zz) = \zz^8$, where $G' = pm \times \zz^T$.  Note that the cohomology classes of the first four representations listed generate $H^2(pm, \zz) = \zz^4$.
The first column numbers the representations, 1 through 8.  The middle four columns specify generators of the group in the corresponding representation (time reversal is ${\cal T} = U_T K$, where $K$ is complex conjugation).  All representations in the table are two-dimensional.  Generators  are specified in terms of the Pauli matrices $\sigma^{x,y,z}$. The last column lists those $\sigma$'s that are equal to $-1$ for the corresponding representation.}
\label{tab:pmT-genset}
\end{table}

Next, we need to compute $H^2_t(G', {\rm U}(1))$, noting that $t(P_x) = -1$ and $t(T_x) = t(T_y) = t({\cal T}) = 1$. Time reversal acts trivially on the ${\rm U}(1)$ coefficients because ${\cal T}$ is anti-unitary and ${\cal T} \star \Omega = \Omega^{2n - 1}$; these two effects cancel out so that $t({\cal T}) = 1$.  We start by specifying
\begin{eqnarray}
T^t_x T^t_y T^{t-1}_x T^{t-1}_y &=& \alpha_{txty}  \\
T^t_y P^t_x T^{t-1}_y P^{t}_x &=& \alpha_{typx} \\
(P^t_x)^2 &=& \alpha_{px}  \\
T^t_x P^t_x T^t_x P^t_x &=& \alpha_{txpx} \\
({\cal T}^t)^2 &=& \alpha_T \\
{\cal T}^t T^t_x &=&  \alpha_{Ttx}  T^t_x {\cal T}^t\\
{\cal T}^t T^t_y &=&  \alpha_{Tty} T^t_y {\cal T}^t \\
{\cal T}^t P^t_x &=&  \alpha_{Tpx} P^t_x {\cal T}^t \text{,}
\end{eqnarray}
where the $\alpha$'s take values in ${\rm U}(1)$.

Following the analysis of the case of $pm$ symmetry without time reversal (Sec.~\ref{subsec:pm}), we adjust the phase of $T^t_y$ to set $\alpha_{typx} = 1$ (this does not affect $\alpha_{Tty}$), and we can restrict $\alpha_{px}, \alpha_{txpx} \in \zz$.  Next, we can set $\alpha_T = 1$ by adjusting the phase of ${\cal T}^t$.  Making this adjustment modifies $\alpha_{Tpx} \to \alpha^{-1}_T \alpha_{Tpx} \equiv \alpha'_{Tpx}$, without changing other parameters. While this can be absorbed as a redefinition of $\alpha_{Tpx}$, we will keep track of it explicitly, as this is important to work out the map $\rho_2$.  Next, we can conjugate the last three relations by ${\cal T}$, which gives $\alpha_{Ttx}, \alpha_{Tty}, \alpha'_{Tpx} \in \zz$.  Therefore
\begin{eqnarray}
T^t_x T^t_y T^{t-1}_x T^{t-1}_y &=& \alpha_{txty}  \\
T^t_y P^t_x T^{t-1}_y P^{t}_x &=& 1 \\
(P^t_x)^2 &=& \alpha_{px} \in \zz   \\
T^t_x P^t_x T^t_x P^t_x &=& \alpha_{txpx} \in \zz \\
({\cal T}^t)^2 &=& 1 \\
{\cal T}^t T^t_x  &=&  [\alpha_{Ttx} \in \zz] T^t_x {\cal T}^t \\
{\cal T}^t T^t_y &=&  [\alpha_{Tty} \in \zz] T^t_y {\cal T}^t \\
{\cal T}^t P^t_x  &=& [\alpha'_{Tpx}  \in \zz] P^t_x {\cal T}^t \text{.}
\end{eqnarray}
Note that we have not used the freedom to adjust phases of $T_x$ or $P_x$.  However, adjusting these phases has no effect on the $\alpha$'s.  This suggests the result
\begin{equation}
H^2_t(G', {\rm U}(1)) = {\rm U}(1) \times (\zz)^5 \text{,}
\end{equation}
with $[\omega]_{{\rm U}(1)} \in H^2_t(G', {\rm U}(1))$ parametrized by $[\omega]_{{\rm U}(1)} = (\alpha_{txty}, \alpha_{px}, \alpha_{txpx}, \alpha_{Ttx}, \alpha_{Tty}, \alpha'_{Tpx})$, with the first entry a ${\rm U}(1)$ phase and the last five $\zz$ phases.

To confirm this result, we need to show that each element is actually realized by some $t$-twisted ${\rm U}(1)$ factor set.  We introduce two-component field operators $v_r$ as for $G = {\rm U}(1) \times pm$ symmetry in Sec.~\ref{subsec:pm}.  We choose $T_x$, $T_y$ and $P_x$ to act on the $v_r$ as in Eqs.~(\ref{eqn:pm-txaction}-\ref{eqn:pm-pxaction}), and ${\cal T}$ acts by
\begin{equation}
{\cal T} v_r {\cal T}^{-1} = g_T v^\dagger_{r} \text{.}
\end{equation}
Here, $g_T$ is a $2 \times 2$ unitary matrix satisfying $g_T^2 = 1$, so that ${\cal T}^2 = 1$ acting on $v_r$.
We find six families of representations, whose factor sets form a generating set for $H^2_t(G', {\rm U}(1)) = {\rm U}(1) \times (\zz)^5$:
\begin{enumerate}
\item  $g_{tx} = g_{ty} = g_{px} = g_T = 1$ gives a continuous family of representations with $[\omega]_{{\rm U}(1)} = (\alpha_{txty},1,1,1,1,1)$.

\item  $\alpha_{txty} = 1$, $g_{ty} = i$, $g_{px} = i \sigma^y$, $g_{tx} = \sigma^z$, $g_T = 1$ is a representation with $[\omega]_{{\rm U}(1)} = (1,-1,1,1,1,1)$.  

\item $\alpha_{txty} = 1$, $g_{ty} = 1$, $g_{px} = \sigma^x$, $g_{tx} = \sigma^z$, $g_T = 1$ is a representation with $[\omega]_{{\rm U}(1)} = (1,1,-1,1,1,1)$.

\item $\alpha_{txty} = g_{ty} = g_{px} = 1$, $g_{tx} = \sigma^z$, $g_T = \sigma^x$ is a representation with $[\omega]_{{\rm U}(1)} = (1,1,1,-1,1,1)$.

\item $\alpha_{txty} = g_{tx} = g_{px} = 1$, $g_{ty} = \sigma^z$, $g_T = \sigma^x$ is a representation with $[\omega]_{{\rm U}(1)} = (1,1,1,1,-1,1)$.

\item $\alpha_{txty} = g_{tx} = g_{ty} = 1$, $g_{px} = \sigma^z$, $g_T = \sigma^x$ is a representation with $[\omega]_{{\rm U}(1)} = (1,1,1,1,1,-1)$.
\end{enumerate}

As in Sec.~\ref{subsec:pm}, the above analysis allows us to immediately determine the map $\rho_2 : H^2(G', \zz) \to H^2_t(G', {\rm U}(1))$, and we have
\begin{eqnarray}
&& (\alpha_{txty}, \alpha_{px}, \alpha_{txpx}, \alpha_{Ttx}, \alpha_{Tty}, \alpha'_{Tpx}) = \rho_2( [\omega_m]_{\zz} ) \nonumber \\ 
&=& ( \sigma^m_{txty}, \sigma^m_{px}, \sigma^m_{txpx}, \sigma^m_{Ttx}, \sigma^m_{Tty}, \sigma^m_{T} \sigma^m_{Tpx} ) \text{.}
\end{eqnarray}
This implies that, anomaly-negative fractionalization patterns are those with $\sigma^m_{px} =  \sigma^m_{txpx} = \sigma^m_{Ttx} =  \sigma^m_{Tty} =  \sigma^m_{T} \sigma^m_{Tpx} = 1$.  The group ${\cal N}$ of anomaly-negative vison fractionalization classes is thus ${\cal N} = (\zz)^3$.  The disjoint sets of SPT phases distinguished by the anomaly test are labeled by elements of ${\cal S} = H^2(G', \zz) / {\cal N} = (\zz)^5$.

\subsection{$G = {\rm U}(1) \times p4mm$}
\label{subsec:p4mm}

\begin{figure}
\includegraphics[width=0.7\columnwidth]{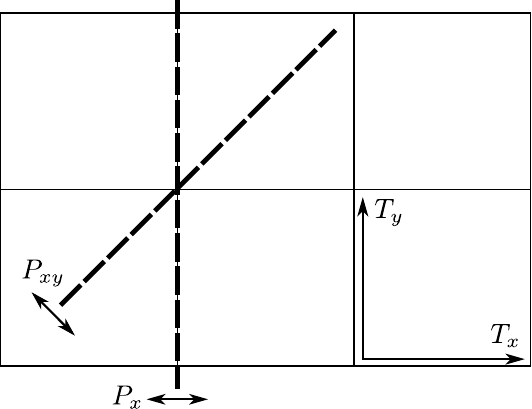}
\caption{Illustration of the operations generating the $d=2$ space group $p4mm$, the symmetry group of the square lattice.  $T_x$ and $T_y$ are translations by one lattice constant along the $x$- and $y$-axes, respectively. The vertical dashed line is the axis for the reflection $P_x$, and the diagonal dashed line is the axis for the reflection $P_{xy}$.} 
\label{fig:p4mm-operations}
\end{figure} 

The group $p4mm$ is the space group symmetry of the square lattice.  We choose generators $T_x, T_y, T_x^{-1}, T_y^{-1}, P_x$ and $P_{xy}$.  These operations are illustrated in Fig.~\ref{fig:p4mm-operations}, and obey the relations
\begin{eqnarray}
T_x T_y T^{-1}_x T^{-1}_y &=& 1 \\
T_y P_x T^{-1}_y P_x &=& 1 \\
T_y &=& P_{xy} T_x P_{xy} \\
P_x^2 &=& 1 \\
T_x P_x T_x P_x &=& 1 \\
P_{xy}^2 &=& 1 \\
(P_x P_{xy})^4 &=& 1 \text{.}
\end{eqnarray}

The $m$ particle  symmetry fractionalization is specified by
\begin{eqnarray}
T_x^m T_y^m T^{m-1}_x T^{m-1}_y &=& \sigma^m_{txty} \\
T_y^m P_x^m T^{m-1}_y P_x^m &=& \sigma^m_{typx} \\
T_y^m &=& P_{xy}^m T_x^m P_{xy}^m \label{eqn:p4mm-m-ty-relation} \\
(P_x^m)^2 &=& \sigma^m_{px} \\
T_x^m P_x^m T_x^m P_x^m &=& \sigma^m_{txpx} \\
(P_{xy}^m)^2 &=& \sigma^m_{pxy} \\
(P_x^m P_{xy}^m)^4 &=& \sigma^m_{pxpxy} \text{,}
\end{eqnarray}
where the $\sigma^m$'s take values in $\zz$.
The relation Eq.~(\ref{eqn:p4mm-m-ty-relation}) has no $\sigma^m$ parameter, 
as this can be removed by adjusting the phase $T^m_y \to - T^m_y$.  The $m$ particle fractionalization classes form the group $H^2(G', \zz) = (\zz)^6$; a generating set of projective representations verifying this result is exhibited in Appendix~A of Ref.~\onlinecite{essin13}.

To compute the $H^2_t(G', {\rm U}(1))$ cohomology, we begin by specifying the relations
\begin{eqnarray}
T_x^t T_y^t T^{t-1}_x T^{t-1}_y &=& \alpha_{txty} \\
T_y^t P_x^t T^{t-1}_y P_x^t &=& \alpha_{typx} \\
T_y^t &=& P_{xy}^t T_x^t P_{xy}^t \label{eqn:p4mm-t-ty-relation}  \\
(P_x^t)^2 &=& \alpha_{px} \label{eqn:p4mm-t-px-relation} \\
T_x^t P_x^t T_x^t P_x^t &=& \label{eqn:p4mm-t-txpx-relation} \alpha_{txpx} \\
(P_{xy}^t)^2 &=& \alpha_{pxy} \label{eqn:p4mm-t-pxy-relation} \\
(P_x^t P_{xy}^t)^4 &=& \alpha_{pxpxy} \text{,}
\end{eqnarray}
where the $\alpha$'s take values in ${\rm U}(1)$.
We note that $t(T_x) = t(T_y) = 1$, while $t(P_x) = t(P_{xy}) = -1$.

First, we adjust the phase of $T^t_y$ to set $\alpha_{typx} \to 1$.  In order to leave Eq.~(\ref{eqn:p4mm-t-ty-relation}) unchanged, we must also correspondingly adjust the phase of $T^t_x$.  Next, we adjust the phase of $P^t_x$ to set $\alpha_{pxpxy} \to 1$, which does not affect the other relations.  Finally, conjugating Eq.~(\ref{eqn:p4mm-t-px-relation}) by $P^t_x$, Eq.~(\ref{eqn:p4mm-t-txpx-relation}) by $T^t_x P^t_x$, and Eq.~(\ref{eqn:p4mm-t-pxy-relation}) by $P^t_{xy}$, we have $\alpha_{px}, \alpha_{txpx}, \alpha_{pxy} \in \zz$.  The relations thus take the form
\begin{eqnarray}
T_x^t T_y^t T^{t-1}_x T^{t-1}_y &=& \alpha_{txty} \label{eqn:p4mm-t-txty-relation2} \\
T_y^t P_x^t T^{t-1}_y P_x^t &=& 1 \\
T_y^t &=& P_{xy}^t T_x^t P_{xy}^t  \label{eqn:p4mm-t-ty-relation2}  \\
(P_x^t)^2 &=& \alpha_{px} \in \zz \\
T_x^t P_x^t T_x^t P_x^t &=& \alpha_{txpx} \in \zz \\
(P_{xy}^t)^2 &=& \alpha_{pxy} \in \zz \\
(P_x^t P_{xy}^t)^4 &=& 1  \text{.}
\end{eqnarray}
This suggests that $H^2_t(G', {\rm U}(1) ) = {\rm U}(1) \times (\zz)^3$, with $[\omega]_{{\rm U}(1)} \in H^2_t(G', {\rm U}(1) )$ parametrized by $[\omega]_{{\rm U}(1)} = (\alpha_{txty}, \alpha_{px}, \alpha_{txpx}, \alpha_{pxy})$.

To verify this, we proceed as in the case $G = {\rm U}(1) \times pm$ in Sec.~\ref{subsec:pm}, and introduce two-component field operators $v_r$, with $r$ labeling the sites of the square lattice.  The generators act on the field operators by
\begin{eqnarray}
T_x v_r T^{-1}_x &=&  (\alpha_{txty})^{r_y/2} g_{tx} v_{r + \hat{x}}  \label{eqn:p4mm-txaction} \\
P_x v_r P^{-1}_x &=& g_{px} v^\dagger_{P_x r}  \label{eqn:p4mm-pxaction} \\
P_{xy} v_r P^{-1}_{xy} &=& g_{pxy} v^\dagger_{P_{xy} r }  \label{eqn:p4mm-pxyaction} \text{,}
\end{eqnarray}
where $\alpha_{txty} \in {\rm U}(1)$, $P_x r = (-x,y)$, $P_{xy} r = (y,x)$, and $g_{tx}, g_{px}, g_{pxy}$ are $2 \times 2$ unitary matrices.  The action of $T_y$ follows from Eq.~(\ref{eqn:p4mm-t-ty-relation2}) and is
\begin{equation}
T_y v_r T^{-1}_y = (\alpha_{txty})^{-r_x/2} g_{ty} v_{r + \hat{y} } \text{,} \label{eqn:p4mm-tyaction}
\end{equation}
where $g_{ty} = g_{pxy} g^*_{tx} g^*_{pxy}$.

The following families of projective representations form a generating set for $H^2_t(G', {\rm U}(1) )$:
\begin{enumerate}
\item $g_{tx} = g_{px} = g_{pxy} = 1$ gives a continuous family of representations with $[\omega]_{{\rm U}(1)} = (\alpha_{txty},1,1,1)$. 

\item $\alpha_{txty} = g_{pxy} = 1$, $g_{tx} = \sigma^z$, $g_{px} = i \sigma^y$ is a projective representation with $[\omega]_{{\rm U}(1)} = (1,-1,1,1)$.

\item $\alpha_{txty} = g_{pxy} = 1$, $g_{tx} = i \sigma^z$, $g_{px} = \sigma^x$ is a projective representation with $[\omega]_{{\rm U}(1)} = (1,1,-1,1)$.

\item $\alpha_{txty} = g_{tx} = g_{px} = 1$, $g_{pxy} = i \sigma^y$ is a projective representation with $[\omega]_{{\rm U}(1)} = (1,1,1,-1)$.
\end{enumerate}

Finally, the map $\rho_2$ is given by
\begin{eqnarray}
&& (\alpha_{txty}, \alpha_{px}, \alpha_{txpx}, \alpha_{pxy}) = \rho_2( [\omega_m]_{\zz} ) \nonumber \\
&=& (\sigma^m_{txty}, \sigma^m_{px}, \sigma^m_{txpx}, \sigma^m_{pxy} ) \text{.}
\end{eqnarray}
Therefore, the anomaly-negative fractionalization patterns are those with $\sigma^m_{px} = \sigma^m_{txpx} = \sigma^m_{pxy} = 1$.  The group ${\cal N}$ of anomaly-negative vison fractionalization classes is  ${\cal N} = (\zz)^3$.   The disjoint sets of SPT phases distinguished by the anomaly test are labeled by elements of ${\cal S} = H^2(G', \zz) / {\cal N} = (\zz)^3$.

We remark that in this case, all the anomalous fractionalization patterns we find can be understood in terms of the symmetry ${\rm U}(1) \times \zz^P$, by choosing different $\zz^P$ subgroups of $p4mm$.

\subsection{$G = {(\rm U}(1) \rtimes \zz^T) \times p4mm$}
\label{subsec:p4mm-z2T}

This is closely related to the case $G = {\rm U}(1) \times p4mm$, but now with time reversal symmetry added.  The $\zz^T$ time reversal forms a semidirect product with ${\rm U}(1)$.  The generators are as in Appendix~\ref{subsec:p4mm}, with the addition of the time reversal operation ${\cal T}$, and we have the relations
\begin{eqnarray}
T_x T_y T^{-1}_x T^{-1}_y &=& 1 \\
T_y P_x T^{-1}_y P_x &=& 1 \\
T_y &=& P_{xy} T_x P_{xy} \\
P_x^2 &=& 1 \\
T_x P_x T_x P_x &=& 1 \\
P_{xy}^2 &=& 1 \\
(P_x P_{xy})^4 &=& 1  \\
{\cal T}^2 &=& 1 \\
{\cal T} T_x &=& T_x {\cal T} \\
{\cal T} P_x &=& P_x {\cal T} \\
{\cal T} P_{xy} &=& P_{xy} {\cal T} \text{.}
\end{eqnarray}
The $m$ particle  symmetry fractionalization is specified by
\begin{eqnarray}
T_x^m T_y^m T^{m-1}_x T^{m-1}_y &=& \sigma^m_{txty} \\
T_y^m P_x^m T^{m-1}_y P_x^m &=& \sigma^m_{typx} \\
T_y^m &=& P_{xy}^m T_x^m P_{xy}^m  \\
(P_x^m)^2 &=& \sigma^m_{px} \\
T_x^m P_x^m T_x^m P_x^m &=& \sigma^m_{txpx} \\
(P_{xy}^m)^2 &=& \sigma^m_{pxy} \\
(P_x^m P_{xy}^m)^4 &=& \sigma^m_{pxpxy} \\
({\cal T}^m)^2 &=& \sigma^m_T \\
{\cal T}^m T_x^m &=&  \sigma^m_{Ttx} T_x^m {\cal T}^m \\
{\cal T}^m P_x^m &=&  \sigma^m_{Tpx} P_x^m {\cal T}^m \\
{\cal T}^m P_{xy}^m &=&  \sigma^m_{Tpxy} P_{xy}^m {\cal T}^m \text{,}
\end{eqnarray}
where the $\sigma^m$'s take values in $\zz$.
The $m$ particle fractionalization classes form the group $H^2(G', \zz) = (\zz)^{10}$; a generating set of projective representations verifying this result is exhibited in Appendix~A of Ref.~\onlinecite{essin13}.

To compute the $H^2_t(G', {\rm U}(1))$ cohomology, we begin by specifying the relations
\begin{eqnarray}
T_x^t T_y^t T^{t-1}_x T^{t-1}_y &=& \alpha_{txty} \\
T_y^t P_x^t T^{t-1}_y P_x^t &=& \alpha_{typx} \\
T_y^t &=& P_{xy}^t T_x^t P_{xy}^t \label{eqn:p4mmT-t-ty-relation}  \\
(P_x^t)^2 &=& \alpha_{px} \label{eqn:p4mmT-t-px-relation} \\
T_x^t P_x^t T_x^t P_x^t &=& \label{eqn:p4mmT-t-txpx-relation} \alpha_{txpx} \\
(P_{xy}^t)^2 &=& \alpha_{pxy} \label{eqn:p4mmT-t-pxy-relation} \\
(P_x^t P_{xy}^t)^4 &=& \alpha_{pxpxy} \\
({\cal T}^t)^2 &=& \alpha_T \\
{\cal T}^t T_x^t &=&  \alpha_{Ttx} T_x^t {\cal T}^t \\
{\cal T}^t P_x^t &=&  \alpha_{Tpx} P_x^t {\cal T}^t \\
{\cal T}^t P_{xy}^t &=&  \alpha_{Tpxy} P_{xy}^t {\cal T}^t \text{,}
\end{eqnarray}
where the $\alpha$'s take values in ${\rm U}(1)$.
Here, $t({\cal T}) = 1$, and $t$ is specified for the other generators in Appendix~\ref{subsec:p4mm}.

Proceeding first as in Appendix~\ref{subsec:p4mm}, we adjust the phase of $T^t_y$ to set $\alpha_{typx} \to 1$.  In order to leave Eq.~(\ref{eqn:p4mmT-t-ty-relation}) unchanged, we must also correspondingly adjust the phase of $T^t_x$.  Next, we adjust the phase of $P^t_x$ to set $\alpha_{pxpxy} \to 1$, which does not affect the other relations.  We also adjust the phase of ${\cal T}^t$ to set $\alpha_T \to 1$.  This modifies $\alpha_{Tpx} \to \alpha'_{Tpx} = \alpha^{-1}_T \alpha_{Tpx}$ and $\alpha_{Tpxy} \to \alpha'_{Tpxy} = \alpha^{-1}_T \alpha_{Tpxy}$. Conjugating Eq.~(\ref{eqn:p4mmT-t-px-relation}) by $P^t_x$, Eq.~(\ref{eqn:p4mmT-t-txpx-relation}) by $T^t_x P^t_x$, and Eq.~(\ref{eqn:p4mmT-t-pxy-relation}) by $P^t_{xy}$, we have $\alpha_{px}, \alpha_{txpx}, \alpha_{pxy} \in \zz$.  Finally, conjugating the last three relations by ${\cal T}^t$ gives $\alpha_{Ttx}, \alpha'_{Tpx}, \alpha'_{Tpxy} \in \zz$.

The relations thus take the form
\begin{eqnarray}
T_x^t T_y^t T^{t-1}_x T^{t-1}_y &=& \alpha_{txty} \label{eqn:p4mmT-t-txty-relation2} \\
T_y^t P_x^t T^{t-1}_y P_x^t &=& 1 \\
T_y^t &=& P_{xy}^t T_x^t P_{xy}^t \label{eqn:p4mmT-t-ty-relation2}  \\
(P_x^t)^2 &=& \alpha_{px} \in \zz \label{eqn:p4mmT-t-px-relation2} \\
T_x^t P_x^t T_x^t P_x^t &=& \alpha_{txpx} \in \zz \label{eqn:p4mmT-t-txpx-relation2}  \\
(P_{xy}^t)^2 &=& \alpha_{pxy} \in \zz \label{eqn:p4mmT-t-pxy-relation2} \\
(P_x^t P_{xy}^t)^4 &=& 1 \\
({\cal T}^t)^2 &=& 1 \\
{\cal T}^t T_x^t &=&  [ \alpha_{Ttx}  \in \zz ]T_x^t {\cal T}^t \\
{\cal T}^t P_x^t &=&  [ \alpha'_{Tpx} \in \zz] P_x^t {\cal T}^t \\
{\cal T}^t P_{xy}^t &=&  [ \alpha'_{Tpxy} \in \zz] P_{xy}^t {\cal T}^t \text{.}
\end{eqnarray}
This suggests that $H^2_t(G', {\rm U}(1) ) = {\rm U}(1) \times (\zz)^6$, with $[\omega]_{{\rm U}(1)} \in H^2_t(G', {\rm U}(1) )$ parametrized by $[\omega]_{{\rm U}(1)} = (\alpha_{txty}, \alpha_{px}, \alpha_{txpx}, \alpha_{pxy},\alpha_{Ttx},\alpha'_{Tpx},\alpha'_{Tpxy})$.

To verify this, we introduce field operators $v_r$ as in Appendix~\ref{subsec:p4mm}, for the case of ${\rm U}(1) \times p4mm$ symmetry.  The action of $T_x$, $P_x$, $P_{xy}$ and $T_y$ is given by Eqs.~(\ref{eqn:p4mm-txaction}-\ref{eqn:p4mm-tyaction}).  Time reversal acts by
\begin{equation}
{\cal T} v_r {\cal T}^{-1} = g_T v^\dagger_r \text{,}
\end{equation}
where $g_T$ is a $2 \times 2$ unitary matrix, satisfying $g_T^2 =1$ so that $\alpha_T = 1$.

The following families of projective representations form a generating set for $H^2_t(G', {\rm U}(1) )$:
\begin{enumerate}
\item $g_{tx} = g_{px} = g_{pxy} = g_T = 1$ gives a continuous family of representations with $[\omega]_{{\rm U}(1)} = (\alpha_{txty},1,1,1,1,1,1)$. 

\item $\alpha_{txty} = g_{pxy} = g_T= 1$, $g_{tx} = \sigma^z$, $g_{px} = i \sigma^y$ is a projective representation with $[\omega]_{{\rm U}(1)} = (1,-1,1,1,1,1,1)$.

\item $\alpha_{txty} = g_{pxy} = g_T = 1$, $g_{tx} = i \sigma^z$, $g_{px} = \sigma^x$ is a projective representation with $[\omega]_{{\rm U}(1)} = (1,1,-1,1,1,1,1)$.

\item $\alpha_{txty} = g_{tx} = g_{px} =  g_T = 1$, $g_{pxy} = i \sigma^y$ is a projective representation with $[\omega]_{{\rm U}(1)} = (1,1,1,-1,1,1,1)$.

\item $\alpha_{txty} = g_{px} = g_{pxy} = 1$, $g_{tx} = \sigma^z$, $g_T = \sigma^x$ is a projective representation with $[\omega]_{{\rm U}(1)} = (1,1,1,1,-1,1,1)$.

\item $\alpha_{txty} = g_{tx} = g_{pxy} = 1$, $g_{px} = \sigma^z$, $g_T = \sigma^x$ is a projective representation with $[\omega]_{{\rm U}(1)} = (1,1,1,1,1,-1,1)$.

\item $\alpha_{txty} = g_{tx} = g_{px} = 1$, $g_{pxy} = \sigma^z$, $g_T = \sigma^x$ is a projective representation with $[\omega]_{{\rm U}(1)} = (1,1,1,1,1,1,-1)$.
\end{enumerate}

Finally, the map $\rho_2$ is given by
\begin{eqnarray}
&& (\alpha_{txty}, \alpha_{px}, \alpha_{txpx}, \alpha_{pxy},\alpha_{Ttx},\alpha'_{Tpx},\alpha'_{Tpxy}) = \rho_2( \{ \sigma^m \text{'s} \} ) \nonumber \\
&=& (\sigma^m_{txty}, \sigma^m_{px}, \sigma^m_{txpx}, \sigma^m_{pxy},\sigma^m_{Ttx}, \sigma^m_T \sigma^m_{Tpx}, \sigma^m_T \sigma^m_{Tpxy} ) \text{.}
\end{eqnarray}
Therefore, the anomaly-negative fractionalization patterns are those with $\sigma^m_{px} = \sigma^m_{txpx} = \sigma^m_{pxy} = \sigma^m_{Ttx} = \sigma^m_T \sigma^m_{Tpx} = \sigma^m_T \sigma^m_{Tpxy} = 1$.  The group ${\cal N}$ of anomaly-negative vison fractionalization classes is ${\cal N} = (\zz)^4$.   The disjoint sets of SPT phases distinguished by the anomaly test are labeled by elements of ${\cal S} = H^2(G', \zz) / {\cal N} = (\zz)^6$.

\bibliography{btci}

\begin{thebibliography}{72}%
\makeatletter
\providecommand \@ifxundefined [1]{%
 \@ifx{#1\undefined}
}%
\providecommand \@ifnum [1]{%
 \ifnum #1\expandafter \@firstoftwo
 \else \expandafter \@secondoftwo
 \fi
}%
\providecommand \@ifx [1]{%
 \ifx #1\expandafter \@firstoftwo
 \else \expandafter \@secondoftwo
 \fi
}%
\providecommand \natexlab [1]{#1}%
\providecommand \enquote  [1]{``#1''}%
\providecommand \bibnamefont  [1]{#1}%
\providecommand \bibfnamefont [1]{#1}%
\providecommand \citenamefont [1]{#1}%
\providecommand \href@noop [0]{\@secondoftwo}%
\providecommand \href [0]{\begingroup \@sanitize@url \@href}%
\providecommand \@href[1]{\@@startlink{#1}\@@href}%
\providecommand \@@href[1]{\endgroup#1\@@endlink}%
\providecommand \@sanitize@url [0]{\catcode `\\12\catcode `\$12\catcode
  `\&12\catcode `\#12\catcode `\^12\catcode `\_12\catcode `\%12\relax}%
\providecommand \@@startlink[1]{}%
\providecommand \@@endlink[0]{}%
\providecommand \url  [0]{\begingroup\@sanitize@url \@url }%
\providecommand \@url [1]{\endgroup\@href {#1}{\urlprefix }}%
\providecommand \urlprefix  [0]{URL }%
\providecommand \Eprint [0]{\href }%
\providecommand \doibase [0]{http://dx.doi.org/}%
\providecommand \selectlanguage [0]{\@gobble}%
\providecommand \bibinfo  [0]{\@secondoftwo}%
\providecommand \bibfield  [0]{\@secondoftwo}%
\providecommand \translation [1]{[#1]}%
\providecommand \BibitemOpen [0]{}%
\providecommand \bibitemStop [0]{}%
\providecommand \bibitemNoStop [0]{.\EOS\space}%
\providecommand \EOS [0]{\spacefactor3000\relax}%
\providecommand \BibitemShut  [1]{\csname bibitem#1\endcsname}%
\let\auto@bib@innerbib\@empty
\bibitem [{\citenamefont {Kane}\ and\ \citenamefont
  {Mele}(2005{\natexlab{a}})}]{kane05a}%
  \BibitemOpen
  \bibfield  {author} {\bibinfo {author} {\bibfnamefont {C.~L.}\ \bibnamefont
  {Kane}}\ and\ \bibinfo {author} {\bibfnamefont {E.~J.}\ \bibnamefont
  {Mele}},\ }\href {\doibase 10.1103/PhysRevLett.95.226801} {\bibfield
  {journal} {\bibinfo  {journal} {Phys. Rev. Lett.}\ }\textbf {\bibinfo
  {volume} {95}},\ \bibinfo {pages} {226801} (\bibinfo {year}
  {2005}{\natexlab{a}})}\BibitemShut {NoStop}%
\bibitem [{\citenamefont {Kane}\ and\ \citenamefont
  {Mele}(2005{\natexlab{b}})}]{kane05b}%
  \BibitemOpen
  \bibfield  {author} {\bibinfo {author} {\bibfnamefont {C.~L.}\ \bibnamefont
  {Kane}}\ and\ \bibinfo {author} {\bibfnamefont {E.~J.}\ \bibnamefont
  {Mele}},\ }\href {\doibase 10.1103/PhysRevLett.95.146802} {\bibfield
  {journal} {\bibinfo  {journal} {Phys. Rev. Lett.}\ }\textbf {\bibinfo
  {volume} {95}},\ \bibinfo {pages} {146802} (\bibinfo {year}
  {2005}{\natexlab{b}})}\BibitemShut {NoStop}%
\bibitem [{\citenamefont {Bernevig}\ \emph {et~al.}(2006)\citenamefont
  {Bernevig}, \citenamefont {Hughes},\ and\ \citenamefont
  {Zhang}}]{bernevig06}%
  \BibitemOpen
  \bibfield  {author} {\bibinfo {author} {\bibfnamefont {B.~A.}\ \bibnamefont
  {Bernevig}}, \bibinfo {author} {\bibfnamefont {T.~L.}\ \bibnamefont
  {Hughes}}, \ and\ \bibinfo {author} {\bibfnamefont {S.-C.}\ \bibnamefont
  {Zhang}},\ }\href {\doibase 10.1126/science.1133734} {\bibfield  {journal}
  {\bibinfo  {journal} {Science}\ }\textbf {\bibinfo {volume} {314}},\ \bibinfo
  {pages} {1757} (\bibinfo {year} {2006})}\BibitemShut {NoStop}%
\bibitem [{\citenamefont {Moore}\ and\ \citenamefont
  {Balents}(2007)}]{moore07}%
  \BibitemOpen
  \bibfield  {author} {\bibinfo {author} {\bibfnamefont {J.~E.}\ \bibnamefont
  {Moore}}\ and\ \bibinfo {author} {\bibfnamefont {L.}~\bibnamefont
  {Balents}},\ }\href {\doibase 10.1103/PhysRevB.75.121306} {\bibfield
  {journal} {\bibinfo  {journal} {Phys. Rev. B}\ }\textbf {\bibinfo {volume}
  {75}},\ \bibinfo {pages} {121306} (\bibinfo {year} {2007})}\BibitemShut
  {NoStop}%
\bibitem [{\citenamefont {Fu}\ \emph {et~al.}(2007)\citenamefont {Fu},
  \citenamefont {Kane},\ and\ \citenamefont {Mele}}]{fu07}%
  \BibitemOpen
  \bibfield  {author} {\bibinfo {author} {\bibfnamefont {L.}~\bibnamefont
  {Fu}}, \bibinfo {author} {\bibfnamefont {C.~L.}\ \bibnamefont {Kane}}, \ and\
  \bibinfo {author} {\bibfnamefont {E.~J.}\ \bibnamefont {Mele}},\ }\href
  {\doibase 10.1103/PhysRevLett.98.106803} {\bibfield  {journal} {\bibinfo
  {journal} {Phys. Rev. Lett.}\ }\textbf {\bibinfo {volume} {98}},\ \bibinfo
  {pages} {106803} (\bibinfo {year} {2007})}\BibitemShut {NoStop}%
\bibitem [{\citenamefont {Roy}(2009)}]{roy09}%
  \BibitemOpen
  \bibfield  {author} {\bibinfo {author} {\bibfnamefont {R.}~\bibnamefont
  {Roy}},\ }\href {\doibase 10.1103/PhysRevB.79.195322} {\bibfield  {journal}
  {\bibinfo  {journal} {Phys. Rev. B}\ }\textbf {\bibinfo {volume} {79}},\
  \bibinfo {pages} {195322} (\bibinfo {year} {2009})}\BibitemShut {NoStop}%
\bibitem [{\citenamefont {K{\"{o}}nig}\ \emph {et~al.}(2007)\citenamefont
  {K{\"{o}}nig}, \citenamefont {Wiedmann}, \citenamefont {Br{\"{u}}ne},
  \citenamefont {Roth}, \citenamefont {Buhmann}, \citenamefont {Molenkamp},
  \citenamefont {Qi},\ and\ \citenamefont {Zhang}}]{konig07}%
  \BibitemOpen
  \bibfield  {author} {\bibinfo {author} {\bibfnamefont {M.}~\bibnamefont
  {K{\"{o}}nig}}, \bibinfo {author} {\bibfnamefont {S.}~\bibnamefont
  {Wiedmann}}, \bibinfo {author} {\bibfnamefont {C.}~\bibnamefont
  {Br{\"{u}}ne}}, \bibinfo {author} {\bibfnamefont {A.}~\bibnamefont {Roth}},
  \bibinfo {author} {\bibfnamefont {H.}~\bibnamefont {Buhmann}}, \bibinfo
  {author} {\bibfnamefont {L.~W.}\ \bibnamefont {Molenkamp}}, \bibinfo {author}
  {\bibfnamefont {X.-L.}\ \bibnamefont {Qi}}, \ and\ \bibinfo {author}
  {\bibfnamefont {S.-C.}\ \bibnamefont {Zhang}},\ }\href {\doibase
  10.1126/science.1148047} {\bibfield  {journal} {\bibinfo  {journal}
  {Science}\ }\textbf {\bibinfo {volume} {318}},\ \bibinfo {pages} {766}
  (\bibinfo {year} {2007})}\BibitemShut {NoStop}%
\bibitem [{\citenamefont {Hsieh}\ \emph {et~al.}(2008)\citenamefont {Hsieh},
  \citenamefont {Qian}, \citenamefont {Wray}, \citenamefont {Xia},
  \citenamefont {Hor}, \citenamefont {Cava},\ and\ \citenamefont
  {Hasan}}]{hsieh08}%
  \BibitemOpen
  \bibfield  {author} {\bibinfo {author} {\bibfnamefont {D.}~\bibnamefont
  {Hsieh}}, \bibinfo {author} {\bibfnamefont {D.}~\bibnamefont {Qian}},
  \bibinfo {author} {\bibfnamefont {L.}~\bibnamefont {Wray}}, \bibinfo {author}
  {\bibfnamefont {Y.}~\bibnamefont {Xia}}, \bibinfo {author} {\bibfnamefont
  {Y.~S.}\ \bibnamefont {Hor}}, \bibinfo {author} {\bibfnamefont {R.~J.}\
  \bibnamefont {Cava}}, \ and\ \bibinfo {author} {\bibfnamefont {M.~Z.}\
  \bibnamefont {Hasan}},\ }\href {\doibase 10.1038/nature06843} {\bibfield
  {journal} {\bibinfo  {journal} {Nature}\ }\textbf {\bibinfo {volume} {452}},\
  \bibinfo {pages} {970} (\bibinfo {year} {2008})}\BibitemShut {NoStop}%
\bibitem [{\citenamefont {Kitaev}(2009)}]{kitaev09}%
  \BibitemOpen
  \bibfield  {author} {\bibinfo {author} {\bibfnamefont {A.}~\bibnamefont
  {Kitaev}},\ }\href {http://arxiv.org/abs/0901.2686} {\enquote {\bibinfo
  {title} {Periodic table for topological insulators and superconductors},}\ }
  (\bibinfo {year} {2009}),\ \Eprint {http://arxiv.org/abs/arXiv:0901.2686}
  {arXiv:0901.2686} \BibitemShut {NoStop}%
\bibitem [{\citenamefont {Ryu}\ \emph {et~al.}(2010)\citenamefont {Ryu},
  \citenamefont {Schnyder}, \citenamefont {Furusaki},\ and\ \citenamefont
  {Ludwig}}]{ryu10}%
  \BibitemOpen
  \bibfield  {author} {\bibinfo {author} {\bibfnamefont {S.}~\bibnamefont
  {Ryu}}, \bibinfo {author} {\bibfnamefont {A.~P.}\ \bibnamefont {Schnyder}},
  \bibinfo {author} {\bibfnamefont {A.}~\bibnamefont {Furusaki}}, \ and\
  \bibinfo {author} {\bibfnamefont {A.~W.~W.}\ \bibnamefont {Ludwig}},\ }\href
  {http://stacks.iop.org/1367-2630/12/i=6/a=065010} {\bibfield  {journal}
  {\bibinfo  {journal} {New Journal of Physics}\ }\textbf {\bibinfo {volume}
  {12}},\ \bibinfo {pages} {065010} (\bibinfo {year} {2010})}\BibitemShut
  {NoStop}%
\bibitem [{\citenamefont {Pollmann}\ \emph {et~al.}(2010)\citenamefont
  {Pollmann}, \citenamefont {Turner}, \citenamefont {Berg},\ and\ \citenamefont
  {Oshikawa}}]{pollmann10}%
  \BibitemOpen
  \bibfield  {author} {\bibinfo {author} {\bibfnamefont {F.}~\bibnamefont
  {Pollmann}}, \bibinfo {author} {\bibfnamefont {A.~M.}\ \bibnamefont
  {Turner}}, \bibinfo {author} {\bibfnamefont {E.}~\bibnamefont {Berg}}, \ and\
  \bibinfo {author} {\bibfnamefont {M.}~\bibnamefont {Oshikawa}},\ }\href
  {\doibase 10.1103/PhysRevB.81.064439} {\bibfield  {journal} {\bibinfo
  {journal} {Phys. Rev. B}\ }\textbf {\bibinfo {volume} {81}},\ \bibinfo
  {pages} {064439} (\bibinfo {year} {2010})}\BibitemShut {NoStop}%
\bibitem [{\citenamefont {Fidkowski}\ and\ \citenamefont
  {Kitaev}(2011)}]{fidkowski11}%
  \BibitemOpen
  \bibfield  {author} {\bibinfo {author} {\bibfnamefont {L.}~\bibnamefont
  {Fidkowski}}\ and\ \bibinfo {author} {\bibfnamefont {A.}~\bibnamefont
  {Kitaev}},\ }\href {\doibase 10.1103/PhysRevB.83.075103} {\bibfield
  {journal} {\bibinfo  {journal} {Phys. Rev. B}\ }\textbf {\bibinfo {volume}
  {83}},\ \bibinfo {pages} {075103} (\bibinfo {year} {2011})}\BibitemShut
  {NoStop}%
\bibitem [{\citenamefont {Turner}\ \emph {et~al.}(2011)\citenamefont {Turner},
  \citenamefont {Pollmann},\ and\ \citenamefont {Berg}}]{turner11}%
  \BibitemOpen
  \bibfield  {author} {\bibinfo {author} {\bibfnamefont {A.~M.}\ \bibnamefont
  {Turner}}, \bibinfo {author} {\bibfnamefont {F.}~\bibnamefont {Pollmann}}, \
  and\ \bibinfo {author} {\bibfnamefont {E.}~\bibnamefont {Berg}},\ }\href
  {\doibase 10.1103/PhysRevB.83.075102} {\bibfield  {journal} {\bibinfo
  {journal} {Phys. Rev. B}\ }\textbf {\bibinfo {volume} {83}},\ \bibinfo
  {pages} {075102} (\bibinfo {year} {2011})}\BibitemShut {NoStop}%
\bibitem [{\citenamefont {Chen}\ \emph {et~al.}(2011)\citenamefont {Chen},
  \citenamefont {Gu},\ and\ \citenamefont {Wen}}]{chen11a}%
  \BibitemOpen
  \bibfield  {author} {\bibinfo {author} {\bibfnamefont {X.}~\bibnamefont
  {Chen}}, \bibinfo {author} {\bibfnamefont {Z.-C.}\ \bibnamefont {Gu}}, \ and\
  \bibinfo {author} {\bibfnamefont {X.-G.}\ \bibnamefont {Wen}},\ }\href
  {\doibase 10.1103/PhysRevB.83.035107} {\bibfield  {journal} {\bibinfo
  {journal} {Phys. Rev. B}\ }\textbf {\bibinfo {volume} {83}},\ \bibinfo
  {pages} {035107} (\bibinfo {year} {2011})}\BibitemShut {NoStop}%
\bibitem [{\citenamefont {Schuch}\ \emph {et~al.}(2011)\citenamefont {Schuch},
  \citenamefont {P{\'e}rez-Garc{\'\i}a},\ and\ \citenamefont
  {Cirac}}]{schuch11}%
  \BibitemOpen
  \bibfield  {author} {\bibinfo {author} {\bibfnamefont {N.}~\bibnamefont
  {Schuch}}, \bibinfo {author} {\bibfnamefont {D.}~\bibnamefont
  {P{\'e}rez-Garc{\'\i}a}}, \ and\ \bibinfo {author} {\bibfnamefont
  {I.}~\bibnamefont {Cirac}},\ }\href {\doibase 10.1103/PhysRevB.84.165139}
  {\bibfield  {journal} {\bibinfo  {journal} {Phys. Rev. B}\ }\textbf {\bibinfo
  {volume} {84}},\ \bibinfo {pages} {165139} (\bibinfo {year}
  {2011})}\BibitemShut {NoStop}%
\bibitem [{\citenamefont {Chen}\ \emph {et~al.}(2013)\citenamefont {Chen},
  \citenamefont {Gu}, \citenamefont {Liu},\ and\ \citenamefont {Wen}}]{chen13}%
  \BibitemOpen
  \bibfield  {author} {\bibinfo {author} {\bibfnamefont {X.}~\bibnamefont
  {Chen}}, \bibinfo {author} {\bibfnamefont {Z.-C.}\ \bibnamefont {Gu}},
  \bibinfo {author} {\bibfnamefont {Z.-X.}\ \bibnamefont {Liu}}, \ and\
  \bibinfo {author} {\bibfnamefont {X.-G.}\ \bibnamefont {Wen}},\ }\href
  {\doibase 10.1103/PhysRevB.87.155114} {\bibfield  {journal} {\bibinfo
  {journal} {Phys. Rev. B}\ }\textbf {\bibinfo {volume} {87}},\ \bibinfo
  {pages} {155114} (\bibinfo {year} {2013})}\BibitemShut {NoStop}%
\bibitem [{\citenamefont {Wen}(2002)}]{wen02}%
  \BibitemOpen
  \bibfield  {author} {\bibinfo {author} {\bibfnamefont {X.-G.}\ \bibnamefont
  {Wen}},\ }\href {\doibase 10.1103/PhysRevB.65.165113} {\bibfield  {journal}
  {\bibinfo  {journal} {Phys. Rev. B}\ }\textbf {\bibinfo {volume} {65}},\
  \bibinfo {pages} {165113} (\bibinfo {year} {2002})},\ \Eprint
  {http://arxiv.org/abs/cond-mat/0107071} {arXiv:cond-mat/0107071} \BibitemShut
  {NoStop}%
\bibitem [{\citenamefont {Fu}(2011)}]{lfu11}%
  \BibitemOpen
  \bibfield  {author} {\bibinfo {author} {\bibfnamefont {L.}~\bibnamefont
  {Fu}},\ }\href {\doibase 10.1103/PhysRevLett.106.106802} {\bibfield
  {journal} {\bibinfo  {journal} {Phys. Rev. Lett.}\ }\textbf {\bibinfo
  {volume} {106}},\ \bibinfo {pages} {106802} (\bibinfo {year}
  {2011})}\BibitemShut {NoStop}%
\bibitem [{\citenamefont {Essin}\ and\ \citenamefont
  {Hermele}(2013)}]{essin13}%
  \BibitemOpen
  \bibfield  {author} {\bibinfo {author} {\bibfnamefont {A.~M.}\ \bibnamefont
  {Essin}}\ and\ \bibinfo {author} {\bibfnamefont {M.}~\bibnamefont
  {Hermele}},\ }\href {\doibase 10.1103/PhysRevB.87.104406} {\bibfield
  {journal} {\bibinfo  {journal} {Phys. Rev. B}\ }\textbf {\bibinfo {volume}
  {87}},\ \bibinfo {pages} {104406} (\bibinfo {year} {2013})},\ \Eprint
  {http://arxiv.org/abs/1212.0593} {arXiv:1212.0593} \BibitemShut {NoStop}%
\bibitem [{\citenamefont {Slager}\ \emph {et~al.}(2013)\citenamefont {Slager},
  \citenamefont {Mesaros}, \citenamefont {Juricic},\ and\ \citenamefont
  {Zaanen}}]{slager13}%
  \BibitemOpen
  \bibfield  {author} {\bibinfo {author} {\bibfnamefont {R.-J.}\ \bibnamefont
  {Slager}}, \bibinfo {author} {\bibfnamefont {A.}~\bibnamefont {Mesaros}},
  \bibinfo {author} {\bibfnamefont {V.}~\bibnamefont {Juricic}}, \ and\
  \bibinfo {author} {\bibfnamefont {J.}~\bibnamefont {Zaanen}},\ }\href
  {\doibase doi:10.1038/nphys2513} {\bibfield  {journal} {\bibinfo  {journal}
  {Nature Physics}\ }\textbf {\bibinfo {volume} {9}},\ \bibinfo {pages} {98}
  (\bibinfo {year} {2013})}\BibitemShut {NoStop}%
\bibitem [{\citenamefont {Lu}\ \emph {et~al.}(2014)\citenamefont {Lu},
  \citenamefont {Cho},\ and\ \citenamefont {Vishwanath}}]{ymlu14}%
  \BibitemOpen
  \bibfield  {author} {\bibinfo {author} {\bibfnamefont {Y.-M.}\ \bibnamefont
  {Lu}}, \bibinfo {author} {\bibfnamefont {G.~Y.}\ \bibnamefont {Cho}}, \ and\
  \bibinfo {author} {\bibfnamefont {A.}~\bibnamefont {Vishwanath}},\
  }\href@noop {} {\enquote {\bibinfo {title} {Unification of bosonic and
  fermionic theories of spin liquids on the kagome lattice},}\ } (\bibinfo
  {year} {2014}),\ \Eprint {http://arxiv.org/abs/arXiv:1403.0575}
  {arXiv:1403.0575} \BibitemShut {NoStop}%
\bibitem [{\citenamefont {Hsieh}\ \emph {et~al.}(2014)\citenamefont {Hsieh},
  \citenamefont {Sule}, \citenamefont {Cho}, \citenamefont {Ryu},\ and\
  \citenamefont {Leigh}}]{chsieh14}%
  \BibitemOpen
  \bibfield  {author} {\bibinfo {author} {\bibfnamefont {C.-T.}\ \bibnamefont
  {Hsieh}}, \bibinfo {author} {\bibfnamefont {O.~M.}\ \bibnamefont {Sule}},
  \bibinfo {author} {\bibfnamefont {G.~Y.}\ \bibnamefont {Cho}}, \bibinfo
  {author} {\bibfnamefont {S.}~\bibnamefont {Ryu}}, \ and\ \bibinfo {author}
  {\bibfnamefont {R.~G.}\ \bibnamefont {Leigh}},\ }\href@noop {} {\enquote
  {\bibinfo {title} {Symmetry-protected topological phases, generalized
  laughlin argument and orientifolds},}\ } (\bibinfo {year} {2014}),\ \Eprint
  {http://arxiv.org/abs/arXiv:1403.6902} {arXiv:1403.6902} \BibitemShut
  {NoStop}%
\bibitem [{\citenamefont {You}\ and\ \citenamefont {Xu}(2014)}]{you14}%
  \BibitemOpen
  \bibfield  {author} {\bibinfo {author} {\bibfnamefont {Y.-Z.}\ \bibnamefont
  {You}}\ and\ \bibinfo {author} {\bibfnamefont {C.}~\bibnamefont {Xu}},\
  }\href {\doibase 10.1103/PhysRevB.90.245120} {\bibfield  {journal} {\bibinfo
  {journal} {Phys. Rev. B}\ }\textbf {\bibinfo {volume} {90}},\ \bibinfo
  {pages} {245120} (\bibinfo {year} {2014})}\BibitemShut {NoStop}%
\bibitem [{\citenamefont {Qi}\ and\ \citenamefont
  {Fu}(2015{\natexlab{a}})}]{yqi15a}%
  \BibitemOpen
  \bibfield  {author} {\bibinfo {author} {\bibfnamefont {Y.}~\bibnamefont
  {Qi}}\ and\ \bibinfo {author} {\bibfnamefont {L.}~\bibnamefont {Fu}},\ }\href
  {\doibase 10.1103/PhysRevB.91.100401} {\bibfield  {journal} {\bibinfo
  {journal} {Phys. Rev. B}\ }\textbf {\bibinfo {volume} {91}},\ \bibinfo
  {pages} {100401} (\bibinfo {year} {2015}{\natexlab{a}})}\BibitemShut
  {NoStop}%
\bibitem [{\citenamefont {Zaletel}\ \emph {et~al.}(2015)\citenamefont
  {Zaletel}, \citenamefont {Lu},\ and\ \citenamefont {Vishwanath}}]{zaletel15}%
  \BibitemOpen
  \bibfield  {author} {\bibinfo {author} {\bibfnamefont {M.}~\bibnamefont
  {Zaletel}}, \bibinfo {author} {\bibfnamefont {Y.-M.}\ \bibnamefont {Lu}}, \
  and\ \bibinfo {author} {\bibfnamefont {A.}~\bibnamefont {Vishwanath}},\
  }\href@noop {} {\enquote {\bibinfo {title} {Measuring space-group symmetry
  fractionalization in z$_2$ spin liquids},}\ } (\bibinfo {year} {2015}),\
  \Eprint {http://arxiv.org/abs/arXiv:1501.01395} {arXiv:1501.01395}
  \BibitemShut {NoStop}%
\bibitem [{\citenamefont {Cho}\ \emph {et~al.}(2015)\citenamefont {Cho},
  \citenamefont {Hsieh}, \citenamefont {Morimoto},\ and\ \citenamefont
  {Ryu}}]{cho15}%
  \BibitemOpen
  \bibfield  {author} {\bibinfo {author} {\bibfnamefont {G.~Y.}\ \bibnamefont
  {Cho}}, \bibinfo {author} {\bibfnamefont {C.-T.}\ \bibnamefont {Hsieh}},
  \bibinfo {author} {\bibfnamefont {T.}~\bibnamefont {Morimoto}}, \ and\
  \bibinfo {author} {\bibfnamefont {S.}~\bibnamefont {Ryu}},\ }\href {\doibase
  10.1103/PhysRevB.91.195142} {\bibfield  {journal} {\bibinfo  {journal} {Phys.
  Rev. B}\ }\textbf {\bibinfo {volume} {91}},\ \bibinfo {pages} {195142}
  (\bibinfo {year} {2015})}\BibitemShut {NoStop}%
\bibitem [{\citenamefont {Isobe}\ and\ \citenamefont {Fu}(2015)}]{isobe15}%
  \BibitemOpen
  \bibfield  {author} {\bibinfo {author} {\bibfnamefont {H.}~\bibnamefont
  {Isobe}}\ and\ \bibinfo {author} {\bibfnamefont {L.}~\bibnamefont {Fu}},\
  }\href {\doibase 10.1103/PhysRevB.92.081304} {\bibfield  {journal} {\bibinfo
  {journal} {Phys. Rev. B}\ }\textbf {\bibinfo {volume} {92}},\ \bibinfo
  {pages} {081304} (\bibinfo {year} {2015})}\BibitemShut {NoStop}%
\bibitem [{\citenamefont {Qi}\ and\ \citenamefont
  {Fu}(2015{\natexlab{b}})}]{yqi15b}%
  \BibitemOpen
  \bibfield  {author} {\bibinfo {author} {\bibfnamefont {Y.}~\bibnamefont
  {Qi}}\ and\ \bibinfo {author} {\bibfnamefont {L.}~\bibnamefont {Fu}},\ }\href
  {\doibase 10.1103/PhysRevLett.115.236801} {\bibfield  {journal} {\bibinfo
  {journal} {Phys. Rev. Lett.}\ }\textbf {\bibinfo {volume} {115}},\ \bibinfo
  {pages} {236801} (\bibinfo {year} {2015}{\natexlab{b}})}\BibitemShut
  {NoStop}%
\bibitem [{\citenamefont {Ando}\ and\ \citenamefont
  {Fu}(2015)}]{ando15topological}%
  \BibitemOpen
  \bibfield  {author} {\bibinfo {author} {\bibfnamefont {Y.}~\bibnamefont
  {Ando}}\ and\ \bibinfo {author} {\bibfnamefont {L.}~\bibnamefont {Fu}},\
  }\href {\doibase 10.1146/annurev-conmatphys-031214-014501} {\bibfield
  {journal} {\bibinfo  {journal} {Annual Review of Condensed Matter Physics}\
  }\textbf {\bibinfo {volume} {6}},\ \bibinfo {pages} {361} (\bibinfo {year}
  {2015})}\BibitemShut {NoStop}%
\bibitem [{\citenamefont {Kitaev}(2006)}]{kitaev06}%
  \BibitemOpen
  \bibfield  {author} {\bibinfo {author} {\bibfnamefont {A.}~\bibnamefont
  {Kitaev}},\ }\href {\doibase 10.1016/j.aop.2005.10.005} {\bibfield  {journal}
  {\bibinfo  {journal} {Annals of Physics}\ }\textbf {\bibinfo {volume}
  {321}},\ \bibinfo {pages} {2 } (\bibinfo {year} {2006})}\BibitemShut
  {NoStop}%
\bibitem [{\citenamefont {Mesaros}\ and\ \citenamefont
  {Ran}(2013)}]{mesaros13}%
  \BibitemOpen
  \bibfield  {author} {\bibinfo {author} {\bibfnamefont {A.}~\bibnamefont
  {Mesaros}}\ and\ \bibinfo {author} {\bibfnamefont {Y.}~\bibnamefont {Ran}},\
  }\href {\doibase 10.1103/PhysRevB.87.155115} {\bibfield  {journal} {\bibinfo
  {journal} {Phys. Rev. B}\ }\textbf {\bibinfo {volume} {87}},\ \bibinfo
  {pages} {155115} (\bibinfo {year} {2013})},\ \Eprint
  {http://arxiv.org/abs/1212.0835} {arXiv:1212.0835} \BibitemShut {NoStop}%
\bibitem [{\citenamefont {Barkeshli}\ \emph {et~al.}(2014)\citenamefont
  {Barkeshli}, \citenamefont {Bonderson}, \citenamefont {Cheng},\ and\
  \citenamefont {Wang}}]{barkeshli14}%
  \BibitemOpen
  \bibfield  {author} {\bibinfo {author} {\bibfnamefont {M.}~\bibnamefont
  {Barkeshli}}, \bibinfo {author} {\bibfnamefont {P.}~\bibnamefont
  {Bonderson}}, \bibinfo {author} {\bibfnamefont {M.}~\bibnamefont {Cheng}}, \
  and\ \bibinfo {author} {\bibfnamefont {Z.}~\bibnamefont {Wang}},\ }\href@noop
  {} {\enquote {\bibinfo {title} {Symmetry, defects, and gauging of topological
  phases},}\ } (\bibinfo {year} {2014}),\ \Eprint
  {http://arxiv.org/abs/arXiv:1410.4540v2} {arXiv:1410.4540v2} \BibitemShut
  {NoStop}%
\bibitem [{\citenamefont {Fidkowski}\ \emph {et~al.}()\citenamefont
  {Fidkowski}, \citenamefont {Lindner},\ and\ \citenamefont
  {Kitaev}}]{lukaszunpub}%
  \BibitemOpen
  \bibfield  {author} {\bibinfo {author} {\bibfnamefont {L.}~\bibnamefont
  {Fidkowski}}, \bibinfo {author} {\bibfnamefont {N.}~\bibnamefont {Lindner}},
  \ and\ \bibinfo {author} {\bibfnamefont {A.}~\bibnamefont {Kitaev}},\
  }\href@noop {} {}\bibinfo {note} {{u}npublished}\BibitemShut {NoStop}%
\bibitem [{\citenamefont {Laughlin}(1983)}]{laughlin83}%
  \BibitemOpen
  \bibfield  {author} {\bibinfo {author} {\bibfnamefont {R.~B.}\ \bibnamefont
  {Laughlin}},\ }\href {\doibase 10.1103/PhysRevLett.50.1395} {\bibfield
  {journal} {\bibinfo  {journal} {Phys. Rev. Lett.}\ }\textbf {\bibinfo
  {volume} {50}},\ \bibinfo {pages} {1395} (\bibinfo {year}
  {1983})}\BibitemShut {NoStop}%
\bibitem [{\citenamefont {Vishwanath}\ and\ \citenamefont
  {Senthil}(2013)}]{vishwanath13}%
  \BibitemOpen
  \bibfield  {author} {\bibinfo {author} {\bibfnamefont {A.}~\bibnamefont
  {Vishwanath}}\ and\ \bibinfo {author} {\bibfnamefont {T.}~\bibnamefont
  {Senthil}},\ }\href {\doibase 10.1103/PhysRevX.3.011016} {\bibfield
  {journal} {\bibinfo  {journal} {Phys. Rev. X}\ }\textbf {\bibinfo {volume}
  {3}},\ \bibinfo {pages} {011016} (\bibinfo {year} {2013})}\BibitemShut
  {NoStop}%
\bibitem [{\citenamefont {Metlitski}\ \emph {et~al.}(2013)\citenamefont
  {Metlitski}, \citenamefont {Kane},\ and\ \citenamefont
  {Fisher}}]{metlitski13}%
  \BibitemOpen
  \bibfield  {author} {\bibinfo {author} {\bibfnamefont {M.~A.}\ \bibnamefont
  {Metlitski}}, \bibinfo {author} {\bibfnamefont {C.~L.}\ \bibnamefont {Kane}},
  \ and\ \bibinfo {author} {\bibfnamefont {M.~P.~A.}\ \bibnamefont {Fisher}},\
  }\href {\doibase 10.1103/PhysRevB.88.035131} {\bibfield  {journal} {\bibinfo
  {journal} {Phys. Rev. B}\ }\textbf {\bibinfo {volume} {88}},\ \bibinfo
  {pages} {035131} (\bibinfo {year} {2013})}\BibitemShut {NoStop}%
\bibitem [{\citenamefont {Wang}\ and\ \citenamefont {Senthil}(2013)}]{cwang13}%
  \BibitemOpen
  \bibfield  {author} {\bibinfo {author} {\bibfnamefont {C.}~\bibnamefont
  {Wang}}\ and\ \bibinfo {author} {\bibfnamefont {T.}~\bibnamefont {Senthil}},\
  }\href {\doibase 10.1103/PhysRevB.87.235122} {\bibfield  {journal} {\bibinfo
  {journal} {Phys. Rev. B}\ }\textbf {\bibinfo {volume} {87}},\ \bibinfo
  {pages} {235122} (\bibinfo {year} {2013})}\BibitemShut {NoStop}%
\bibitem [{\citenamefont {Chen}\ \emph {et~al.}(2014)\citenamefont {Chen},
  \citenamefont {Burnell}, \citenamefont {Vishwanath},\ and\ \citenamefont
  {Fidkowski}}]{chen14}%
  \BibitemOpen
  \bibfield  {author} {\bibinfo {author} {\bibfnamefont {X.}~\bibnamefont
  {Chen}}, \bibinfo {author} {\bibfnamefont {F.~J.}\ \bibnamefont {Burnell}},
  \bibinfo {author} {\bibfnamefont {A.}~\bibnamefont {Vishwanath}}, \ and\
  \bibinfo {author} {\bibfnamefont {L.}~\bibnamefont {Fidkowski}},\ }\href
  {http://arxiv.org/abs/1403.6491} {\enquote {\bibinfo {title} {Anomalous
  symmetry fractionalization and surface topological order},}\ } (\bibinfo
  {year} {2014}),\ \Eprint {http://arxiv.org/abs/arXiv:1403.6491}
  {arXiv:1403.6491} \BibitemShut {NoStop}%
\bibitem [{\citenamefont {Levin}\ and\ \citenamefont {Gu}(2012)}]{gu12}%
  \BibitemOpen
  \bibfield  {author} {\bibinfo {author} {\bibfnamefont {M.}~\bibnamefont
  {Levin}}\ and\ \bibinfo {author} {\bibfnamefont {Z.-C.}\ \bibnamefont {Gu}},\
  }\href {\doibase 10.1103/PhysRevB.86.115109} {\bibfield  {journal} {\bibinfo
  {journal} {Phys. Rev. B}\ }\textbf {\bibinfo {volume} {86}},\ \bibinfo
  {pages} {115109} (\bibinfo {year} {2012})}\BibitemShut {NoStop}%
\bibitem [{\citenamefont {Hermele}\ and\ \citenamefont {Chen}()}]{tcharge}%
  \BibitemOpen
  \bibfield  {author} {\bibinfo {author} {\bibfnamefont {M.}~\bibnamefont
  {Hermele}}\ and\ \bibinfo {author} {\bibfnamefont {X.}~\bibnamefont {Chen}},\
  }\href@noop {} {}\bibinfo {note} {{i}n preparation}\BibitemShut {NoStop}%
\bibitem [{\citenamefont {Kitaev}(2003)}]{kitaev03}%
  \BibitemOpen
  \bibfield  {author} {\bibinfo {author} {\bibfnamefont {A.~{\relax Yu}.}\
  \bibnamefont {Kitaev}},\ }\href {\doibase 10.1016/S0003-4916(02)00018-0}
  {\bibfield  {journal} {\bibinfo  {journal} {Ann. Phys.}\ }\textbf {\bibinfo
  {volume} {303}},\ \bibinfo {pages} {2} (\bibinfo {year} {2003})},\ \Eprint
  {http://arxiv.org/abs/quant-ph/9707021} {arXiv:quant-ph/9707021} \BibitemShut
  {NoStop}%
\bibitem [{\citenamefont {Read}\ and\ \citenamefont
  {Chakraborty}(1989)}]{chakraborty89}%
  \BibitemOpen
  \bibfield  {author} {\bibinfo {author} {\bibfnamefont {N.}~\bibnamefont
  {Read}}\ and\ \bibinfo {author} {\bibfnamefont {B.}~\bibnamefont
  {Chakraborty}},\ }\href {\doibase 10.1103/PhysRevB.40.7133} {\bibfield
  {journal} {\bibinfo  {journal} {Phys. Rev. B}\ }\textbf {\bibinfo {volume}
  {40}},\ \bibinfo {pages} {7133} (\bibinfo {year} {1989})}\BibitemShut
  {NoStop}%
\bibitem [{\citenamefont {Read}\ and\ \citenamefont {Sachdev}(1991)}]{read91}%
  \BibitemOpen
  \bibfield  {author} {\bibinfo {author} {\bibfnamefont {N.}~\bibnamefont
  {Read}}\ and\ \bibinfo {author} {\bibfnamefont {S.}~\bibnamefont {Sachdev}},\
  }\href {\doibase 10.1103/PhysRevLett.66.1773} {\bibfield  {journal} {\bibinfo
   {journal} {Phys. Rev. Lett.}\ }\textbf {\bibinfo {volume} {66}},\ \bibinfo
  {pages} {1773} (\bibinfo {year} {1991})}\BibitemShut {NoStop}%
\bibitem [{\citenamefont {Wen}(1991)}]{wen91}%
  \BibitemOpen
  \bibfield  {author} {\bibinfo {author} {\bibfnamefont {X.~G.}\ \bibnamefont
  {Wen}},\ }\href {\doibase 10.1103/PhysRevB.44.2664} {\bibfield  {journal}
  {\bibinfo  {journal} {Phys. Rev. B}\ }\textbf {\bibinfo {volume} {44}},\
  \bibinfo {pages} {2664} (\bibinfo {year} {1991})}\BibitemShut {NoStop}%
\bibitem [{\citenamefont {Balents}\ \emph {et~al.}(1999)\citenamefont
  {Balents}, \citenamefont {Fisher},\ and\ \citenamefont {Nayak}}]{balents99}%
  \BibitemOpen
  \bibfield  {author} {\bibinfo {author} {\bibfnamefont {L.}~\bibnamefont
  {Balents}}, \bibinfo {author} {\bibfnamefont {M.~P.~A.}\ \bibnamefont
  {Fisher}}, \ and\ \bibinfo {author} {\bibfnamefont {C.}~\bibnamefont
  {Nayak}},\ }\href {\doibase 10.1103/PhysRevB.60.1654} {\bibfield  {journal}
  {\bibinfo  {journal} {Phys. Rev. B}\ }\textbf {\bibinfo {volume} {60}},\
  \bibinfo {pages} {1654} (\bibinfo {year} {1999})},\ \Eprint
  {http://arxiv.org/abs/cond-mat/9811236} {arXiv:cond-mat/9811236} \BibitemShut
  {NoStop}%
\bibitem [{\citenamefont {Senthil}\ and\ \citenamefont
  {Fisher}(2000)}]{senthil00}%
  \BibitemOpen
  \bibfield  {author} {\bibinfo {author} {\bibfnamefont {T.}~\bibnamefont
  {Senthil}}\ and\ \bibinfo {author} {\bibfnamefont {M.~P.~A.}\ \bibnamefont
  {Fisher}},\ }\href {\doibase 10.1103/PhysRevB.62.7850} {\bibfield  {journal}
  {\bibinfo  {journal} {Phys. Rev. B}\ }\textbf {\bibinfo {volume} {62}},\
  \bibinfo {pages} {7850} (\bibinfo {year} {2000})},\ \Eprint
  {http://arxiv.org/abs/cond-mat/9910224} {arXiv:cond-mat/9910224} \BibitemShut
  {NoStop}%
\bibitem [{\citenamefont {Moessner}\ and\ \citenamefont
  {Sondhi}(2001)}]{moessner01a}%
  \BibitemOpen
  \bibfield  {author} {\bibinfo {author} {\bibfnamefont {R.}~\bibnamefont
  {Moessner}}\ and\ \bibinfo {author} {\bibfnamefont {S.~L.}\ \bibnamefont
  {Sondhi}},\ }\href {\doibase 10.1103/PhysRevLett.86.1881} {\bibfield
  {journal} {\bibinfo  {journal} {Phys. Rev. Lett.}\ }\textbf {\bibinfo
  {volume} {86}},\ \bibinfo {pages} {1881} (\bibinfo {year} {2001})},\ \Eprint
  {http://arxiv.org/abs/cond-mat/0007378} {arXiv:cond-mat/0007378} \BibitemShut
  {NoStop}%
\bibitem [{\citenamefont {Moessner}\ \emph {et~al.}(2001)\citenamefont
  {Moessner}, \citenamefont {Sondhi},\ and\ \citenamefont
  {Fradkin}}]{moessner01b}%
  \BibitemOpen
  \bibfield  {author} {\bibinfo {author} {\bibfnamefont {R.}~\bibnamefont
  {Moessner}}, \bibinfo {author} {\bibfnamefont {S.~L.}\ \bibnamefont
  {Sondhi}}, \ and\ \bibinfo {author} {\bibfnamefont {E.}~\bibnamefont
  {Fradkin}},\ }\href {\doibase 10.1103/PhysRevB.65.024504} {\bibfield
  {journal} {\bibinfo  {journal} {Phys. Rev. B}\ }\textbf {\bibinfo {volume}
  {65}},\ \bibinfo {pages} {024504} (\bibinfo {year} {2001})},\ \Eprint
  {http://arxiv.org/abs/cond-mat/0103396} {arXiv:cond-mat/0103396} \BibitemShut
  {NoStop}%
\bibitem [{\citenamefont {Balents}\ \emph {et~al.}(2002)\citenamefont
  {Balents}, \citenamefont {Fisher},\ and\ \citenamefont {Girvin}}]{balents02}%
  \BibitemOpen
  \bibfield  {author} {\bibinfo {author} {\bibfnamefont {L.}~\bibnamefont
  {Balents}}, \bibinfo {author} {\bibfnamefont {M.~P.~A.}\ \bibnamefont
  {Fisher}}, \ and\ \bibinfo {author} {\bibfnamefont {S.~M.}\ \bibnamefont
  {Girvin}},\ }\href {\doibase 10.1103/PhysRevB.65.224412} {\bibfield
  {journal} {\bibinfo  {journal} {Phys. Rev. B}\ }\textbf {\bibinfo {volume}
  {65}},\ \bibinfo {pages} {224412} (\bibinfo {year} {2002})},\ \Eprint
  {http://arxiv.org/abs/cond-mat/0110005} {arXiv:cond-mat/0110005} \BibitemShut
  {NoStop}%
\bibitem [{\citenamefont {Yan}\ \emph {et~al.}(2011)\citenamefont {Yan},
  \citenamefont {Huse},\ and\ \citenamefont {White}}]{yan11}%
  \BibitemOpen
  \bibfield  {author} {\bibinfo {author} {\bibfnamefont {S.}~\bibnamefont
  {Yan}}, \bibinfo {author} {\bibfnamefont {D.~A.}\ \bibnamefont {Huse}}, \
  and\ \bibinfo {author} {\bibfnamefont {S.~R.}\ \bibnamefont {White}},\ }\href
  {\doibase 10.1126/science.1201080} {\bibfield  {journal} {\bibinfo  {journal}
  {Science}\ }\textbf {\bibinfo {volume} {332}},\ \bibinfo {pages} {1173}
  (\bibinfo {year} {2011})},\ \Eprint {http://arxiv.org/abs/1011.6114}
  {arXiv:1011.6114} \BibitemShut {NoStop}%
\bibitem [{\citenamefont {Depenbrock}\ \emph {et~al.}(2012)\citenamefont
  {Depenbrock}, \citenamefont {McCulloch},\ and\ \citenamefont
  {Schollw\"ock}}]{depenbrock12}%
  \BibitemOpen
  \bibfield  {author} {\bibinfo {author} {\bibfnamefont {S.}~\bibnamefont
  {Depenbrock}}, \bibinfo {author} {\bibfnamefont {I.~P.}\ \bibnamefont
  {McCulloch}}, \ and\ \bibinfo {author} {\bibfnamefont {U.}~\bibnamefont
  {Schollw\"ock}},\ }\href {\doibase 10.1103/PhysRevLett.109.067201} {\bibfield
   {journal} {\bibinfo  {journal} {Phys. Rev. Lett.}\ }\textbf {\bibinfo
  {volume} {109}},\ \bibinfo {pages} {067201} (\bibinfo {year} {2012})},\
  \Eprint {http://arxiv.org/abs/1205.4858} {arXiv:1205.4858} \BibitemShut
  {NoStop}%
\bibitem [{\citenamefont {Jiang}\ \emph {et~al.}(2012)\citenamefont {Jiang},
  \citenamefont {Wang},\ and\ \citenamefont {Balents}}]{jiang12b}%
  \BibitemOpen
  \bibfield  {author} {\bibinfo {author} {\bibfnamefont {H.-C.}\ \bibnamefont
  {Jiang}}, \bibinfo {author} {\bibfnamefont {Z.}~\bibnamefont {Wang}}, \ and\
  \bibinfo {author} {\bibfnamefont {L.}~\bibnamefont {Balents}},\ }\href
  {\doibase 10.1038/nphys2465} {\bibfield  {journal} {\bibinfo  {journal} {Nat.
  Phys.}\ }\textbf {\bibinfo {volume} {8}},\ \bibinfo {pages} {902} (\bibinfo
  {year} {2012})},\ \Eprint {http://arxiv.org/abs/1205.4289} {arXiv:1205.4289}
  \BibitemShut {NoStop}%
\bibitem [{\citenamefont {Senthil}\ \emph
  {et~al.}(2004{\natexlab{a}})\citenamefont {Senthil}, \citenamefont
  {Vishwanath}, \citenamefont {Balents}, \citenamefont {Sachdev},\ and\
  \citenamefont {Fisher}}]{senthil04a}%
  \BibitemOpen
  \bibfield  {author} {\bibinfo {author} {\bibfnamefont {T.}~\bibnamefont
  {Senthil}}, \bibinfo {author} {\bibfnamefont {A.}~\bibnamefont {Vishwanath}},
  \bibinfo {author} {\bibfnamefont {L.}~\bibnamefont {Balents}}, \bibinfo
  {author} {\bibfnamefont {S.}~\bibnamefont {Sachdev}}, \ and\ \bibinfo
  {author} {\bibfnamefont {M.~P.~A.}\ \bibnamefont {Fisher}},\ }\href {\doibase
  10.1126/science.1091806} {\bibfield  {journal} {\bibinfo  {journal}
  {Science}\ }\textbf {\bibinfo {volume} {303}},\ \bibinfo {pages} {1490}
  (\bibinfo {year} {2004}{\natexlab{a}})}\BibitemShut {NoStop}%
\bibitem [{\citenamefont {Senthil}\ \emph
  {et~al.}(2004{\natexlab{b}})\citenamefont {Senthil}, \citenamefont {Balents},
  \citenamefont {Sachdev}, \citenamefont {Vishwanath},\ and\ \citenamefont
  {Fisher}}]{senthil04b}%
  \BibitemOpen
  \bibfield  {author} {\bibinfo {author} {\bibfnamefont {T.}~\bibnamefont
  {Senthil}}, \bibinfo {author} {\bibfnamefont {L.}~\bibnamefont {Balents}},
  \bibinfo {author} {\bibfnamefont {S.}~\bibnamefont {Sachdev}}, \bibinfo
  {author} {\bibfnamefont {A.}~\bibnamefont {Vishwanath}}, \ and\ \bibinfo
  {author} {\bibfnamefont {M.~P.~A.}\ \bibnamefont {Fisher}},\ }\href {\doibase
  10.1103/PhysRevB.70.144407} {\bibfield  {journal} {\bibinfo  {journal} {Phys.
  Rev. B}\ }\textbf {\bibinfo {volume} {70}},\ \bibinfo {pages} {144407}
  (\bibinfo {year} {2004}{\natexlab{b}})}\BibitemShut {NoStop}%
\bibitem [{\citenamefont {Balents}\ \emph {et~al.}(2005)\citenamefont
  {Balents}, \citenamefont {Bartosch}, \citenamefont {Burkov}, \citenamefont
  {Sachdev},\ and\ \citenamefont {Sengupta}}]{balents05}%
  \BibitemOpen
  \bibfield  {author} {\bibinfo {author} {\bibfnamefont {L.}~\bibnamefont
  {Balents}}, \bibinfo {author} {\bibfnamefont {L.}~\bibnamefont {Bartosch}},
  \bibinfo {author} {\bibfnamefont {A.}~\bibnamefont {Burkov}}, \bibinfo
  {author} {\bibfnamefont {S.}~\bibnamefont {Sachdev}}, \ and\ \bibinfo
  {author} {\bibfnamefont {K.}~\bibnamefont {Sengupta}},\ }\href {\doibase
  10.1103/PhysRevB.71.144508} {\bibfield  {journal} {\bibinfo  {journal} {Phys.
  Rev. B}\ }\textbf {\bibinfo {volume} {71}},\ \bibinfo {pages} {144508}
  (\bibinfo {year} {2005})}\BibitemShut {NoStop}%
\bibitem [{Note1()}]{Note1}%
  \BibitemOpen
  \bibinfo {note} {Because $\epsilon $ is a $e$-$m$ bound state, its
  fractionalization class is determined by that of $e$ and $m$.}\BibitemShut
  {Stop}%
\bibitem [{\citenamefont {Wang}()}]{zwangpc}%
  \BibitemOpen
  \bibfield  {author} {\bibinfo {author} {\bibfnamefont {Z.}~\bibnamefont
  {Wang}},\ }\href@noop {} {}\bibinfo {note} {{p}rivate
  communication}\BibitemShut {NoStop}%
\bibitem [{\citenamefont {Drinfeld}\ \emph {et~al.}(2010)\citenamefont
  {Drinfeld}, \citenamefont {Gelaki}, \citenamefont {Nikshych},\ and\
  \citenamefont {Ostrik}}]{drinfeld10}%
  \BibitemOpen
  \bibfield  {author} {\bibinfo {author} {\bibfnamefont {V.}~\bibnamefont
  {Drinfeld}}, \bibinfo {author} {\bibfnamefont {S.}~\bibnamefont {Gelaki}},
  \bibinfo {author} {\bibfnamefont {D.}~\bibnamefont {Nikshych}}, \ and\
  \bibinfo {author} {\bibfnamefont {V.}~\bibnamefont {Ostrik}},\ }\href
  {\doibase 10.1007/s00029-010-0017-z} {\bibfield  {journal} {\bibinfo
  {journal} {Selecta Mathematica}\ }\textbf {\bibinfo {volume} {16}},\ \bibinfo
  {pages} {1} (\bibinfo {year} {2010})}\BibitemShut {NoStop}%
\bibitem [{\citenamefont {Tarantino}\ \emph {et~al.}(2015)\citenamefont
  {Tarantino}, \citenamefont {Lindner},\ and\ \citenamefont
  {Fidkowski}}]{tarantino15}%
  \BibitemOpen
  \bibfield  {author} {\bibinfo {author} {\bibfnamefont {N.}~\bibnamefont
  {Tarantino}}, \bibinfo {author} {\bibfnamefont {N.}~\bibnamefont {Lindner}},
  \ and\ \bibinfo {author} {\bibfnamefont {L.}~\bibnamefont {Fidkowski}},\
  }\href@noop {} {\enquote {\bibinfo {title} {Symmetry fractionalization and
  twist defects},}\ } (\bibinfo {year} {2015}),\ \Eprint
  {http://arxiv.org/abs/arXiv:1506.06754} {arXiv:1506.06754} \BibitemShut
  {NoStop}%
\bibitem [{\citenamefont {Gu}\ and\ \citenamefont {Wen}(2009)}]{gu09}%
  \BibitemOpen
  \bibfield  {author} {\bibinfo {author} {\bibfnamefont {Z.-C.}\ \bibnamefont
  {Gu}}\ and\ \bibinfo {author} {\bibfnamefont {X.-G.}\ \bibnamefont {Wen}},\
  }\href {\doibase 10.1103/PhysRevB.80.155131} {\bibfield  {journal} {\bibinfo
  {journal} {Phys. Rev. B}\ }\textbf {\bibinfo {volume} {80}},\ \bibinfo
  {pages} {155131} (\bibinfo {year} {2009})}\BibitemShut {NoStop}%
\bibitem [{\citenamefont {Pollmann}\ \emph {et~al.}(2012)\citenamefont
  {Pollmann}, \citenamefont {Berg}, \citenamefont {Turner},\ and\ \citenamefont
  {Oshikawa}}]{pollmann12}%
  \BibitemOpen
  \bibfield  {author} {\bibinfo {author} {\bibfnamefont {F.}~\bibnamefont
  {Pollmann}}, \bibinfo {author} {\bibfnamefont {E.}~\bibnamefont {Berg}},
  \bibinfo {author} {\bibfnamefont {A.~M.}\ \bibnamefont {Turner}}, \ and\
  \bibinfo {author} {\bibfnamefont {M.}~\bibnamefont {Oshikawa}},\ }\href
  {\doibase 10.1103/PhysRevB.85.075125} {\bibfield  {journal} {\bibinfo
  {journal} {Phys. Rev. B}\ }\textbf {\bibinfo {volume} {85}},\ \bibinfo
  {pages} {075125} (\bibinfo {year} {2012})}\BibitemShut {NoStop}%
\bibitem [{Note2()}]{Note2}%
  \BibitemOpen
  \bibinfo {note} {This argument, and the corresponding argument for reflection
  symmetry, relies on the expectation that the effect of threading a flux or
  anyon on the $d=1$ SPT invariant does not depend on the MES one starts from
  before threading. It can be argued this expectation holds by first noting the
  circumference of the cylinder can be taken as large as desired, so that we
  can think of the state as two-dimensional, even though ultimately we view it
  as a $d=1$ system to compute the SPT invariant. Then two fluxes or anyons can
  be threaded along the cylinder in regions that are separated by a
  parametrically large distance, which corresponds to acting in these regions
  with string operators transporting the fluxes or anyons. For $d=1$ SPT
  invariants of internal symmetry (such as time reversal), making an
  entanglement cut across a string exposes an anyon (or flux) that transforms
  projectively. If we act with two well-separated strings and make an
  entanglement cut, the resulting states will involve a tensor product of two
  projective representations, which corresponds to adding SPT invariants in the
  usual way. For reflection symmetry, the $d=1$ SPT invariant is simply the
  reflection eigenvalue of the string operator \cite {zaletel15}, and these
  eigenvalues add when well-separated strings act on the same MES.}\BibitemShut
  {Stop}%
\bibitem [{\citenamefont {Song}\ \emph {et~al.}(2016)\citenamefont {Song},
  \citenamefont {Huang}, \citenamefont {Fu},\ and\ \citenamefont
  {Hermele}}]{hsong16}%
  \BibitemOpen
  \bibfield  {author} {\bibinfo {author} {\bibfnamefont {H.}~\bibnamefont
  {Song}}, \bibinfo {author} {\bibfnamefont {S.-J.}\ \bibnamefont {Huang}},
  \bibinfo {author} {\bibfnamefont {L.}~\bibnamefont {Fu}}, \ and\ \bibinfo
  {author} {\bibfnamefont {M.}~\bibnamefont {Hermele}},\ }\href
  {http://arxiv.org/abs/1604.08151} {\enquote {\bibinfo {title} {Topological
  phases protected by point group symmetry},}\ } (\bibinfo {year} {2016}),\
  \Eprint {http://arxiv.org/abs/arXiv:1604.08151} {arXiv:1604.08151}
  \BibitemShut {NoStop}%
\bibitem [{Note3()}]{Note3}%
  \BibitemOpen
  \bibinfo {note} {In detail, suppose that $F_i$ and $F_j$ are two
  fractionalization patterns corresponding to the same SPT phase, so that
  $\varphi (F_i) = \varphi (F_j)$. Therefore, $F_i = F_j \oplus F_k$, where
  $F_k \in \protect \operatorname {Ker} \varphi $. Since $\protect
  \operatorname {Ker} \varphi \subset {\protect \cal N}$, we have $F_k \in
  {\protect \cal N}$, and therefore $F_i$ and $F_j$ belong to the same coset of
  ${\protect \cal N}$. It follows that, if $F_i$ and $F_j$ belong to different
  cosets of ${\protect \cal N}$, they must correspond to different SPT
  phases.}\BibitemShut {Stop}%
\bibitem [{\citenamefont {Song}\ and\ \citenamefont {Hermele}(2015)}]{hsong15}%
  \BibitemOpen
  \bibfield  {author} {\bibinfo {author} {\bibfnamefont {H.}~\bibnamefont
  {Song}}\ and\ \bibinfo {author} {\bibfnamefont {M.}~\bibnamefont {Hermele}},\
  }\href {\doibase 10.1103/PhysRevB.91.014405} {\bibfield  {journal} {\bibinfo
  {journal} {Phys. Rev. B}\ }\textbf {\bibinfo {volume} {91}},\ \bibinfo
  {pages} {014405} (\bibinfo {year} {2015})}\BibitemShut {NoStop}%
\bibitem [{\citenamefont {Chen}\ and\ \citenamefont {Hermele}(2016)}]{3DSET}%
  \BibitemOpen
  \bibfield  {author} {\bibinfo {author} {\bibfnamefont {X.}~\bibnamefont
  {Chen}}\ and\ \bibinfo {author} {\bibfnamefont {M.}~\bibnamefont {Hermele}},\
  }\href {http://arxiv.org/abs/1602.00187} {\enquote {\bibinfo {title}
  {Symmetry fractionalization and anomaly detection in three-dimensional
  topological phases},}\ } (\bibinfo {year} {2016}),\ \Eprint
  {http://arxiv.org/abs/arXiv:1602.00187} {arXiv:1602.00187} \BibitemShut
  {NoStop}%
\bibitem [{\citenamefont {Wang}\ \emph {et~al.}(2014)\citenamefont {Wang},
  \citenamefont {Potter},\ and\ \citenamefont {Senthil}}]{cwang14}%
  \BibitemOpen
  \bibfield  {author} {\bibinfo {author} {\bibfnamefont {C.}~\bibnamefont
  {Wang}}, \bibinfo {author} {\bibfnamefont {A.~C.}\ \bibnamefont {Potter}}, \
  and\ \bibinfo {author} {\bibfnamefont {T.}~\bibnamefont {Senthil}},\ }\href
  {\doibase 10.1126/science.1243326} {\bibfield  {journal} {\bibinfo  {journal}
  {Science}\ }\textbf {\bibinfo {volume} {343}},\ \bibinfo {pages} {629}
  (\bibinfo {year} {2014})},\ \Eprint
  {http://arxiv.org/abs/http://www.sciencemag.org/content/343/6171/629.full.pdf}
  {http://www.sciencemag.org/content/343/6171/629.full.pdf} \BibitemShut
  {NoStop}%
\bibitem [{\citenamefont {Qi}\ \emph {et~al.}(2015)\citenamefont {Qi},
  \citenamefont {Cheng},\ and\ \citenamefont {Fang}}]{yqi15c}%
  \BibitemOpen
  \bibfield  {author} {\bibinfo {author} {\bibfnamefont {Y.}~\bibnamefont
  {Qi}}, \bibinfo {author} {\bibfnamefont {M.}~\bibnamefont {Cheng}}, \ and\
  \bibinfo {author} {\bibfnamefont {C.}~\bibnamefont {Fang}},\ }\href@noop {}
  {\enquote {\bibinfo {title} {Symmetry fractionalization of visons in $z_2$
  spin liquids},}\ } (\bibinfo {year} {2015}),\ \Eprint
  {http://arxiv.org/abs/arXiv:1509.02927} {arXiv:1509.02927} \BibitemShut
  {NoStop}%
\bibitem [{\citenamefont {Zaletel}\ \emph {et~al.}(2016)\citenamefont
  {Zaletel}, \citenamefont {Zhu}, \citenamefont {Lu}, \citenamefont
  {Vishwanath},\ and\ \citenamefont {White}}]{zaletel16}%
  \BibitemOpen
  \bibfield  {author} {\bibinfo {author} {\bibfnamefont {M.~P.}\ \bibnamefont
  {Zaletel}}, \bibinfo {author} {\bibfnamefont {Z.}~\bibnamefont {Zhu}},
  \bibinfo {author} {\bibfnamefont {Y.-M.}\ \bibnamefont {Lu}}, \bibinfo
  {author} {\bibfnamefont {A.}~\bibnamefont {Vishwanath}}, \ and\ \bibinfo
  {author} {\bibfnamefont {S.~R.}\ \bibnamefont {White}},\ }\href {\doibase
  10.1103/PhysRevLett.116.197203} {\bibfield  {journal} {\bibinfo  {journal}
  {Phys. Rev. Lett.}\ }\textbf {\bibinfo {volume} {116}},\ \bibinfo {pages}
  {197203} (\bibinfo {year} {2016})}\BibitemShut {NoStop}%
\bibitem [{\citenamefont {Cincio}\ and\ \citenamefont {Qi}(2015)}]{cincio15}%
  \BibitemOpen
  \bibfield  {author} {\bibinfo {author} {\bibfnamefont {L.}~\bibnamefont
  {Cincio}}\ and\ \bibinfo {author} {\bibfnamefont {Y.}~\bibnamefont {Qi}},\
  }\href {https://arxiv.org/abs/1511.02226} {\enquote {\bibinfo {title}
  {Classification and detection of symmetry fractionalization in chiral spin
  liquids},}\ } (\bibinfo {year} {2015}),\ \Eprint
  {http://arxiv.org/abs/arXiv:1511.02226} {arXiv:1511.02226} \BibitemShut
  {NoStop}%
\bibitem [{Note4()}]{Note4}%
  \BibitemOpen
  \bibinfo {note} {To give some details, note that the identity in $E$ is $1 =
  (\omega (1,1)^{-1},1)$. There is an injective homomorphism $i : \protect
  \mathbb {Z}_{2n} \to E$ defined by $i(a) = (a \omega (1,1)^{-1},1)$. To view
  $E$ as a $t$-twisted $\protect \mathbb {Z}_{2n}$ extension, we use the normal
  subgroup $i(\protect \mathbb {Z}_{2n}) \subset E$, which is isomorphic to
  $\protect \mathbb {Z}_{2n}$.}\BibitemShut {Stop}%
\bibitem [{\citenamefont {Hermele}(2014)}]{hermele14}%
  \BibitemOpen
  \bibfield  {author} {\bibinfo {author} {\bibfnamefont {M.}~\bibnamefont
  {Hermele}},\ }\href {\doibase 10.1103/PhysRevB.90.184418} {\bibfield
  {journal} {\bibinfo  {journal} {Phys. Rev. B}\ }\textbf {\bibinfo {volume}
  {90}},\ \bibinfo {pages} {184418} (\bibinfo {year} {2014})}\BibitemShut
  {NoStop}%
\end{thebibliography}%

\end{document}